\tikzstyle{box} = [rectangle, rounded corners, minimum height=0.8cm, text centered, draw=black, fill=blue!10]
\tikzstyle{arrow} = [thick,->,>=stealth]
\definecolor{mygray}{rgb}{0.95,0.95,0.95}
\definecolor{mygrayInline}{rgb}{0.95,0.95,0.95}
\def\eqref#1{equation~\ref{#1}}
\def\1{\bm{1}}
\DeclareMathAlphabet{\mathsfit}{\encodingdefault}{\sfdefault}{m}{sl}
\SetMathAlphabet{\mathsfit}{bold}{\encodingdefault}{\sfdefault}{bx}{n}
\DeclareMathOperator*{\argmax}{arg\,max}
\theoremstyle{plain}
\newtheorem{theorem}{Theorem}
\newtheorem{proposition}{Proposition}
\newtheorem{definition}{Definition}
\newtheorem{example}{Example}
\theoremstyle{remark}
\newtheorem{remark}{Remark}
\newcommand{\rebut}[1]{{#1}}
\newcommand{\RealNumbers}{\mathbb{R}}
\newcommand{\twoVec}[2]{\begin{pmatrix} #1 \\ #2 \end{pmatrix}}
\newcommand{\relu}[1]{{\sigma_{\operatorname{ReLU}}\left(#1\right)}}
\newcommand{\reluItself}[0]{{\sigma_{\operatorname{ReLU}}}}
\newcommand{\TransformerClass}[3]{{\operatorname{Tf}^{#1}_{#2,#3}}}
\newcommand{\ResMlpClass}[2]{{\operatorname{ResMlp}^{#1}_{#2}}}
\newcommand{\dmodel}{d_{\text{model}}}
\newcommand{\Pos}{\operatorname{Pos}}
\newcommand{\p}{p}
\newcommand{\GraphDepth}[1]{{\operatorname{Depth} (#1)}}
\newcommand{\Universe}{\mathcal{U}}
\newcommand{\LocalUniverse}{\mathcal{U}_0}
\newcommand{\Basespace}{B}
\newcommand{\IdentificationSpace}{\Psi}
\newcommand{\basepointZero}{0_{\Basespace}}
\newcommand{\basepointOne}{1_{\Basespace}}
\newcommand{\tyvar}{\mathcal{T}}
\newcommand{\typeUnderlyingSet}[1]{{\mathop{\text{Set}}\left(#1\right)}}
\newcommand{\typeEncodingDimension}[1]{d_{#1}}
\newcommand{\typeEncoding}[1]{\phi_{#1}}
\newcommand{\typeIdentification}[1]{\psi_{#1}}
\newcommand{\identityMap}[1]{\operatorname{Id}_{#1}}
\newcommand{\treeNodes}[1]{\operatorname{N} (#1)}
\newcommand{\codeline}[1]{{\cybertron{#1}}}
\newcommand{\cybertron}[1]{\Colorbox{mygrayInline}{\lstinline[language=Cybertron, basicstyle=\small]{#1}}}
\newcommand{\preAsts}[0]{\cybertron{pre_asts}\xspace}
\newcommand{\asts}[0]{\cybertron{asts}\xspace}
\newcommand{\dimToken}{d_{\text{Token}}}
\newcommand{\miniHusky}[1]{$\operatorname{MiniHusky}_{#1}$}
\newcommand{\miniHuskyAnnotated}[2]{$\operatorname{MiniHuskyAnnotated}_{#1, #2}$}
\newcommand{\fFcn}{{f_{\text{fcn}}}}
\newcommand{\fFfn}{f_{\text{ffn}}}
\newcommand{\fAttn}{f_{\text{attn}}}
\newcommand{\fTf}{f_{\text{tf}}}
\newcommand{\fResMlp}{f_{\text{resmlp}}}
\title{Transformers are Efficient Compilers, Provably}
\author{Xiyu Zhai\\
School of Computer Science and Engineering\\
University of Washington\\
\texttt{xiyuzhai@cs.washington.edu} \\
\And
Runlong Zhou\\
School of Computer Science and Engineering\\
University of Washington\\
\texttt{vectorzh@cs.washington.edu} 
\And
Liao Zhang\\
University of Innsbruck \\
Czech Technical University \\
\texttt{zhangliao714@gmail.com} 
\And
Simon S. Du\\
School of Computer Science and Engineering\\
University of Washington\\
\texttt{ssdu@cs.washington.edu}
}
\newenvironment{itemize*}%
{\begin{itemize}[leftmargin=*,topsep=0pt]%
		\setlength{\itemsep}{0pt}%
		\setlength{\parskip}{0pt}}%
{\end{itemize}}
\newenvironment{enumerate*}%
{\begin{enumerate}[leftmargin=*,topsep=0pt]%
		\setlength{\itemsep}{0pt}%
		\setlength{\parskip}{0pt}}%
	{\end{enumerate}}
\begin{document}

\maketitle

\newcommand{\pars}[1]{{\left(#1\right)}}
\newcommand{\abs}[1]{{\left|#1\right|}}
\newcommand{\set}[1]{{\left\{#1\right\}}}
\newcommand{\im}{\mathbbm{i}}

\begin{abstract}

	Transformer-based large language models (LLMs) have demonstrated surprisingly robust performance across a wide range of language-related tasks, including programming language understanding and generation. In this paper, we take the first steps towards a formal investigation of using transformers as compilers from an expressive power perspective. To this end, we introduce a representative programming language, \textbf{Mini-Husky}, which encapsulates key features of modern C-like languages. We show that if the input code sequence has a bounded depth in both the Abstract Syntax Tree (AST) and type inference (reasonable assumptions based on the \emph{ clean code principle}), then the number of parameters required by transformers depends only on the \emph{logarithm of the input sequence length} to handle compilation tasks, such as AST construction, symbol resolution, and type analysis. A significant technical challenge stems from the fact that transformers operate at a low level, where each layer processes the input sequence as raw vectors without explicitly associating them with predefined structure or meaning. In contrast, high-level compiler tasks necessitate managing intricate relationships and structured program information. Our primary technical contribution is the development of a domain-specific language, \textbf{Cybertron}, which generates formal proofs of the transformer's expressive power, scaling to address compiler tasks. We further establish that recurrent neural networks (RNNs) require at least a linear number of parameters relative to the input sequence, leading to an exponential separation between transformers and RNNs. Finally, we empirically validate our theoretical results by comparing transformers and RNNs on compiler tasks within \textbf{Mini-Husky}.
\end{abstract}

\section{Introduction}
\label{sec:introduction}

Transformers~\citep{vaswani2017attention} have demonstrated remarkable proficiency across various domains, achieving near-expert performance in solving International Mathematical Olympiad problems~\citep{alphaproof} and excelling in complex reasoning tasks in science, coding, and mathematics~\citep{gpt-o1}.
They also handle routine coding tasks with high precision and have been integrated into code editors to significantly boost programmers' productivity~\citep{cursor2024,taeilin2023ai}.
Despite these advancements, the full extent of their underlying capabilities remains only partially understood.

In this paper, we aim to deepen our understanding of transformers' abilities to perform compilation tasks. Empirically, transformer-based LLMs have shown rapid progress in code generation and compilation. For example, MetaLL~\citep{Cummins2024MetaLL}  enables LLMs to optimize code by interpreting compiler intermediate representations (IRs), assembly language, and optimization techniques. \citet{Gu2023LLMBasedCG} highlights the ability of LLMs to generate high-quality test cases for Golang compilers. Surprisingly, \citet{twitterVictorTaelin2023} demonstrates that models like Sonnet-3.5 can compile legacy code into modern languages like TypeScript, outperforming the now obsolete AgdaJS compiler~\citep{adga}.

To formally study this problem in a controlled setup, we designed a C-like programming language called \textbf{mini-husky}, which encapsulates key features of modern C-like languages such as~\citep{flanagan2011javascript} and Rust~\citep{klabnik2023rust}. We focus on three representative compilation tasks: abstract syntax tree (AST) construction, symbol resolution, and type analysis. The AST is a recursive structure that represents the input as a tree. From the perspective of programming language design, the AST is considered the true representation of the input, with the textual code serving merely as a convenient interface for human users~\citep{alfred2007compilers}. All syntactic and semantic processing can then be interpreted as specific operations on these trees. Symbol resolution involves verifying the validity of references to entities and flagging errors for undefined symbols. Type analysis encompasses both type inference, which assigns types to variables without explicit annotations, and type checking, which identifies mismatches between actual and expected types.

We demonstrate that under the \textit{clean code principle}~\citep{10.5555/1388398},
\rebut{transformers can efficiently perform AST construction, symbol resolution, and type analysis, where efficiency means that these tasks can be conducted by transformers with a number of parameters that scale logarithmically with the input code length. }
To the best of our knowledge, this is the first theoretical demonstration that transformers can function as compilers in a parameter-efficient manner.

We further compare transformers and recurrent neural networks (RNNs). By connecting the type analysis task with the associative recall, we show even under the \emph{clean code principle}~\citep{10.5555/1388398}, RNNs require a memory size that scales \emph{linearly} with the input sequence length to successfully perform type analysis. Consequently, for type analysis in compilation, transformers can be \emph{exponentially more efficient} than RNNs. We also empirically validate our theoretical findings by demonstrating the superiority of transformers in the type analysis task.


\textbf{Technical Challenges and Our Technique.}

Proving that transformers can perform compilation tasks presents several challenges:

\begin{itemize*}
    \item \textbf{Transformers operate at too low a level}. Transformers process sequences of floating-point vectors, akin to raw bits in computers, and proving their ability to perform specific tasks is similar to writing specialized parallel machine code. Previous work \citep{Yao2021SelfAttentionNC} often resorts to graphical illustrations for readability, even for basic tasks.

    \item \textbf{Compilers are exceedingly high-level}.
          Compilers are among the most complex programming endeavors of our time. Compilation involves numerous sophisticated procedures, some of which are undecidable or computationally expensive, such as code optimization~\citep{alfred2007compilers}) and type analysis ~\citep{pierce2002types}. For example, type analysis in complex type systems poses significant challenges, often requiring the development of advanced logical frameworks ~\citep{dunfield2019sound}.

\end{itemize*}

To overcome these challenges, we design a domain-specific language (DSL) called \textbf{Cybertron} to serve as the proof vehicle, i.e., a major part of our proof consists of reasoning about type-correct code in Cybertron that represents a transformer. Without using \textbf{Cybertron}, writing an equivalent natural language proof would be too complex and intractable.
Using code to prove propositions is not new to computer science; it is, in fact, the norm in interactive theorem proving (ITP) ~\citep{harrison2014history}.
ITP focuses on generating computer-verifiable proofs through a combination of human-guided instructions and software automation. For instance, the correctness of the Kepler conjecture~\citep{hales2017formal} is verified by the combination of the ITP theorem provers HOL Light~\citep{harrison2009hol} and Isabelle~\citep{paulson1994isabelle}.
To the best of our knowledge, we are the first to apply this approach to understanding neural networks.

\textbf{Contributions.} We summarize our contributions below:

\begin{itemize*}
    \item \textbf{A testbed for compilation tasks}: We introduce \textbf{Mini-Husky}, a simple yet representative C-like programming language, designed to formally assess transformers' capabilities in programming language processing. We anticipate that \textbf{Mini-Husky} will become a standard testbed for this purpose.

    \item \rebut{\textbf{Expressive power theory of transformers for several compilation tasks}}: We provide a formal proof that, when the input code sequence has bounded AST depth and inference depth, the number of parameters in transformers only needs to scale logarithmically with the input sequence length to handle compilation tasks such as AST construction, symbol resolution, and type analysis. \rebut{To the best of our knowledge, this is the first study exploring the power of transformers for these compilation tasks.}

    \item \textbf{Transformers vs. RNNs}: Theoretically, we demonstrate a negative result, showing that the number of parameters in RNNs must scale linearly with the input sequence length to perform type analysis correctly. This result establishes an exponential separation between transformers and RNNs. We further empirically confirm the advantage of transformers for the type analysis task.

    \item \textbf{A Domain-Specific Language for Proofs}: Given the challenges in formal proofs, we design a domain-specific language, \textbf{Cybertron}, to serve as a proof vehicle. We believe that \textbf{Cybertron}, and the general approach of using DSLs for analysis, can have broader applications in understanding transformers and other architectures.
\end{itemize*}


\section{Related Work}
\label{sec:related-work}

\textbf{Expressive Power of Transformers.}
A line of work studies the expressive power of attention-based models. One direction focuses on the universal approximation power~\citep{yun2019transformers,bhattamishra2020ability,bhattamishra2020computational,dehghani2018universal,perez2021attention}. More recent works present fine-grained characterizations of the expressive power for certain functions in different settings, sometimes with statistical analyses~\citep{edelman2022inductive,elhage2021mathematical,likhosherstov2021expressive,akyurek2022learning,zhao2023transformers,Yao2021SelfAttentionNC,anil2022exploring,barak2022hidden,garg2022can,von2022transformers,bai2023transformers,olsson2022context,akyurek2022learning,li2023closeness,Hao2022FormalLR,Prez2019OnTT,Strobl2023AverageHardAT,Chiang2023TighterBO,Wei2022ChainOT,Wang2022SelfConsistencyIC,Feng2023TowardsRT,li2024chain,reddit2013o1secret}.
There are also characterizations of transformers to be as powerful as universal computers if put in a looped context \citep{Giannou2023LoopedTA}.
The most related one is \cite{Yao2021SelfAttentionNC} where the authors prove constructively that bounded depth Dyck language can be recognized by encoder-only hard attention transformers, which has similarities to our settings of bounded depth programming language recognized encoder-only hard attention transformers.
The major difference is that we introduce concepts and tasks from programming language theory~\citet{pierce2002types} to study the semantic powers of transformers.

\textbf{Transformers vs. RNN.} 
It is important to understand the comparative advantages and disadvantages of transformers against RNNs. 
Empirically, synthetic experiments have shown an advantage of transformers against RNNs for long range tasks \citep{Bhattamishra2023UnderstandingIL,Arora2023ZoologyMA}. Theoretically, there has been a rich line of work focusing on comparing transformers and RNNs in terms of recognizing formal languages \citep{Bhattamishra2020OnTA,Hahn2019TheoreticalLO,Merrill2021SaturatedTA}, which show that the lack of recursive structure of transformers prevent them from recognizing some formal languages that RNNs can recognize. 
However, the gap can be mitigated when we consider the bounded length of input or bounded grammar depth \citep{Liu2022TransformersLS,Yao2021SelfAttentionNC}, which is quite reasonable in practice and is used in this paper.
On the other side, prior work \citep{Jelassi2024RepeatAM,Wen2024RNNsAN} proves a representation gap between RNNs and Transformers in repeating a long sequence. 
In summary, it is somehow intuitive that recursive structures with limited memory perform badly at tasks which requires information retrieval. Our paper shows that semantic analysis for programming languages is such a task.

\textbf{DSLs for Transformers.}
We note that we are not exactly the first one to employ a domain-specific language to understand the expressive powers of transformers. Previously, DSLs with simple typings like RASP \citep{Weiss2021ThinkingLT} were proposed to prove constructively that transformers can do various basic sequence-to-sequence operations.  \citet{Lindner2023TracrCT} writes a compiler that compiles RASP into actual transformers, \cite{Friedman2023LearningTP} shows that RASP can be learned, and \cite{Zhou2023WhatAC} uses RASP to prove that simple transformers can perform certain algorithms. The major difference between RASP and our DSL \textbf{Cybertron} is that \textbf{Cybertron} has a powerful algebraic type system that helps prove complicated operations beyond simple algorithms.





\section{Preliminaries}
\label{sec:preliminaries}


\rebut{The major innovation in the transformer architecture is that it uses self-attention solely without a conjunction with a recurrent network~\citep{vaswani2017attention}}, which processes input tokens in a distributed manner. This capability enables the model to handle long-range dependencies, a crucial feature for language tasks.
We use hard attention and simplified position encoding to simplify our theoretical reasoning. 
\paragraph{Attention.}

In practice, attention heads use \textbf{soft attention}. Given model dimension $\dmodel$, number of heads $H$, and a finite set of token positions $\Pos$, an attention layer with simplified position encoding is defined as a function  $f_{\text{attn}}:\RealNumbers^{\Pos\times \dmodel}\rightarrow\RealNumbers^{\Pos\times \dmodel}$ given by
\begin{equation}
    \forall \p\in\Pos, \quad {f_{\text{attn}}(X)}_\p := W_O\,\text{Concat}\left({\text{Attn}^{(1)}(X)}_\p, \ldots, {\text{Attn}^{(H)}(X)}_\p\right),
\end{equation}
where the $h$th attention head is defined using soft attention as:
$
    {\text{Attn}^{(h)}(X)}_\p :=
    \sum_{\p' \in \Pos} \alpha_{\p, \p'}^{(h)} V^{(h)}_{\p'}.
$
The attention weights $\alpha_{\p, \p'}^{(h)}$ given by:
$
    \alpha_{\p, \p'}^{(h)} =
    \frac{\exp\left( {Q^{(h)}_{\p}}^\top K^{(h)}_{\p'} + \lambda^{(h)\top}\Psi_{\p' - \p} \right)}
    {\sum_{\p'' \in \Pos} \exp\left( {Q^{(h)}_{\p}}^\top K^{(h)}_{\p''} + \lambda^{(h)\top}\Psi_{\p'' - \p} \right)},
$
where $W_O \in \RealNumbers^{\dmodel \times \dmodel}$ are trainable parameters, $Q^{(h)}_{\p}, K^{(h)}_{\p}, V^{(h)}_{\p} \in \RealNumbers^{\dmodel/H}$ are linear transformations of $X_{\p}$, $\lambda^{(h)} \in \RealNumbers^2$ depends on the head, and $\Psi_{q} = \begin{pmatrix} q \\ 1_{q>0} \end{pmatrix} \in \RealNumbers^2$ accounts for relative position.

For theoretical convenience, we use hard attention, commonly used in theoretical analysis of transformer~\citep{Yao2021SelfAttentionNC,Hahn2019TheoreticalLO}. Hard attention can be viewed as the limit of soft attention when the attention logits become infinitely large. The hard attention head is defined as:
\begin{equation}
    {\text{Attn}^{(h)}(X)}_\p :=
    \frac{1}{|S_{\p}|} \sum_{\p' \in S_{\p}} V^{(h)}_{\p'},~~\text{where}~~S_{\p} = \arg\max_{\p' \in \Pos} \left( {Q^{(h)}_{\p}}^\top K^{(h)}_{\p'} + \lambda^{(h)\top}\Psi_{\p' - \p} \right)
\end{equation}

In other words, hard attention selects the positions $\p'$ that maximize the attention score for each position $\p$, and averages the corresponding value vectors $V^{(h)}_{\p'}$.


\paragraph{Feed-Forward Layer.}

\newcommand{\dff}{d_{\text{ffn}}}

Given model dimension $\dmodel$, and a finite set of token positions $\Pos$, a feed-forward layer is a fully connected layer applied independently to each position, defined as a function $\fFfn: \RealNumbers^{\Pos \times \dmodel} \rightarrow \RealNumbers^{\Pos \times \dmodel}$ given by \begin{equation} \forall \p \in \Pos, \quad {\fFfn(X)}_\p = W_2\reluItself\left(W_1 X_p + b_1\right) + b_2, \end{equation} where $W_1 \in \RealNumbers^{\dff \times \dmodel}$ and $W_2 \in \RealNumbers^{\dmodel \times \dff}$ are trainable weight matrices, $b_1 \in \RealNumbers^{\dff}$ and $b_2 \in \RealNumbers^{\dmodel}$ are trainable bias vectors, $\dff$ is the hidden dimension of the feed-forward layer, chosen to be $2\dmodel$, as commonly used in practice, $\reluItself$ is the ReLU activation function. 

\paragraph{Encoder-Only Transformer.}
Encoder-only transformers consist solely of the encoder stack, making them ideal for tasks like classification, regression, and sequence labeling that do not require sequence generation. Each encoder layer includes a multi-head self-attention mechanism and a feed-forward network, allowing the model to capture complex dependencies and contextual information.

One can define it using the following recurrence,
\begin{itemize*}
    \item  The input is given by: $X^{(0)} = X$.
    \item For each layer \( l = 1, 2, \dots, L \):
          \begin{itemize*}
              \item Compute attention output:
                    $ \hat{X}^{(l)} = X^{(l-1)} + \fAttn^{(l)}\left( X^{(l-1)} \right), $
              \item Compute feed-forward output:
                    $ X^{(l)} = \hat{X}^{(l)} + \fFfn^{(l)}\left( \hat{X}^{(l)} \right). $

          \end{itemize*}
\end{itemize*}

In the above, $\fAttn^{(l)}$ are the attention layers, and $\fFfn^{(l)}$ are the feed-forward layers, with the same model dimension $\dmodel$, number of heads $H$, and set of token positions $\Pos$. For simplicity, layer normalization is ignored. See Appendix\,\ref{app-sec:neural-architecture} for full details of transformers and other architectures.

\section{Programming Language Processing and The Target C-Like Language: Mini-Husky}
\label{subsec:programming-language-processing}

Recently, transformers have expanded to support code analysis and generation~\citep{Nijkamp2023codegen2, chen2021evaluating, cursor}. 
Programming languages offer a cleaner foundation for studying language understanding, as their syntactic and semantic tasks are precisely defined.
To formally study the language processing capabilities of transformers, 
we design \textbf{Mini-Husky},  a representative mix of modern C-like languages with strong typing and typical syntactic features. It supports user-defined types (e.g., structs, enums) and enforces strict type equality, disallowing implicit conversions. Lexical scoping, including shadowing, ensures proper variable accessibility based on block structures, type inference, and type checking.
These features make compiling \emph{\textbf{Mini-Husky}} a representative task to evaluate transformers' capabilities in syntactic and semantic tasks like symbol resolution and type checking. See Appendix~\ref{sec:mini-husky-full-details} for the full details of \textbf{Mini-Husky}.


The standard pipeline of processing programming languages is shown in Figure~\ref{fig:standard-pipeline}~\citep{alfred2007compilers}. The raw text is firstly segmented into parts like literals, identifiers, punctuations, keywords, etc, called token stream, then parsed into a tree-like structure representation generated from the input, finally syntactic and semantic analysis is performed on the tree.
\rebut{Afterward, an intermediate language program is generated based on the syntactic and semantic analysis, which is further optimized and finally transformed into targeted machine code.}
In this paper, to simplify the presentation, we assume the tokenizer has been provided a priori.
\rebut{Below we describe the programming language processing tasks investigated in the paper.}








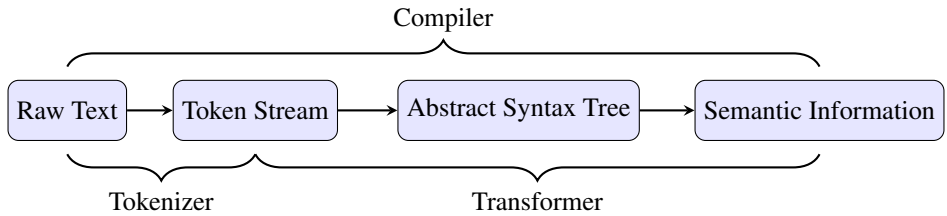
\begin{figure}
    \centering
\begin{tikzpicture}
    \node (text) [box] {Raw Text};
    \node (tokens) [box, right of=text, node distance=2.5cm] {Token Stream};
    \node (ast) [box, right of=tokens, node distance=2.1cm] {AST};
    \node (semantic) [box, right of=ast, node distance=2.7cm] {Semantic Information};
    \node (more) [right of=semantic, node distance=2.5cm] {$\cdots$};

    \draw [arrow] (text) -- (tokens);
    \draw [arrow] (tokens) -- (ast);
    \draw [arrow] (ast) -- (semantic);
    \draw [arrow] (semantic) -- (more);

    \draw [decorate,decoration={brace,amplitude=10pt,raise=5pt}, thick]
    ([yshift=0cm] text.north) -- ([yshift=0.2cm] more.north)
    node [black,midway,yshift=0.8cm] {Compiler};

    \draw [decorate,decoration={brace,amplitude=10pt,mirror,raise=5pt}, thick]
    ([yshift=0cm] text.south) -- ([yshift=0cm] tokens.south)
    node [black,midway,yshift=-0.8cm] {Tokenizer};

    \draw [decorate,decoration={brace,amplitude=10pt,mirror,raise=5pt}, thick]
    ([yshift=0cm] tokens.south) -- ([yshift=0cm] semantic.south)
    node [black,midway,yshift=-0.8cm] {Transformer};

\end{tikzpicture}

    \caption{\rebut{Programming language processing pipeline}}
    \label{fig:standard-pipeline}
\end{figure}

\textbf{Abstract Syntax Tree Construction.} Abstract Syntax Tree (AST) is a hierarchical, tree-like representation of the syntactic structure of source code in a programming language. Unlike the raw text of the code, the AST abstracts away surface syntax details, capturing the essential elements and their relationships in a structured form. Each node in the AST corresponds to a construct occurring in the source code, such as expressions, statements, or declarations. This representation is central to various stages of language processing, enabling efficient syntax checking, semantic analysis, and code generation.
The formal definition of ASTs is standard in the programming language literature but is lengthy, so we defer it to Appendix~\ref{sec:tree}.

The AST construction task's final output is the collection of all AST nodes.
We will show transformers can construct AST efficiently.


\textbf{Symbol Resolution.} 
In programming languages, symbols are functions, types, generics, variables, macros, etc. They are defined somewhere and can be used by referring to the corresponding identifier or path in a certain scope. The scope can be within a certain tree of modules, or within a certain curly braced scope within one module. 
For simplicity, we only consider curly braced scope. 

In \textbf{Mini-Husky}, the following showcases symbol resolution. 
\begin{lstlisting}[language=Cybertron]
pub fn f() {
    fn f1() {}

    let a = 1;
    let x = a;
    let a = 2;
    {
        let a = 3;
        { let a = 4; }
        let y = a;
    }
    let z = a;
}

fn g() { f() }
\end{lstlisting}

The outer function $f$ is accessible everywhere in the body of function $g$.
However, the inner function $f_1$ can only be used inside the body of $f$ as it is defined within the body.
For variables with the same identifier \cybertron{a}, the first is accessible from line 5, the second is accessible from line 12, the third is accessible from line 10, and the fourth is not accessible from anywhere.
Thus $x=1,y=3,z=2$.

 The output of the \textbf{symbol resolution} task is the collection of symbol resolution results on all applicable tokens. 
 More concretely, the output is a sequence of values of type \cybertron{Option<SymbolResolution>} where \cybertron{Option<SymbolResolution>} is the type \cybertron{SymbolResolution} with a null value added for non-applicability and \cybertron{SymbolResolution} is the type storing the result of the symbol resolution, being either a success with a resolved symbol of type \cybertron{Symbol} or a failure with an error of type \cybertron{SymbolResolutionError}.
We shall prove that transformers can do symbol resolution and that attention is crucial.

\textbf{Type Analysis.}
In general, types are essential for conveying the intended usage of the written functions and specifying constraints.
As a first exploration of this topic, we try to make the type analysis in \textbf{Mini-Husky} as simple as possible yet able to bring out the essential difficulty.
The type system consists of four sequential components: (1) \textit{Type definition}, (2) \textit{Type specification},  (3) \textit{Type inference}, and (4) \textit{Type checking}. Due to the page limit, here we only introduce (4) \textit{Type checking} because it is the final step and this is a crucial step which separates transformers and RNNs.
See Appendix~\ref{sec:additional_detials_compiler_tasks} for details of  (1) \textit{Type definition}, (2) \textit{Type specification}, and (3) \textit{Type inference}.

    Type checking ensures that the typed expressions agree with its expectations. For simplicity, we do not allow implicit type conversion, so the agreement means exact equality of types. The arguments of function calls are expected to have types according to the definition of the function. The operand type of field access must be a struct type with a field of the same name. The type of the last expression of the function body or the expr in the return statement must be equal to the return type of the function. For variables defined in the \cybertron{let} statement, 
    If the types are annotated, the types of the left-hand side and right-hand side should be in agreement.
          \begin{lstlisting}[language=Cybertron]
//  Type Error: the return type is `i32`, yet the last expression is of type `f32`
fn f(a: i32) -> i32 { 1.1 }

struct A { x: i32 }

fn g() {
    // Type Error: `x` is of type f32 but it's assigned by a value of type `i32`
    // Type Error: the first argument of `f` is expected to be of type `i32` but gets a float literal instead
    let x: f32 = f(1.1);
    // Type Error: no field named `y`
    let y = A { x: 1 }.y;
}
\end{lstlisting}
The above incorporates typical examples of type disagreements that count as type errors. A compiler should be able to report these errors.

The \textbf{type analysis} task's final output is the collection of all type errors. More concretely, the output is a sequence of \cybertron{Option<TypeError>}, where \cybertron{Option<TypeError>} denoted the type \cybertron{TypeError} will a null value added and \cybertron{TypeError} is the type storing the information of a type error. The position of type errors agrees with the source tokens leading to these errors. 

\section{Expressive Power of Transformers as Efficient Compilers}
\label{sec:transformer}

In this section we discuss main theoretical results about the expressive power of transformers to perform compilation tasks: AST construction, symbol resolution, and type analysis.
In Section~\ref{sec:proof_vehicle}, we discuss \textbf{Cybertron}, a DSL specifically designed for our proof.

\subsection{Abstract Syntax Tree Construction}
We start with a definition that characterizes low-complexity code.
\begin{definition}
    [code with Bounded AST-Depth]

    Let \miniHusky{D} be the set of token sequences that can be parsed into valid ASTs in \textbf{Mini-Husky} with a depth less than $D$. 
\end{definition}

$D$ in the above definition is small in practice, and a linear dependency on $D$ is acceptable, but the linear dependency on \rebut{the length of the token sequence} $L$ is not.
The fundamental reason is that the \emph{clean code principle} \citep{10.5555/1388398} requires one to write code with as little nested layer as possible for greater readility. Readability is of the utmost importance because ``Programs are meant to be read by humans and only incidentally for computers to execute'' \citep{Abelson1996}. This assumption of bounded hierarchical depth is not limited to just programming languages, but is often seen as applicable to natural languages~\citep{Frank2012HowHI,Brennan2019HierarchicalSG,Ding2017RulebasedAW}, motivating \cite{Yao2021SelfAttentionNC} to have a similar boundedness assumption.
Below is the main result for AST construction using transformers.

\begin{theorem}
    There exists a transformer encoder of model dimension and number of layers being $O(\log L+D)$ and number of heads being $O(1)$ that represents a function that maps any token sequence of length $L$ in \miniHusky{D} to its abstract syntax tree represented as a sequence.
\end{theorem}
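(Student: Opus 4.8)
The plan is to avoid reasoning about the transformer's floating-point vectors directly and instead exhibit a type-correct \textbf{Cybertron} program that computes the AST, then appeal to the compilation guarantee for \textbf{Cybertron} developed in Section~\ref{sec:proof_vehicle}: a program whose sequential blocks have total depth $T$ and whose per-position register width is $W$ compiles to a transformer encoder with $O(T)$ layers, model dimension $O(W)$, and $O(1)$ heads. So it suffices to describe such a program with $T=O(\log L + D)$ and $W=O(\log L)$. To that end I would first assemble a small library of reusable blocks, each of $O(1)$ depth, width $O(\log L)$, and $O(1)$ heads: (i) integer arithmetic and comparison on $O(\log L)$-bit values, so token indices can be used as pointers; (ii) two hard-attention \emph{retrieval} primitives — ``fetch the register contents at a computed index'' and ``find the nearest position to the left (or right) whose local state satisfies a given predicate'' — both of which follow from the $\argmax$ definition of hard attention once the query and key encode the position and the predicate; and (iii) a parallel prefix-scan block of depth $O(\log L)$, which also supplies the absolute position counter $1,\dots,L$ and is the sole place the $\log L$ in the layer count arises.

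Armed with this library, I would realize AST construction in two stages. Stage one, \emph{skeleton recovery}: feed the $\pm 1$ delimiter signal into the scan to obtain every token's bracket-nesting depth, then use retrieval to match each delimiter with its partner and, for each token, to locate the delimiters of its immediately enclosing scope; this stage is $O(\log L)$ depth and $O(\log L)$ width. Stage two, \emph{bottom-up tree assembly}: every sequence in \miniHusky{D} has AST depth below $D$, so I form nodes level by level, innermost first, in $D$ rounds. In round $k$, each token that anchors a not-yet-formed node all of whose children are already formed does a bounded amount of work — picking its node constructor, collecting its (delimiter-bounded) child anchors, and, within its own scope, resolving infix-operator precedence and statement boundaries, all of which are constant-depth compositions of retrieval and local arithmetic on the already-recovered skeleton — and writes a pointer to its parent anchor. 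Each round costs $O(1)$ depth and $O(\log L)$ width, so stage two is $O(D)$ depth. The program's final per-position output is the node record (constructor tag together with parent and child pointers), which is precisely the AST ``represented as a sequence''.

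Putting the stages in sequence and packing their registers into disjoint coordinate blocks, the \textbf{Cybertron} program has depth $O(\log L + D)$ and width $O(\log L)$ — a depth counter indexing the current round of stage two absorbs any $D$ appearing in the width — with $O(1)$ heads throughout; the compilation guarantee then yields the claimed transformer.

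The step I expect to be the real obstacle is the combinatorial faithfulness of stage two. \textbf{Mini-Husky} has a genuine grammar — nested blocks; \texttt{let}, \texttt{fn}, \texttt{struct}, and \texttt{enum} forms; field accesses and calls; infix expressions with precedence — and one must check that every parsing decision at every level is expressible as an $O(1)$-depth combination of the library primitives, and that the level-by-level process really terminates after $D$ rounds with each node attached to the correct parent. Doing this in prose is intractable, which is the whole point of writing the argument as a \textbf{Cybertron} program: its algebraic type system forces each intermediate object (token kinds, partial subtrees, pointer tables) to carry a declared type, so ``the program type-checks and returns a value of AST type'' becomes a workable surrogate for ``the transformer outputs the AST,'' and the compiler's resource accounting discharges the $O(\log L + D)$ / $O(1)$ bounds for us.
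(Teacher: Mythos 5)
Your proposal is essentially the paper's proof: use \textbf{Cybertron} as the vehicle, reduce the theorem to exhibiting a type-correct program of bounded depth and width, and build the AST bottom-up over $D$ rounds, each round being an $O(1)$-depth combination of nearest-left/right hard-attention retrievals and local (feed-forward) case analysis on token/constructor kinds, with the $O(\log L)$ width coming from encoding positions, indices, and interned identifiers. The one structural difference is your ``stage one'': a $\log L$-depth parallel prefix scan to recover bracket-nesting depths and match delimiters before assembly begins. The paper has no such stage -- delimiter matching is folded into the same $D$ reduction rounds (a pair of delimiters is matched only once everything between them has been reduced, exactly as in the paper's Dyck-language warm-up), so the layer count is really $O(D)$ and the $\log L$ in the theorem's layer bound is slack; the $\log L$ genuinely lives in the model dimension. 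Your variant still fits the stated bounds, and arguably buys a cleaner separation between syntactic skeleton and semantic assembly. One subtlety you should repair: you propose that each node record carry ``child pointers,'' but \textbf{Mini-Husky} nodes can have unboundedly many children (a block or argument list has arbitrarily many statements or items), so storing all child anchors would blow up the $O(\log L)$ per-position width. The paper deliberately stores only the \emph{parent} pointer in each node (children recover their parent, never the reverse), and shows parent pointers alone suffice for the downstream tree operations; your construction should do the same, or restrict child pointers to the fixed-arity constructors.
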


We note $\log L$ is small because 64-bit computers can only process context length at most $2^{64}$ and $D$ is small by assumption. Therefore, there exists a transformer with an almost constant number of parameters that is able to process comparatively much longer context length.



\begin{proof}[Proof Sketch]
    The idea is to construct  ASTs in a bottom-up manner with full parallelism.
    We shall recursively produce the final ASTs in at most $D$ steps.
    We shall maintain two values, called \preAsts and \asts. \asts represents ASTs that have already been allocated, although they might not have been fully initialized. \preAsts represents tokens that have yet to form ASTs and new ASTs that have not been fully initialized.
    For each round, we try to create new ASTs from \preAsts and update \asts and \preAsts. For the $n$-th round, we provably allocated all ASTs with a depth no more than $n$. Then for the $D$-th round, all ASTs are properly constructed and allocated.
    Each round can be represented by a transformer of $O(1)$ number of heads, model dimension $O(\log L + D)$, and $O(1)$ number of layers. 
    Therefore, the end-to-end process is then representable by a transformer of $O(1)$ number of heads, model dimension $O(\log L + D)$, and $O(\log L + D)$ number of layers.
    See full details in Appendix~\ref{sec:transformer-ast-proof}.
\end{proof}

\subsection{Symbol Resolution}\label{subsec:transformer-for-symbol-resolution}

Next, we show that transformers can effectively perform symbolic resolution as $\log L$ and $D$ are almost constant as compared with context length $L$.


\begin{theorem}
    There exists a transformer encoder of model dimension and number of layers being $O(\log L+D)$ and number of heads being $O(1)$ that represents a function that maps any token sequence of length $L$ in \miniHusky{D} to its symbol resolution represented as a sequence of values of type \cybertron{Option<SymbolResolution>}.
\end{theorem}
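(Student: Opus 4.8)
The plan is to build on the AST-construction theorem and then perform lexical-scope resolution directly on the AST. The key observation is that in \textbf{Mini-Husky} (which has only curly-braced scopes), resolving a referencing token amounts to walking the chain of enclosing scopes from the innermost outward and, in each scope, selecting the most recent \emph{visible} definition whose identifier matches; since any AST in \miniHusky{D} has depth less than $D$, this scope chain has length at most $D$, so the walk can be unrolled into $O(D)$ transformer ``rounds'' of constant depth each, on top of the $O(\log L + D)$ layers already spent constructing the AST.

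Concretely I would proceed in stages. (1) Run the transformer of the previous theorem to materialize the AST, at cost $O(\log L + D)$ in layers and width and $O(1)$ heads. (2) \emph{Scope annotation}: mark the scope-opening nodes (function bodies and block expressions), give every token the index of the opening delimiter of its innermost enclosing scope, and propagate up the AST the full list of ancestor-scope indices; this costs $O(D)$ layers since each token must learn at most $D$ ancestors, and the parent pointers from stage (1) drive the propagation. Alongside, for each scope on a token's chain record a \emph{cutoff position}: the token's own position for its innermost scope, and the opening-delimiter index of the next-inner scope on the chain for every outer scope. (3) \emph{Definition extraction}: tag every \cybertron{let} binding, function/type declaration, and function parameter as a definition carrying its identifier, its scope index, its position, and a Boolean flag marking whether it is hoisted (functions, types) or positionally scoped (\cybertron{let} variables). (4) \emph{Resolution}: for each referencing token and for $j = 0, 1, \dots, D-1$, run one hard-attention step whose query selects the definition tokens $d$ with matching identifier, with scope index equal to the $j$-th entry of the chain counted from the innermost, and with $\mathrm{position}(d)$ strictly below the corresponding cutoff (or with $d$ hoisted), breaking ties toward the largest $\mathrm{position}(d)$; the ``below the cutoff / maximize position'' requirement is exactly what the relative-position bias $\Psi$ encodes, while the equality tests go into the query/key inner product. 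We commit the matching symbol to the token's slot the first time a round succeeds (innermost match wins, enforced by a write-once discipline), and emit a \cybertron{SymbolResolutionError} if all $D$ rounds fail. (5) \emph{Output assembly}: package each token's slot as an \cybertron{Option<SymbolResolution>}. Stages (2)--(5) add $O(D)$ layers, $O(\log L + D)$ width (the $\log L$ coming from position encodings and the position/cutoff registers), and $O(1)$ heads, so the totals match the claim.

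The main obstacle, I expect, is getting stage (4) exactly right and then formalizing it in \textbf{Cybertron}. The delicate points are: the visibility rule itself --- hoisted symbols ignore the cutoff whereas \cybertron{let} bindings must respect it, and the cutoff for an enclosing scope is the opening brace of the child scope on the path, not the use position; faithfully simulating ``innermost scope with a match wins'' by fixed-depth rounds with a write-once register; and realizing identifier equality inside an attention score, which is clean only when identifiers live in a bounded code space --- defensible under the \emph{clean code principle} since identifiers are short, or obtainable by a short canonicalization pre-pass that assigns each identifier occurrence a representative index. Everything else (scope propagation, definition tagging, output marshalling) is structurally parallel to the bookkeeping already carried out in the AST-construction proof and reuses the same attention gadgets, so the real cost is the resolution round and its Cybertron type-correctness argument; full details would go in an appendix mirroring the AST-construction appendix.
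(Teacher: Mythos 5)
Your proposal follows the same overall scaffolding as the paper's proof --- construct the AST, propagate to every token its chain of enclosing curly-brace-block indices in $O(D)$ top-down rounds, tag definitions with their identifier and scope, and then match uses to definitions via attention --- but it realizes the final resolution step differently. The paper performs resolution in a \emph{single} attention layer: a definition's scope (a bounded vector of at most $D$ block indices) and a use's scope are encoded so that ``the definition's scope chain is a prefix of the use's scope chain, and the identifiers match'' becomes exactly the condition under which the $Q^\top K$ inner product attains its maximum, with the relative-position bias breaking ties; used-before-declared variables are then flagged in a per-token \texttt{finalize} pass rather than filtered out of the attention. You instead unroll the lookup into $D$ hard-attention rounds, innermost scope first, with a write-once register enforcing ``innermost match wins.'' Both fit the $O(\log L + D)$ budget, so your route is sound; the paper's buys a constant number of layers for the resolution step at the cost of a more delicate encoding argument (the asymmetric prefix-containment inner product spelled out in Appendix~\ref{sec:transformer-symbol-resolution-proof}), while yours is more mechanical but spends $\Theta(D)$ extra layers and must realize a per-level scope-index equality test in each round. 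One small simplification you could make: your per-level ``cutoff'' (the opening brace of the next-inner scope on the chain) is equivalent, for definitions living exactly in that outer scope, to the use's own position, since any outer-scope declaration preceding the use necessarily precedes the child block's opening brace; so the uniform mask $1_{\p'-\p>0}$ already available in $\Psi$ suffices, and you need not compare a key's position against a cutoff value stored in the query token, which would otherwise be awkward to express as an inner product.
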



\begin{proof}[Proof Sketch]
    First, we need to define the type for scopes. It is represented by a tiny sequence of indices of curly brace block AST that enclose the type/function/variable.
    We assign the scope by walking through the ASTs in a top-down manner. We not only assign scopes to item definitions, we also:
    (1) assign scopes to ASTs representing curly brace blocks, with these scopes equal to the scope of block itself, and
    (2) assign scopes to identifiers waiting to be resolved, with these scopes equal to the maximum possible scope of its resolved definition.
    The computation process is easily represented in Cybertron, indicating attention is expressive enough for this calculation and it only takes $O(D)$ number of layers.

    After obtaining all the scopes for all items, it takes only one additional layer to obtain the symbolic resolution through attention. As attention is expressed through the dot product of two linear projections $Q$ and $K$, we have to choose the representation of the scope type properly to finish the proof. The full details are in Appendix~\ref{sec:transformer-symbol-resolution-proof}.
\end{proof}

\subsection{Type Analysis} \label{sec:type_checking}

We need an additional definition to characterize the complexity of code for type analysis.
\begin{definition}
    [code with Bounded AST-Depth and Type-Inference-Depth] We use \miniHuskyAnnotated{D}{H} to denote the subset of \miniHusky{D} with the depth of type inference no more than $H$. The depth of type inference is the number of rounds of computation needed to infer all the types using the type-inference algorithm (described in Appendix~\ref{sec:additional_detials_compiler_tasks}).
\end{definition}

    In practice, $H$ is significantly smaller than the context length $L$ for reasonably written code because it is upper bounded by the number of statements in a function body which is required to be small according to the \emph{clean code principle}~\citep{10.5555/1388398}. 
Below, we present the main result of using transformers for type analysis. See full details in Appendix~\ref{sec:transformer-type-checking-proof}.

\begin{theorem}
\label{thm:type_checking}
    For $L,D,H\in\mathbb{N}$, there exists a transformer encoder of model dimension, and number of layers being $O(\log L+D+H)$ and number of heads being $O(1)$ that represents a function that maps any token sequence of length $L$ in \miniHuskyAnnotated{D}{H} to its type errors represented as a sequence of values of type \cybertron{Option<TypeError>}.
\end{theorem}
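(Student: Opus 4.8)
The plan is to assemble the type-analysis transformer by composing several stages, each of which is a transformer that fits within the stated budget, and then to use the fact that stacking transformers adds their depths while the width and head count can be kept at $O(\log L + D + H)$ and $O(1)$ throughout. First, run the AST-construction transformer of Theorem~1 so that each position carries the AST node it belongs to together with parent/child pointers; this costs $O(\log L + D)$ layers and $O(1)$ heads. Second, run the symbol-resolution transformer of Theorem~2 so that every identifier carries the index of the definition it resolves to --- in particular every function call knows its callee and every field access knows the struct whose field it reads; this again costs $O(\log L + D)$ layers. Third, perform \emph{type definition} and \emph{type specification}: at each struct/enum definition node collect its fields and their written types, and at each function definition node collect its argument types and return type. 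These are local walks over the bounded-depth AST, expressible in $O(D)$ layers with $O(1)$ heads, with each type stored as an $O(\log L)$-bit tag (an index into the table of definitions, or a small code for a primitive).

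Fourth --- the crux --- I would simulate the iterative \emph{type-inference} algorithm of Appendix~\ref{sec:additional_detials_compiler_tasks} in exactly the way the bottom-up AST construction is simulated: maintain a partial assignment of types to expression nodes and to \cybertron{let}-bound variables, and show that one round of the algorithm (propagating a type from a subexpression to its parent, from the right-hand side of a \cybertron{let} to the bound variable, or resolving a pending node once its inputs are known) is representable by $O(1)$ transformer layers acting in full parallelism across positions, using attention to fetch the type of a referenced node. By induction on the round number, after $n$ rounds every node of type-inference-depth at most $n$ holds its final type, so $H$ rounds suffice, contributing $O(H)$ layers. Fifth, with all types known, \emph{type checking} is one more pass of comparisons: for each applicable token use attention to fetch the expected type (the callee's declared argument type, the struct's declared field type, the declared return type, or the annotation on a \cybertron{let}), test it for exact equality against the actual type, and emit the corresponding value of type \cybertron{Option<TypeError>} at that position; this needs $O(1)$ further layers and $O(1)$ heads. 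Summing depths gives $O(\log L + D) + O(\log L + D) + O(D) + O(H) + O(1) = O(\log L + D + H)$, with all widths $O(\log L + D + H)$ and head count $O(1)$.

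The main obstacle is the fourth step. Unlike AST construction, whose dependency structure is literally the tree, the inference dependency graph threads through resolved symbols (a call's result type depends on its callee's declared return type, a field access on the struct definition) and through \cybertron{let}-bindings with omitted annotations, so one must argue carefully that each round's update is well-defined given the output of the previous round and implementable with a constant number of attention and feed-forward layers --- and, as already flagged in the symbol-resolution proof sketch, that the type tags can be chosen so that the equality tests and table lookups needed by inference and checking are realizable as dot products of the linear maps $Q$ and $K$. All of this bookkeeping is discharged by writing the corresponding type-correct \textbf{Cybertron} program, whose compilation into a transformer yields the stated bounds; the full construction is deferred to Appendix~\ref{sec:transformer-type-checking-proof}.
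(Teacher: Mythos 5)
Your proposal matches the paper's proof: the appendix construction composes the AST and symbol-resolution transformers with a type-signature extraction pass (\cybertron{calc_ty_signatures}), an $H$-round parallel fixed-point iteration for type inference (\cybertron{calc_ty_inferences}), an attention-based retrieval of expected types from signatures (\cybertron{calc_ty_expectations}), and a final position-wise equality comparison emitting \cybertron{Option<TypeError>}, all expressed as type-correct Cybertron code and converted to transformers via the composition and computation-graph propositions. Your identification of the crux --- that each inference round is $O(1)$ layers and that type tags must be encoded so equality and table lookup are realizable as $Q^\top K$ dot products --- is exactly where the paper's effort goes as well.
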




\subsection{Proof Vehicle: \textbf{Cybertron}, a Domain-Specific Language}
\label{sec:proof_vehicle}

Here we highlight our main proof technique.
Proving that transformers can express complex algorithms and software like compilers is a significant challenge due to the inherent differences between how transformers operate and the nature of high-level tasks they are expected to perform. Transformers process input at a low level, where each layer manipulates raw token sequences as vectors without predefined structure or meaning. However, high-level tasks—such as constructing ASTs
and performing type and symbol analysis—require handling complex, structured information that depends on long-range relationships and interactions across the input. Bridging the gap between this raw, unstructured processing and the structured, multi-step logic required for these tasks introduces significant difficulty. Compilers, for instance, typically rely on rule-based, step-by-step operations that are abstract and sequential, which transformers must simulate through their attention mechanisms and feedforward layers. The challenge is further compounded by the need to formally prove that transformers can handle such tasks efficiently and accurately, despite operating in a fundamentally different manner. To address these challenges, we propose a domain-specific language (DSL) called \textbf{Cybertron}, which allows us to \emph{systematically prove} that transformers are capable of expressing complex algorithms while maintaining sufficient readability.

A key feature of \textbf{Cybertron} is its expressive type system, which provides strong correctness guarantees. The type system ensures that every value is strongly typed, making it easier to reason about function composition and ensuring the validity of our proofs. This type system is crucial for managing how transformers represent and manipulate both local and global types—where local types correspond to individual tokens and global types refer to sequences of tokens, encapsulating broader program information.


What transformers output (possibly in the intermediate layers) is a representation in sequences of vector of sequences of values in these types. As types are mathematically interpreted in this paper as a discrete subset of a vector space, \textbf{Cybertron} allows us to construct transformers with automatic value validity guarantees if the \textbf{Cybertron} code is type-correct.

In \textbf{Cybertron}, complex functions are broken down into ``atomic'' operations through propositions on function compositions and computation graphs (Propositions~\ref{prop:composition-of-functions-representable-in-resmlp},\ref{prop:composition-of-functions-representable-in-transformers},\ref{prop:computation-graphs-of-functions-representable-in-transformers},\ref{prop:map-resmlp-is-transformer}). It is straightforward to prove that these ``atomic'' operations are representable by transformers, either by feedforward layers or attention layers. For example:

\begin{itemize*}
    \item \textbf{Feedforward layers:} boolean operations like AND (Proposition~\ref{prop:boolean-and-representable}), OR (Proposition~\ref{prop:boolean-or-representable}), or NOT (Proposition~\ref{prop:boolean-not-representable}), or operations over option types like \cybertron{Option::or} (Proposition~\ref{prop:option-or-representable}) being applied to each token in a sequence.
    
    \item \textbf{Attention layers}: operations that require information transmission between tokens such as \cybertron{nearest_left} and \cybertron{nearest_right} that collect for each token the nearest left/right non-nil information (Proposition~\ref{prop:nearest-left-right}).
\end{itemize*}

This approach allows us to break down complex operations into primitive tasks that transformers can simulate. Feedforward layers handle local operations on individual tokens, while attention layers manage long-range dependencies and interactions between tokens, simulating the multi-step reasoning required for higher-level tasks.

\textbf{Cybertron}'s expressive type system and function composition framework help bridge the gap between the low-level processing transformers perform and the high-level reasoning necessary for complex tasks like compilation. For full details, including the mathematical foundations of \textbf{Cybertron}'s type system and function composition, see Appendix~\ref{sec:cybertron}.

\section{Comparisons between Transformers and RNN}
\label{sec:other-architectures}



Now we compare transformers and RNNs from both theoretical and empirical perspectives.


\subsection{A Lower Bound for RNNs for Type Checking}

Previously, it has shown that RNN is provably less parameter efficient than transformers for associative recall~\citep{Wen2024RNNsAN}. Intuitively speaking, the type checking step covers associative recall.
Based on this observation, we obtain the following lower bound for RNNs.

\begin{theorem}
\label{thm:rnn}
    For $L,D,H\in\mathbb{N}$, for any RNN that represents a function that maps any token sequence of length $L$ in \miniHuskyAnnotated{D}{H} with $D,H=O(1)$ to its type errors represented as a sequence of values of type \cybertron{Option<TypeError>}, then its state space size is at least $\Omega(L)$.
\end{theorem}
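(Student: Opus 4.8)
The plan is to reduce from associative recall, in the spirit of the RNN memory lower bound of \citet{Wen2024RNNsAN}, by exhibiting a sub-family of \miniHuskyAnnotated{D}{H} with $D,H=O(1)$ on which any correct type-analysis map forces an RNN to store an arbitrary function $\pi\colon[m]\to[m]$. Given $m$ and such a $\pi$, I build a \textbf{Mini-Husky} program with three flat blocks: (i) declarations of $m$ mutually incompatible (empty) struct types $T_1,\dots,T_m$; (ii) declarations of $m$ functions $f_1,\dots,f_m$, where $f_j$ takes one argument of type \cybertron{i32}, returns the type $T_{\pi(j)}$, and whose body is the obvious constructor of a $T_{\pi(j)}$ value; and (iii) a one-line ``query'' function whose body is a single annotated binding \cybertron{let y: Tq = fi(1);} for chosen indices $i,q$. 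Because types are compared by exact equality (no implicit conversion), this program has \emph{exactly one} type error, at the designated token of the binding, precisely when $\pi(i)\neq q$, and no error at all when $\pi(i)=q$; hence the type-analysis output at that token is exactly the bit $\mathbbm{1}\{\pi(i)=q\}$. Every block is a flat list of shallow declarations, so the AST depth is $O(1)$, and every function body has an immediately-known type, so the type-inference depth is $O(1)$; thus the program lies in \miniHuskyAnnotated{D}{H} for constant $D,H$, and its length is $L=\Theta(m)$ when identifiers are single tokens and $L=\Theta(m\log m)$ otherwise.

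The lower bound is then a pigeonhole argument over the RNN state reached after the common prefix, i.e.\ after blocks (i) and (ii). This prefix is determined by $\pi$ alone, and an RNN's output at the query token of block (iii) is a deterministic function of (a) the state after the prefix and (b) the query suffix, which we are free to choose. If the set of reachable post-prefix states had size strictly less than $m^m$, then two distinct maps $\pi\neq\pi'$ would yield the same state; picking $i$ with $\pi(i)\neq\pi'(i)$ and setting $q=\pi(i)$, the identical query suffix \cybertron{let y: Tq = fi(1);} would be forced to be simultaneously error-free (for $\pi$) and a type error at the designated token (for $\pi'$), contradicting determinism. Hence the RNN must reach at least $m^m$ distinct states after reading the prefix, so it needs $\Omega(m\log m)=\Omega(L)$ bits of memory; under the fixed-precision convention of \citet{Wen2024RNNsAN} this is a state-space dimension of $\Omega(L)$, as claimed. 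Equivalently, one may package blocks (i)--(iii) as a literal associative-recall instance and invoke the bound of \citet{Wen2024RNNsAN} as a black box.

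The main obstacle I expect is not the counting step but verifying that the construction respects \textbf{Mini-Husky}'s actual syntax and semantics under the clean-code constraints: that the flat declaration blocks really parse to $O(1)$-depth ASTs; that symbol resolution inside block (iii) resolves \cybertron{fi} and $T_q$ to the intended earlier declarations, so that shadowing and scoping do not interfere; and, most delicately, that type checking emits a \cybertron{TypeError} at the intended token and \emph{nowhere else}, so the one-bit readout is clean and an RNN cannot ``cheat'' using incidental errors. A smaller point is making the reduction tight: with single-token identifiers the binary-valued variant (each $T_{\pi(j)}$ drawn from two fixed types) already gives $L=\Theta(m)$, while the $\pi$-valued variant above absorbs the $\Theta(\log m)$ identifier-spelling cost so the $\Omega(L)$ bound holds in either tokenization model; and I will fix the precise meaning of ``state space size'' (number of memory cells at $O(1)$ bits of precision) to match \citet{Wen2024RNNsAN} so that the state count translates into the stated bound.
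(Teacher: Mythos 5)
Your proof is correct and takes essentially the same route as the paper's: both reduce type checking to an associative-recall instance --- a prefix of declarations whose function type signatures encode an arbitrary map, followed by a single query whose type-error bit reveals one entry --- and then apply the memory/counting argument of \citet{Wen2024RNNsAN}. The paper's version is terser (it uses parameter types rather than return types as the recalled values and simply asserts that the state must store all $\Omega(L)$ signatures), whereas you spell out the fooling-set/pigeonhole step and the tokenization bookkeeping explicitly; the underlying argument is the same.
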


Theorem~\ref{thm:type_checking} and Theorem~\ref{thm:rnn} give a clear separation between transformers and RNNs in terms of the compilation capability. 
Specifically, if the input codes satisfy $D,H \ll L$, which is typically the case under the \emph{clean code principle}~\citep{10.5555/1388398}, then transformers at most need $O\left((\log L + D+ H)\right)$ number of parameters, which is significantly smaller what RNNs requires, $\Omega(L)$.

\subsection{Empirical Comparison between Transformers and RNNs}

We validate our theoretical results by conducting experiments on synthetic data.


\textbf{Dataset construction.}
The synthetic dataset is parameterized by $n$ (the \underline{n}umber of data pieces), $f$ (the number of \underline{f}unctions in a data piece), \rebut{$a$ (the maximum number of \underline{a}rguments of any function), $c$ (the maximum number of function \underline{c}alls involved in any function),} $d$ (the minimum \underline{d}istance between the declaration and the first call of a function, as well as the minimum distance between its consecutive calls), $v$ (the probability of using a \underline{v}ariable in a function call), and $e$ (the \underline{e}rror rate of using an incorrect type in a function call).

The names of the functions are drawn randomly \rebut{and uniquely} from \rebut{a list of English words}.
For \rebut{each of the arguments of any function, its} symbol is randomly drawn from \rebut{another list of English words} and \rebut{its} type is randomly drawn from $\{$\texttt{Int}, \texttt{Float}, \texttt{Bool}$\}$.
\rebut{All the} called functions must be declared and not called by at least $d$ functions ahead of the current one.
\emph{For each argument of any function call,} with probability $v$, the argument variable of the enclosing function is used regardless of its type, with probability $(1 - v) (1 - e)$, a literal of the correct type is used, and with probability $(1 - v) e$ an incorrect type literal is used.
For integers, the literals are from $\{0, 1, \ldots, 99\}$; for floats, the literals are from $\{0.1, 1.1, \ldots, 99.1\}$; for booleans, the literals are from $\{$\texttt{true}, \texttt{false}$\}$.
\rebut{The training dataset and evaluation dataset use \emph{\textbf{disjoint}} lists for function names and argument symbols.}

Below is a data piece with $f=10, \rebut{a=5, c=5,}\ d=3, v=0.2, e=0.5$:
\begin{lstlisting}[language=Cybertron]
fn rename_file ( i : Float , sum : Float ) { }
fn parse_data ( list : Int , value : Bool , stack : Float , k : Float , msg : Float ) { }
fn parse_json ( position : Bool ) { }
fn find_by_id ( error : Float ) { rename_file ( 60.1 , 94.1 ) ; }
fn merge ( group : Int , table : Float , error : Bool , count : Int ) { parse_data ( 7 , false , 49.1 , 33.1 , 4.1 ) ; }
fn log_info ( val : Bool , m : Bool , xml : Float , path : Float ) { parse_json ( true ) ; }
fn process ( function : Int , value : Float , keys : Bool ) { find_by_id ( 88.1 ) ; rename_file ( value , 40.1 ) ; }
fn validate_response ( end : Int , z : Float , max : Bool ) { merge ( 1 , true , 27.1 , 72 ) ; parse_data ( 11 , 85 , 35.1 , 14.1 , true ) ; }
fn print_message ( algorithm : Float ) { parse_json ( 92 ) ; log_info ( true , algorithm , false , 26.1 ) ; }
fn print_help ( max : Bool , tree : Int , method : Int , item : Bool ) { process ( 25 , 28 , false ) ; rename_file ( 48 , 80.1 ) ; }
\end{lstlisting}

\textbf{Model and training.}
We use customized BERT models~\citep{devlin2019bertpretrainingdeepbidirectional} and bidirectional RNN models~\citep{schuster1997bidirectional} in our experiments. To control the model size (i.e., the number of trainable parameters), we adjust only the hidden sizes while keeping other hyperparameters constant. Detailed model specifications can be found in Table~\ref{tab:model_spec}. For both transformers and RNNs, we use the hyperparameters listed in Table~\ref{tab:hparam} in Appendix~\ref{sec:additional-experiments} during the training process.

\textbf{Results.}
We experimented with multiple combinations of models (Table\,\ref{tab:model_spec}) and datasets (Table\,\ref{tab:hparam}).
For each combination, we conducted independent runs using a fixed set of $k=5$ random seeds.
When plotting the figures, we took the top $t=5$ \rebut{training/}evaluation losses/accuracies from each run and averaged over all the $k \times t$ values.
We plotted separate figures for each dataset and separate sub-figures for each metric.
In each sub-figure, the $x$-axis represents the number of trainable parameters, and the $y$-axis represents the averaged values.
Results are shown in Figure\,\ref{fig:train_acc}.
They demonstrate that customized BERT models are able to perform better at type checking than bidirectional RNN models when both scale up, corroborating our theories.
Other results are in Appendix\,\ref{sec:additional-experiments}.

\begin{figure}
  \centering
  \includegraphics[width=0.245\linewidth]{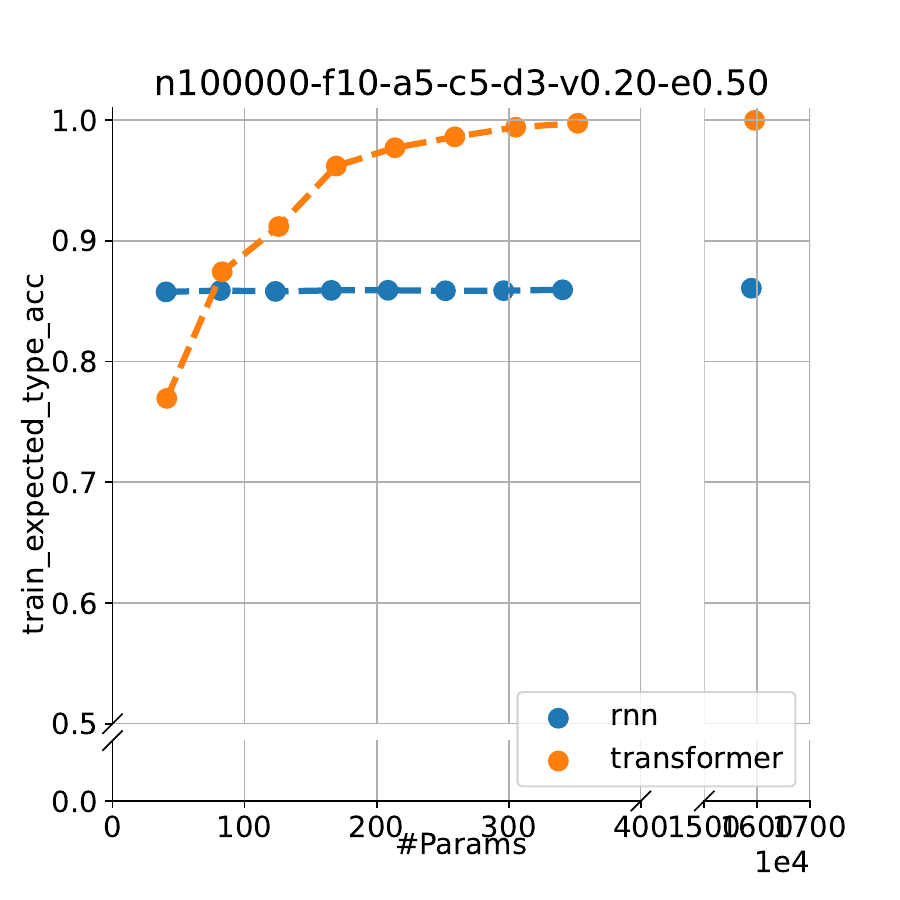}
  \includegraphics[width=0.245\linewidth]{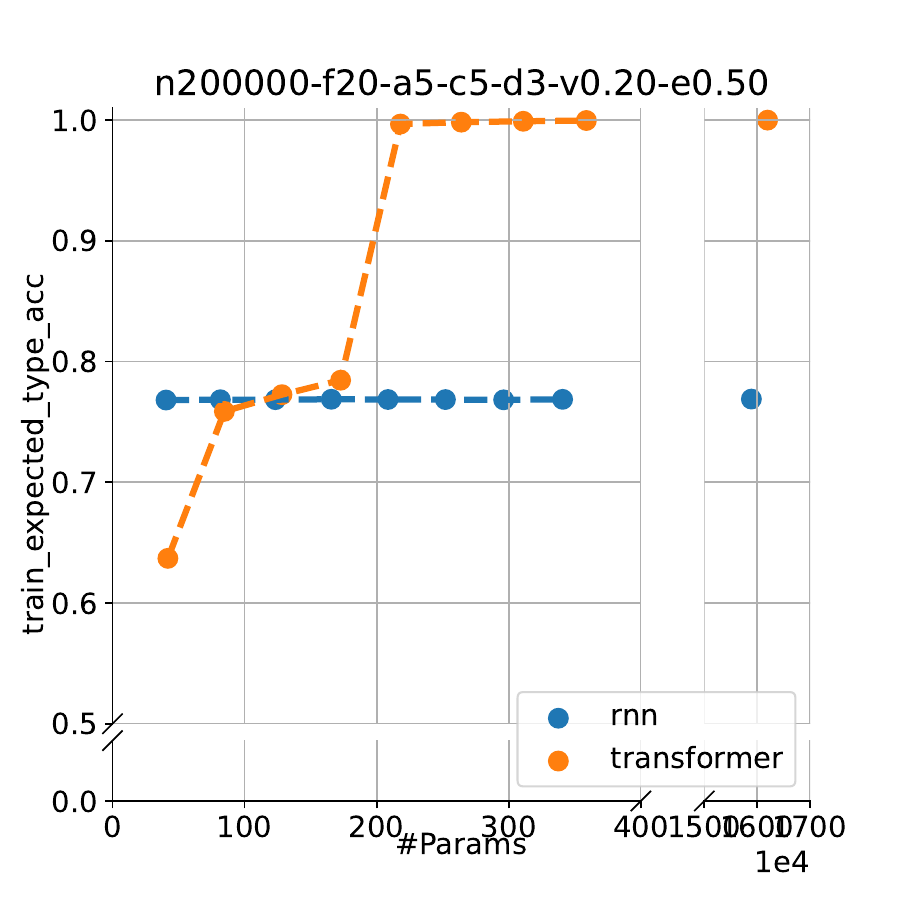}
  \includegraphics[width=0.245\linewidth]{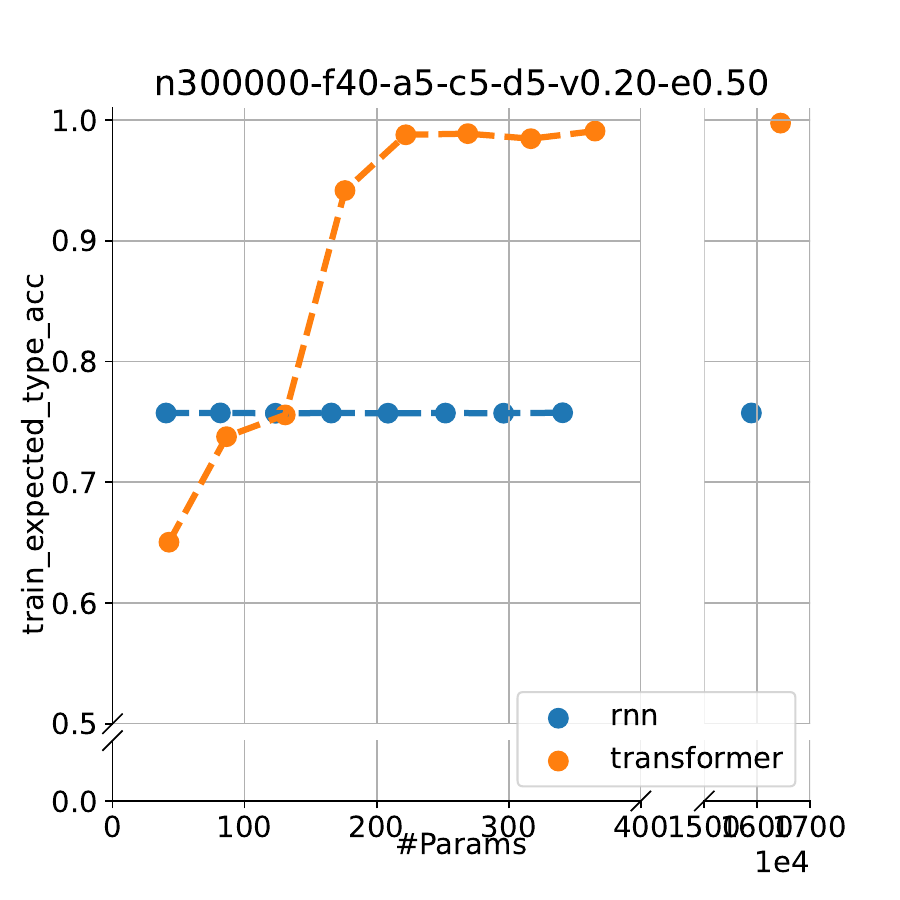}
  \includegraphics[width=0.245\linewidth]{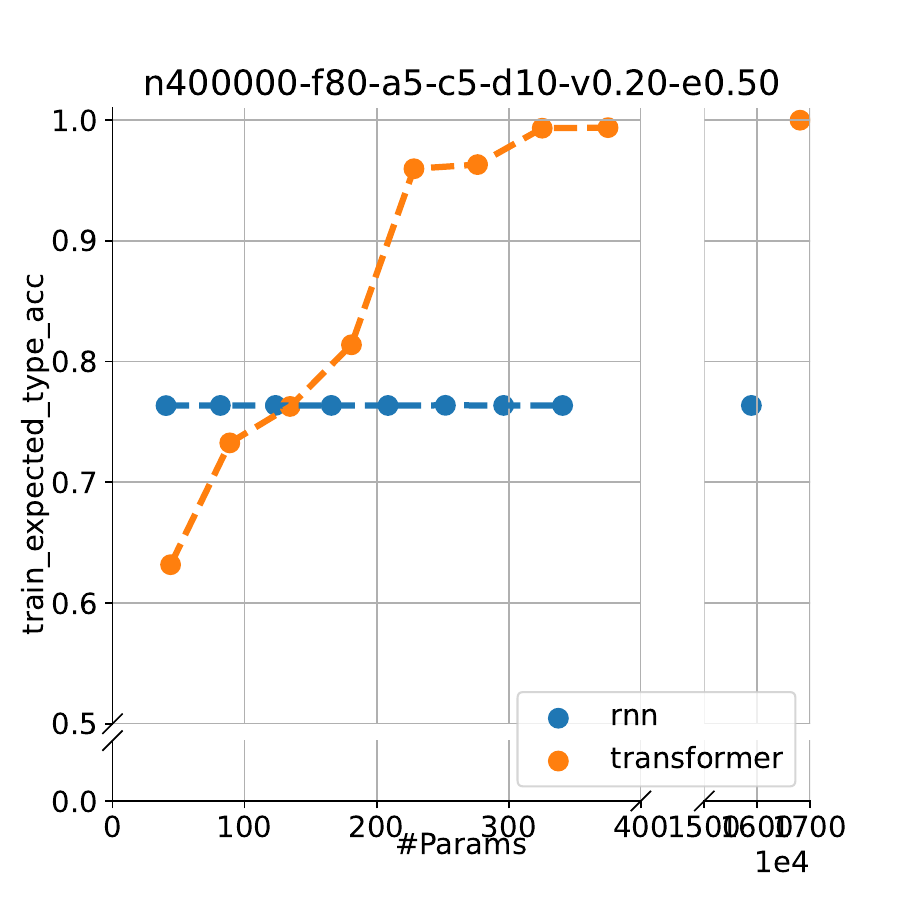}
  \caption{Figures depicting the accuracy of the expected type (see Section\,\ref{sec:type_checking}) across different models, measured by their number of trainable parameters, when trained on various datasets. \rebut{Training accuracies are better indicators of the expressive power of the models (instead of generalizability) than evaluation accuracies. We also report evaluation accuracies in Appendix\,\ref{sec:additional-experiments}.}}
  \label{fig:train_acc}
\end{figure}

\section{Conclusion}

We demonstrated that transformers can efficiently handle \rebut{a number of syntactic and semantic analysis tasks} in C-like languages, using Cybertron to prove their capacity for tasks like AST generation, symbol resolution, and type analysis. 
We show a theoretical advantage of transformers over RNNs, particularly in their ability to manage long-range dependencies with logarithmic parameter scaling. In a sense, transformers have the right inductive bias for language tasks. 
Our experiments confirmed these theoretical insights, showing strong performance on synthetic and real datasets, underscoring the expressiveness and efficiency of transformers in sequence-based learning.

\section{Acknowledgement}

Xiyu Zhai acknowledges the support of NSF through awards DMS-2031883 and PHY-2019786.
Liao Zhang acknowledges the ERC PoC project \textit{FormalWeb3} no. 101156734 and the University of Innsbruck doctoral scholarship \textit{promotion of young talent}.

\bibliography{main,references}
\bibliographystyle{plainnat}

\appendix
\section{Tree}

Trees are one of the most fundamental objects to study in computer science. However, its exact definition differs for different domains. The trees used in ``abstract syntax tree'' (Section~\ref{sec:cfg}) is more restrictive than that in mathematics, which we call ``typed tree'', so that one can define recursive computation more rigorously.

\label{sec:tree}

\subsection{What are Trees}

Trees in data structures have slightly additional meaning as compared to trees in mathematics. In this paper, all trees are trees in data structures. For clarity, we lay down the precise definition of trees in data structure.

\begin{definition}[Tree]
    \label{def:tree}
    A tree $T$ is a set of nodes storing elements such that the nodes have a parent-child relationship that satisfies the following:

    \begin{itemize}
        \item If $T$ is not empty, it has a special node called the \textbf{root} that has no parent.
        \item Each node $v$ of $T$ other than the root has a unique parent node $w$; each node with parent $w$ is a child of $w$.
    \end{itemize}

    We denote the nodes of $T$ as $\treeNodes{T}$.
\end{definition}

\begin{definition}[Recursive Definition of a Tree]
    A tree $T$ is either empty or consists of a node $r$ (the root) and a possibly empty set of trees whose roots are the children of $r$.
\end{definition}

However, the above definition is too permissive. We shall define a typed version as follows:

\begin{definition}[Typed Tree] A tree type consists of a set of values $V$ and a set of relationships $C\subseteq V\times \mathbb{N}$, and a typed tree under this type is any tree $T$ such that for each node, a value $v\in V$ is assigned such that $(v,n)\in C$ where $n$ is the number of the children of the node.
\end{definition}

\textbf{All trees in this paper are typed}.

\begin{example}[Abstract syntax tree (AST) as Typed Tree]
    Consider an AST for a simple arithmetic expression. Let the set of values \( V \) be:
    \[
        V = \{\codeline{num}, \codeline{add}, \codeline{sub}, \codeline{mul}, \codeline{div}\}
    \]
    and the set of relationships \( C \subseteq V \times \mathbb{N} \) specify the allowed number of children for each value:
    \[
        C = \{(\codeline{num}, 0), (\codeline{add}, 2), (\codeline{sub}, 2), (\codeline{mul}, 2), (\codeline{div}, 2)\}
    \]

    An example AST for the arithmetic expression \((3 + 5) \times 2\) is the following typed tree:

    \begin{itemize}
        \item The root node is labeled \codeline{mul} (multiplication), and it has two children.
              \begin{itemize}
                  \item The left child is labeled \codeline{add} (addition), and it has two children:
                        \begin{itemize}
                            \item The left child of \codeline{add} is labeled \codeline{num} with the value 3.
                            \item The right child of \codeline{add} is labeled \codeline{num} with the value 5.
                        \end{itemize}
                  \item The right child of \codeline{mul} is labeled \codeline{num} with the value 2.
              \end{itemize}
    \end{itemize}

    This tree conforms to the typing rules because:
    \begin{itemize}
        \item \codeline{num} has 0 children,
        \item \codeline{add} has 2 children,
        \item \codeline{mul} has 2 children,
    \end{itemize}
    all of which satisfy the relationships in \( C \).
\end{example}

\subsection{Representations of Trees}

It's also important to talk about tree representations. We are studying transformers, and then it's necessary to represent large trees as a sequence, otherwise the model dimension is not large enough to contain the information locally. Let's first talk about the classical \textbf{arena pattern} used in system programming for representing trees and we shall slightly adapt it to our use case for studying transformers.

\paragraph{Arena Pattern.}To represent trees efficiently in memory, especially when trees are frequently modified (such as insertions or deletions of nodes), an arena pattern is often used. The arena pattern provides a way to manage memory allocation for tree structures, allowing for efficient memory usage and avoiding fragmentation. Here's how the arena pattern works in the context of tree representation:

\begin{definition}[Arena Pattern in Tree Representation]
    In the arena pattern, a tree is represented by an array (or vector) of nodes, called an \textbf{arena}. Each node in the arena contains:

    \begin{itemize}
        \item An element or value stored in the node.
        \item References (often indices or pointers) to the node's children and possibly to its parent.
    \end{itemize}

    \noindent The key characteristics of the arena pattern are:

    \begin{itemize}
        \item Memory Contiguity: All nodes are stored contiguously in memory within the arena, which allows for efficient traversal and modification operations.
        \item Fixed Capacity: The arena has a fixed or dynamically resizable capacity, and nodes are added sequentially. This avoids the overhead of allocating individual nodes on the heap.
        \item Index-based References: Instead of using pointers, the nodes reference each other using indices within the array, which simplifies memory management and can lead to cache-friendly operations.
        \item Efficient Allocation and Deallocation: Nodes can be efficiently allocated and deallocated within the arena without requiring complex memory management techniques like garbage collection or reference counting.
    \end{itemize}

\end{definition}

The arena pattern is particularly useful in scenarios where the structure of the tree is highly dynamic or when performance is critical. It allows for a simple and efficient way to manage and traverse trees without the typical overhead associated with more traditional pointer-based tree representations.

\paragraph{Adaptations for Transformers} For transformers, inputs, intermediate values and outputs are all sequences. So the trees are represented as sequences of nodes with node reference representable by token position encoding. Based on the representation, transformers will be able to perform various kinds of recursive tree operations, as we shall see.

\section{Context Free Grammar}
\label{sec:cfg}

In this section, we lay down the well-known definitions of context free grammar, derivations, and parse trees.
To define an abstract syntax tree (AST), one commonly resorts to generation rules, such as context-free grammars (CFG)~\citep{alfred2007compilers} and parsing expression grammars (PEG)~\citep{ford2004parsing}.
In most cases, just generation rules themselves are not sufficient to define properly a language. Many practical languages like C and C++ cannot be solely described by these rules~\citep{david2009language} so that they can reuse the limited set of special characters on the keyboard. Furthermore, semantic constraints like type correctness are intrinsically contextual and cannot be expressed through CFG or similar rules. However, CFG or other rules provide a valuable construct, the AST. With an AST, one can refine the language definition by putting restrictions on the syntax tree through tree operations. Effectively, a language can be seen as a subset of trees, not as a subset of strings. Semantic analysis like symbol resolution and type checking can be described effectively based on trees. In short, CFG standalone is hardly practical but it provides a useful and clear foundation to build definitions upon.

A context-free grammar (CFG) is defined as a 4-tuple $G = (V, \Sigma, R, S)$, where:

\begin{itemize}
    \item $V$ is a finite set of variables (non-terminal symbols).
    \item $\Sigma$ is a finite set of terminal symbols, disjoint from $V$. Sequences of $\Sigma$, i.e., elements of $\Sigma^*$ are called (literal) strings.
    \item $R\subset V\times (V\cup \Sigma)^*$ is a finite set of production rules, where each rule is of the form $A \rightarrow \alpha$, with $A \in V$ and $\alpha \in (V \cup \Sigma)^*$.
    \item $S \in V$ is the start symbol.
\end{itemize}

Given a context-free grammar $G = (V, \Sigma, R, S)$, we define derivation as follows:

\begin{itemize}
    \item A \textbf{derivation} is a sequence of steps where, starting from the start symbol $S$, each step replaces a non-terminal with the right-hand side of a production rule.

    \item Formally, we write $u \Rightarrow v$ if $u = \alpha A \beta$ and $v = \alpha \gamma \beta$ for some production $A \rightarrow \gamma$ in $R$, where $\alpha, \beta \in (V \cup \Sigma)^*$ and $A \in V$.

    \item A \textbf{leftmost derivation} is a derivation in which, at each step, the leftmost non-terminal is replaced.

    \item A \textbf{rightmost derivation} is a derivation in which, at each step, the rightmost non-terminal is replaced.

    \item We denote a \textbf{derivation sequence} as $S \Rightarrow^* w$, where $w \in \Sigma^*$ is a string derived from $S$ in zero or more steps.
\end{itemize}

A \textbf{parse tree} (or \textbf{syntax tree}) for a context-free grammar $G = (V, \Sigma, R, S)$ is a tree that satisfies the following conditions:

\begin{itemize}
    \item The root of the tree is labeled with the start symbol $S$.
    \item Each leaf of the tree is labeled with a terminal symbol from $\Sigma$ or the empty string $\epsilon$.
    \item Each internal node of the tree is labeled with a non-terminal symbol from $V$.
    \item If an internal node is labeled with a non-terminal $A$ and has children labeled with $X_1, X_2, \dots, X_n$, then there is a production rule $A \rightarrow X_1 X_2 \dots X_n$ in $R$.
    \item The yield of the parse tree, which is the concatenation of the labels of the leaves (in left-to-right order), forms a string in $\Sigma^*$ that is derived from the start symbol $S$.
\end{itemize}

\section{Neural Architectures}
\label{app-sec:neural-architecture}

In this section, we lay down the precise mathematical definitions of neural architectures we are going to use in our proof.

\begin{definition}[Single-Layer Fully Connected Network with $4 \times$ Intermediate Space]\label{def:single-layer-fcn}
    \mbox{}

    Given model dimension $\dmodel$, a single-layer feed-forward network with an intermediate space expanded to $4$ times the input dimension is a function from $\mathbb{R}^{\dmodel}$ to $\mathbb{R}^{\dmodel}$, denoted by $\fFcn$ and defined as follows:

    given $X \in \mathbb{R}^{\dmodel}$, weights $W_1 \in \mathbb{R}^{4\dmodel \times \dmodel}$, $W_2 \in \mathbb{R}^{\dmodel \times 4\dmodel}$, and biases $B_1 \in \mathbb{R}^{4\dmodel}$, $B_2 \in \mathbb{R}^{\dmodel}$, the output $\fFcn(X)$ is computed as:

    \[
        \fFcn(X) = W_2 \reluItself(W_1 X + B_1) + B_2,
    \]

    where $\reluItself: \mathbb{R}^{4\dmodel} \rightarrow \mathbb{R}^{4\dmodel}$ is the Rectified Linear Unit activation function applied element-wise, defined by:

    \[
        \reluItself(z) = {\left( \max(z_1, 0), \max(z_2, 0), \dots, \max(z_{4\dmodel}, 0) \right)}^\top,
    \]

    for $z = (z_1, z_2, \dots, z_{4\dmodel})^\top \in \mathbb{R}^{4\dmodel}$.

\end{definition}

The choice of a $4 \times$ intermediate space is common in practice, often used in Transformer architectures. Interestingly, this empirical choice turns out to have a useful theoretical property: it allows the network to express any affine transformation, as we'll see in the following proposition.

\begin{proposition}
    A single-layer fully connected network with a $4 \times$ intermediate space, as defined previously, can express any affine map from $\mathbb{R}^{\dmodel}$ to $\mathbb{R}^{\dmodel}$.
\end{proposition}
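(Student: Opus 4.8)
The plan is to exploit the elementary identity $t = \max(t,0) - \max(-t,0)$, valid for every real $t$: a single ReLU layer can reconstruct the identity map on $\mathbb{R}$ coordinatewise, and therefore any affine map, using only $2\dmodel$ of the $4\dmodel$ available hidden units (the remaining ones are zeroed out).

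Concretely, I would fix an arbitrary affine map $X \mapsto A X + c$ with $A \in \mathbb{R}^{\dmodel \times \dmodel}$ and $c \in \mathbb{R}^{\dmodel}$, and define the first-layer weights $W_1 \in \mathbb{R}^{4\dmodel \times \dmodel}$ in block form so that its first $\dmodel$ rows equal $A$, its next $\dmodel$ rows equal $-A$, and its last $2\dmodel$ rows are zero; likewise set $B_1 \in \mathbb{R}^{4\dmodel}$ to be $c$ in its first block, $-c$ in its second block, and zero elsewhere. Then $W_1 X + B_1$ stacks $AX + c$, $-(AX+c)$, and the zero vector, so applying $\reluItself$ coordinatewise yields the hidden activation $\bigl(\relu{AX+c},\ \relu{-(AX+c)},\ 0\bigr)$. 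Choosing the second-layer weight $W_2 \in \mathbb{R}^{\dmodel \times 4\dmodel}$ to be the block row matrix $[\,I_{\dmodel},\ -I_{\dmodel},\ 0,\ 0\,]$ and the bias $B_2 = 0$, we get $W_2 \reluItself(W_1 X + B_1) = \relu{AX+c} - \relu{-(AX+c)}$, which by the positive/negative-part identity applied in each coordinate equals $AX + c$ exactly. Hence $\fFcn(X) = AX + c$ for all $X \in \mathbb{R}^{\dmodel}$, proving that every affine map is representable.

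There is essentially no obstacle: the only things to verify are a trivial dimension count (the construction uses $2\dmodel \le 4\dmodel$ hidden units, with the surplus killed by zero weights and biases) and the one-line coordinatewise computation $t = \max(t,0) - \max(-t,0)$ that recovers the identity. I would highlight this decomposition as the step doing all the work, and remark in passing that the same argument already goes through with $2\dmodel$ intermediate units, so the empirically motivated $4\times$ expansion is comfortably more than sufficient.
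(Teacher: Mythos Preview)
Your proof is correct and uses the same key identity $t = \relu{t} - \relu{-t}$ as the paper. The only cosmetic difference is that you push the affine map $A,c$ into the first layer ($W_1,B_1$) and leave $W_2$ as $[I,-I,0,0]$, whereas the paper does the reverse, taking $W_1 = \begin{pmatrix} I \\ -I \\ 0 \\ 0 \end{pmatrix}$, $B_1=0$ and placing $A,b$ in $W_2,B_2$; both constructions are valid and equally short.
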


\begin{proof}
    Let $f : \mathbb{R}^{\dmodel} \to \mathbb{R}^{\dmodel}$ be any affine map given by $f(X) = A X + b$, where $A \in \mathbb{R}^{\dmodel \times \dmodel}$ and $b \in \mathbb{R}^{\dmodel}$. We will construct weights $W_1 \in \mathbb{R}^{4\dmodel \times \dmodel}$, $W_2 \in \mathbb{R}^{\dmodel \times 4\dmodel}$ and biases $B_1 \in \mathbb{R}^{4\dmodel}$, $B_2 \in \mathbb{R}^{\dmodel}$ such that $\fFcn(X) = f(X)$ for all $X \in \mathbb{R}^{\dmodel}$.

    Define:
    \[
        W_1 = \begin{pmatrix} I_{\dmodel} \\ -I_{\dmodel} \\ 0 \\ 0 \end{pmatrix}, \quad B_1 = 0 \in \mathbb{R}^{4\dmodel},
    \]
    where $I_{\dmodel}$ is the $\dmodel \times \dmodel$ identity matrix, and $0$ represents zero matrices of appropriate dimensions. Set:
    \[
        W_2 = \left( A \quad -A \quad 0 \quad 0 \right), \quad B_2 = b.
    \]

    For any $X \in \mathbb{R}^{\dmodel}$, compute:
    \[
        \begin{aligned}
            \fFcn(X) & = W_2 \, \reluItself(W_1 X + B_1) + B_2                                                                                      \\
                     & = \left( A \quad -A \quad 0 \quad 0 \right) \, \reluItself\left( \begin{pmatrix} X \\ -X \\ 0 \\ 0 \end{pmatrix} \right) + b \\
                     & = \left( A \quad -A \quad 0 \quad 0 \right) \begin{pmatrix} \relu{X} \\ \relu{-X} \\ 0 \\ 0 \end{pmatrix} + b                \\
                     & = A \, \relu{X} - A \, \relu{-X} + b.
        \end{aligned}
    \]

    Note that $\relu{X} - \relu{-X} = X$, we have:
    \[
        \fFcn(X) = A X + b = f(X).
    \]
    Therefore, the network can represent any affine map from $\mathbb{R}^{\dmodel}$ to $\mathbb{R}^{\dmodel}$.
\end{proof}

\begin{definition}[Single-Layer Feed Forward Network with $4 \times$ Intermediate Space]
    Given model dimension $\dmodel$ and position set $\Pos$, the Transformer Feed Forward Network is a function $\fFfn: \mathbb{R}^{\Pos \times \dmodel} \rightarrow \mathbb{R}^{\Pos \times \dmodel}$ defined as follows:

    For an input $X \in \mathbb{R}^{\Pos \times \dmodel}$, the output $\fFfn(X)$ is computed by applying the single-layer feed-forward network $\fFcn$ (as defined previously) independently to each position:

    \[
        \fFfn(X)_p = \fFcn(X_p) \quad \forall p \in \Pos
    \]

    where $X_p \in \mathbb{R}^{\dmodel}$ is the $p$-th row of $X$, corresponding to the $p$-th position in the input sequence.
\end{definition}

Next, we define the attention mechanism, which is a key component of the Transformer architecture. This definition presents a hard attention layer with a simplified position encoding. We use hard attention here for theoretical simplicity, as it represents a discrete limit of the more commonly used soft attention mechanism. Hard attention forces the model to make a clear choice about which inputs to focus on, which can simplify analysis and provide clearer insights into the model's behavior. It can be viewed as the limiting case of soft attention as the temperature approaches zero, where the softmax operation becomes increasingly peaked and eventually converges to a one-hot vector.

\begin{definition}
    [Hard Attention Layer with Simplified Position Encoding]
    Given model dimension $\dmodel$, number of heads $H$, and number of layers $L$, a transformer with simplified position encoding and hard attention is defined to be a function $f_{\text{attn}}: \RealNumbers^{\Pos\times \dmodel}\rightarrow\RealNumbers^{\Pos\times \dmodel}$ defined by
    \begin{equation}
        \forall \p\in\Pos, {f_{\text{attn}}(X)}_\p := W_O\text{Concat}\left({\text{Attn}^{(1)}(X)}_p, \ldots, {\text{Attn}^{(H)}(X)}_p\right),
    \end{equation}
    where the $h$th attention head uses hard attention, defined as:
    \begin{equation}
        {\text{Attn}^{(h)}(X)}_p :=
        \frac{1}{|S_{\p}|}
        \sum_{\p' \in S_{\p}}
        V^{(h)}_{\p'},
    \end{equation}
    where
    \begin{itemize}
        \item $W_O \in \RealNumbers^{\dmodel \times \dmodel}$ are trainable parameters;
        \item
              $S_{\p}=\argmax_{\p'\in \Pos} \left( {Q^{(h)}_{\p}}^\top K^{(h)}_{\p'} + \lambda^{(h)\top}\Psi_{\p'-\p} \right)$ with \(Q_{\p}^{(h)}, K_{\p}^{(h)}, V_{\p}^{(h)}, \lambda^{(h)}, \Psi_q\) defined by
              \begin{itemize}
                  \item $Q_{\p}^{(h)} = W_Q^{(h)} X_\p, K_\p^{(h)} = W_K^{(h)} X_\p$ are vectors of dimension $\dmodel/H$, with trainable parameters $W_Q^{(h)}, W_K^{(h)} \in\RealNumbers^{(\dmodel/H)\times \dmodel}$;
                  \item $V_\p^{(h)} = W_V^{(h)} X_\p$ are vectors of dimension $\dmodel/H$, linear transformations of $X_p$ with trainable parameters $W_V^{(h)} \in\RealNumbers^{(\dmodel/H)\times \dmodel}$;
                  \item $\lambda^{(h)} \in \RealNumbers^2$ are constants depending only on head count $h$;
                  \item $\Psi_{q} \in \RealNumbers^2$ are 2-dimensional vectors depending on relative position $q$ but not on head count $h$. It is explicitly defined as
                        \begin{equation}
                            \Psi_q = \twoVec{q}{1_{q>0}}.
                        \end{equation}
                        This formulation allows for both past and future masking.
              \end{itemize}
    \end{itemize}

\end{definition}

Having defined the basic components, we can now proceed to describe the full Transformer architecture. This definition builds upon the previously introduced concepts, incorporating them into a complete model structure.

\begin{definition}[Transformer]\label{def:transformer}
    A \textbf{Transformer} is a function \( \fTf: \mathbb{R}^{\Pos \times d_{\text{model}}} \rightarrow \mathbb{R}^{\Pos \times d_{\text{model}}} \) that maps an input sequence to an output sequence through a series of layers, each consisting of a multi-head attention mechanism and a position-wise feed-forward network (MLP).

    Given:
    \begin{itemize}
        \item Input sequence \( X \in \mathbb{R}^{\Pos \times d_{\text{model}}} \), where \( \Pos \) is the set of positions and \( d_{\text{model}} \) is the model dimension.
        \item Number of layers \( L \).
        \item Number of attention heads \( H \).
    \end{itemize}

    The Transformer computes the output \( Y = X^{(L)} \) through recursive application of attention and feed-forward layers:

    \begin{itemize}
        \item Initialization is given by:
              \[
                  X^{(0)} = X.
              \]
        \item For each layer \( l = 1, 2, \dots, L \):
              \begin{itemize}
                  \item Compute attention output:
                        \[ \hat{X}^{(l)} = X^{(l-1)} + \fAttn^{(l)}\left( X^{(l-1)} \right) \]
                  \item Compute feed-forward output:
                        \[ X^{(l)} = \hat{X}^{(l)} + \fFfn^{(l)}\left( \hat{X}^{(l)} \right) \]
              \end{itemize}
    \end{itemize}

    Here:
    \begin{itemize}
        \item \( f_{\text{attn}}^{(l)} \) are \textbf{hard attention layers with simplified position encoding} as previously defined. It operates on \( X^{(l-1)} \) and produces an output in \( \mathbb{R}^{\Pos \times d_{\text{model}}} \).
        \item \( \fFfn^{(l)} \) are \textbf{feed-forward networks with \( 4 \times \) intermediate space} as previously defined. It operates position-wise on \( \hat{X}^{(l)} \) and produces an output in \( \mathbb{R}^{\Pos \times d_{\text{model}}} \).
    \end{itemize}

\end{definition}

\begin{remark}
    For simplicity, we have omitted the Layer Normalization component typically present in Transformer architectures. This simplification allows us to focus on the core attention and feed-forward mechanisms while maintaining the essential structure of the Transformer.
\end{remark}

We use $\TransformerClass{\dmodel}{H}{L}$ to denote the set of transformers of model size $\dmodel$, number of heads $H$ and number of layers $L$ as functions from $\RealNumbers^{\dmodel*}$ to $\RealNumbers^{\dmodel*}$.

For purpose of proof, we shall also need residual multi-layer perceptron. Functions over local types are first represented by multi-layer perception, then by Proposition~\ref{prop:map-resmlp-is-transformer} applications of these functions over sequences can be representable by transformers. Residual multi-layer perceptron can be assembled through composition or computer graph, as we shall see.

Here's the definition of a residual MLP Network.

\begin{definition}[Residual Multi-Layer Perceptron]
    \label{def:residual-mlp}
    A \emph{Residual Multi-Layer Perceptron} (ResMLP) is a function \(\fResMlp: \mathbb{R}^{d_{\text{model}}} \to \mathbb{R}^{d_{\text{model}}}\) defined recursively by
    \[
        X^{(0)} = X, \quad X^{(l)} = X^{(l-1)} + \fFcn\left( X^{(l-1)} \right), \quad l = 1, 2, \dots, L, \fResMlp(X) = X^{(L)}
    \]
    where \( X \in \mathbb{R}^{d_{\text{model}}} \) is the input vector, \( L \) is the total number of layers, and \(\fFcn: \mathbb{R}^{d_{\text{model}}} \to \mathbb{R}^{d_{\text{model}}}\) is the \emph{Single-Layer Fully Connected Network with \( 4 \times \) Intermediate Space} as previously defined in Definition~\ref{def:single-layer-fcn}.
\end{definition}

We use $\ResMlpClass{\dmodel}{L} \subset {\mathbb{R}^{\dmodel}}^{\mathbb{R}^{\dmodel}}$ to represent the set of residual MLPs with dimension $\dmodel$ and $L$ layers, as defined in Definition~\ref{def:residual-mlp}.

The following proposition is quite basic. It demonstrates that any function representable by a ResMLP can be applied position-wise by a Transformer.

\begin{proposition}[Position-wise ResMLP Application is Representable by Transformers]\label{prop:map-resmlp-is-transformer}
    Let \( f: \mathbb{R}^{d_{\text{model}}} \to \mathbb{R}^{d_{\text{model}}} \) be a function representable by a Residual Multi-Layer Perceptron (ResMLP) as defined in Definition~\ref{def:residual-mlp}. Then the function \( F: \mathbb{R}^{\Pos \times d_{\text{model}}} \to \mathbb{R}^{\Pos \times d_{\text{model}}} \), defined by applying \( f \) position-wise,
    \[
        F(X)_p = f(X_p), \quad \forall p \in \Pos,
    \]
    is representable by a Transformer as defined in Definition~\ref{def:transformer}.
\end{proposition}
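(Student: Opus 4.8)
The plan is to show that a Transformer with attention heads that are effectively disabled (i.e., that act as the identity on the residual stream) reduces layerwise to a position-wise ResMLP, so that any ResMLP-representable $f$ can be embedded into a Transformer of the same depth. The key observation is structural: comparing Definition~\ref{def:transformer} with Definition~\ref{def:residual-mlp}, a Transformer layer computes $\hat X^{(l)} = X^{(l-1)} + \fAttn^{(l)}(X^{(l-1)})$ followed by $X^{(l)} = \hat X^{(l)} + \fFfn^{(l)}(\hat X^{(l)})$, whereas a ResMLP layer computes $X^{(l)} = X^{(l-1)} + \fFcn(X^{(l-1)})$. If we can choose the attention parameters so that $\fAttn^{(l)} \equiv 0$, then $\hat X^{(l)} = X^{(l-1)}$ and the Transformer layer becomes exactly $X^{(l)} = X^{(l-1)} + \fFfn^{(l)}(X^{(l-1)})$, which by the definition of $\fFfn$ acts position-wise as $X_p \mapsto X_p + \fFcn(X_p)$ — precisely a ResMLP layer applied in parallel across positions.

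First I would make the attention layer vanish. The output of a head is $W_O \,\text{Concat}(\text{Attn}^{(1)}(X)_p, \ldots, \text{Attn}^{(H)}(X)_p)$; setting the output projection $W_O = 0$ immediately forces $\fAttn^{(l)}(X) = 0$ for every input $X$ and every choice of the remaining head parameters (which can be taken arbitrarily, e.g.\ all zero). This requires no assumption on $\dmodel$, $H$, or $\Pos$. Second, given that $f = \fResMlp$ for some ResMLP with layers $\fFcn^{(1)}, \ldots, \fFcn^{(L)}$ and each $\fFcn^{(l)}$ having weights $W_1^{(l)}, W_2^{(l)}, B_1^{(l)}, B_2^{(l)}$, I would build an $L$-layer Transformer with the same $\dmodel$ and the same $\Pos$ (and, say, $H = 1$), whose $l$-th feed-forward block $\fFfn^{(l)}$ uses exactly those weights and whose $l$-th attention block has $W_O = 0$. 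Note the feed-forward network in Definition~\ref{def:transformer} uses a $4\times$ intermediate space, matching the $4\times$ intermediate space in Definition~\ref{def:single-layer-fcn}, so the parameters transfer directly with no padding needed.

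Third, I would verify by induction on $l$ that the Transformer's hidden state $X^{(l)}$ satisfies $X^{(l)}_p = Z^{(l)}$ where $Z^{(l)}$ is the ResMLP's hidden state after $l$ layers on input $X_p$. The base case $l=0$ is $X^{(0)}_p = X_p = Z^{(0)}$. For the inductive step, $\hat X^{(l)}_p = X^{(l-1)}_p + 0 = Z^{(l-1)}$, and then $X^{(l)}_p = \hat X^{(l)}_p + \fFfn^{(l)}(\hat X^{(l)})_p = Z^{(l-1)} + \fFcn^{(l)}(Z^{(l-1)}) = Z^{(l)}$, using that $\fFfn$ acts position-wise. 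At $l = L$ this gives $X^{(L)}_p = \fResMlp(X_p) = f(X_p) = F(X)_p$, which is the claim.

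Honestly, there is no serious obstacle here; the proposition is a bookkeeping lemma whose entire content is the alignment of the two recurrences once attention is zeroed out. The only point that deserves care is confirming that zeroing $W_O$ genuinely kills the attention contribution for \emph{all} inputs regardless of the (ill-defined-on-ties) hard-attention $\arg\max$, and that the residual-stream dimension and position set are preserved so that the constructed object is a legitimate member of $\TransformerClass{\dmodel}{1}{L}$; both are immediate from the definitions. I would keep the write-up to the explicit parameter choice plus the one-line induction.
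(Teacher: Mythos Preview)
Your proposal is correct and follows essentially the same approach as the paper: zero out the attention contribution so that each Transformer layer collapses to a position-wise ResMLP layer, then copy the ResMLP weights into the feed-forward blocks and verify the recursions coincide. Your choice of setting $W_O = 0$ is a slightly more explicit way to kill attention than the paper's phrasing (``setting the attention weights to zero''), and your explicit induction and remark about the hard-attention $\arg\max$ being irrelevant once $W_O=0$ are nice touches, but the argument is the same.
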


\begin{proof}
    Since \( f \) is representable by a ResMLP with \( L \) layers, it is defined recursively by
    \[
        X^{(0)} = X, \quad X^{(l)} = X^{(l-1)} + \fFcn\left( X^{(l-1)} \right) \text{ for } l = 1, \dots, L,
    \]
    and
    \[ f(X)=X^{(L)},
    \]
    where \( \fFcn: \mathbb{R}^{d_{\text{model}}} \to \mathbb{R}^{d_{\text{model}}} \) is the Single-Layer Fully Connected Network with \( 4 \times \) intermediate space (Definition~\ref{def:single-layer-fcn}).

    We construct a Transformer with \( L \) layers such that, for any input sequence \( X \in \mathbb{R}^{\Pos \times d_{\text{model}}} \), the output \( Y = \fTf(X) \) satisfies \( Y_p = f(X_p) \) for all \( p \in \Pos \).

    To achieve this, we configure the Transformer so that the attention mechanism outputs zero at each layer. This can be done by setting the attention weights to zero, ensuring \( f_{\text{attn}}(X^{(l-1)}) = 0 \). Consequently, the update equations simplify to
    \[
        \hat{X}^{(l)} = X^{(l-1)}.
    \]
    We then set the feed-forward network \( \fFfn \) in the Transformer to have the same weights and biases as \( \fFcn \) in the ResMLP. The Transformer layer update becomes
    \[
        X^{(l)} = \hat{X}^{(l)} + \fFfn\left( \hat{X}^{(l)} \right) = X^{(l-1)} + \left(\fFcn\left( X^{(l-1)}_p \right)\right)_{p\in\Pos}.
    \]
    This recursion matches that of the ResMLP applied position-wise to \( X \). Therefore, after \( L \) layers, the Transformer output satisfies \( {\fTf(X)}_p = f(X_p) \) for all \( p \in \Pos \).

\end{proof}

\section{Cybertron}
\label{sec:cybertron}
\subsection{Introduction}

It's often difficult to directly prove that transformers or in general other low level forms of computation can express complicated algorithms and even complex software. There are way too many details as compared with typical mathematical proofs in machine learning theory. Hence, we propose the domain specific language Cybertron, where we can systematically prove transformers can express complicated algorithms and complex software with sufficient readability.

(Note: Cybertron is fundamentally different from Mini-Husky! Mini-Husky is the target language that we want transformers to analyze yet Cybertron is the domain specific language we use to prove that transformers can do that.)

RASP \citep{Weiss2021ThinkingLT} is quite close to Cybertron in terms of its design purpose. However, Cybertron is more powerful with advanced algebraic type system, global and local function constructions, etc. These additional mechanisms replace a significant part of the chore in proofs with automatic type checking. 
Thus, using Cybertron one can argue operations more complicated than simple algorithms can be simulated by transformers.

In the broader perspective of computer science, it's common to use code to prove things. In fact, in the formal verification community, mathematical proofs are viewed as a special case of a much larger universe of possible proof systems~\citep{Community2019TheLM, Massot2024TeachingMU} and constructive proof using code~\citep{harrison2014history,Farooqui2021TheCC,Jung2018IrisFT} is far more applicable with great soundness to the most general settings. In our case, our code doesn't serve as the whole proof but as an important part that contains most of the chores. However, it's totally possible to build a full-fledged formalized proof, despite it might be too costly for a single paper to do.

Essentially, Cybertron works as follows:

\begin{enumerate}
    \item one builds complicated functions from the composition of smaller functions. We have lemmas that prove that the composed functions are representable by certain architectures given that smaller functions are representable.

    \item there is an algebraic type system and every value is strongly typed and immutable, making it highly readable and easy to reason about;

    \item there is a distinction between global and local types/functions. Local types are those information hovered over a single token, and global types are sequences of local types, i.e., the collection of information over the whole token stream. One can define a global function by mapping a local function.

    \item there are many functions that is defined externally, requiring external explanation that they can be represented by transformers.
\end{enumerate}

It's implemented as a subset of the Rust programming language that can be understood as computation graphs over sequences. It can be executed for testing purposes and we've tested our implementation for a range of inputs and validated its correctness.

\subsection{Philosophy: Sequential Representation of Everything}

Before going through the full details, let's first talk about the fundamental philosophy behind transformer and Cybertron.

One of the fundamental reasons transformers can be easily adapted across multiple modalities, including NLP and CV, is their sequence-to-sequence operation. Everything can be represented as an arbitrary-length sequence. Texts are sequences of words, images are sequences of image patches, videos are sequences of spacetime patches \citep{sora}, and even graphs with sparse spatial structures can be represented as sequences of indexed nodes with additional information like parent node indices. Since inputs of various modalities can be cast into vector sequences, transformers can be applied to different domains without modifications to their architecture~\citep{dosovitskiy2020image}.

Interestingly, this sequence-based thinking is not new. \textbf{We've actually been representing everything as sequences since the very early days of computer science.} This has been the foundation of how data is stored and processed in computers. However, sequence representations were traditionally viewed as low-level and sometimes inefficient for practical use, prompting the development of higher-level abstractions for programming. The rise of transformers, with their scalable learning capabilities, encourages us to reconsider the significance of sequence-based representations.

From a systems perspective, viewing everything as a sequence is the foundational approach in computer science. Data in a computer is stored as a continuous stream of bits. Whether it's text, images, videos, or graphs, this data is represented in computer memory as an ordered sequence of bits. This aligns with how transformers handle different types of input by transforming them into sequences of vectors. Thus, the sequence-based operation of transformers mirrors the sequence-based representation of data in computer memory.

In essence, \textbf{if a data structure can be represented in computer memory using $N$ bits, it can be processed as a sequence of bits of length $N$}. This natural sequence representation in memory is consistent with how transformers process data, which makes them particularly flexible across different modalities. For example, recent state-of-the-art approaches \cite{Wu2024BeyondLM} show that transformers can even be trained directly on raw bits of data, further emphasizing this connection.

Moreover, this sequence-based viewpoint offers fresh insights when applied to the domain of programming, particularly in areas such as code generation and analysis. With tools like ChatGPT and Copilot being widely used by developers, the impact of transformers on programming workflows is growing. Understanding the complexity of algorithms and programs expressed in sequence form becomes an interesting area of study, as it reveals new possibilities for how we approach computation.

In comparison to traditional systems like CPUs and human cognition, transformers are highly parallel but shallow in their operation. A transformer processes data in a fixed number of layers, while a CPU executes $10^9$ cycles per second, and humans may take days to process information like reading a book. Transformers, therefore, represent a fundamentally different computational model that is worth studying further in the context of sequence-based operations.

\begin{example}
    \textbf{Image to Sequence:}
    In computer memory, an image is typically stored as a continuous block of pixel values, often in row-major order, where each pixel's value is encoded as bits in a sequence. When a transformer processes an image, it divides the image into patches (e.g., $16 \times 16$ pixels), and each patch is flattened into a vector of pixel values. This creates a sequence of patches, where each patch corresponds to a vector. The way transformers represent these patches as a sequence closely aligns with how the image data is sequentially stored in computer memory.
\end{example}

\begin{example}
    \textbf{Video to Sequence:}
    A video is stored in computer memory as a sequence of frames, where each frame is essentially an image. In a similar manner to images, these frames are stored as continuous pixel values. Transformers process videos by dividing the frames into spacetime patches, where each patch captures a small region of space over a short segment of time. These spacetime patches are flattened and arranged into a sequence for the transformer to process. The sequential ordering of these patches matches how video frames and pixel data are stored in computer memory.
\end{example}

\begin{example}
    \textbf{Graph to Sequence:}
    In computer memory, a graph is typically stored using an adjacency list or adjacency matrix, where nodes and their connections (edges) are stored sequentially in a data structure. Transformers process graphs by representing each node and its features as a vector, and then creating a sequence of these vectors. The sequence may also encode additional information, such as the parent-child relationships between nodes. This sequence-based representation of graphs is consistent with how graph data is stored in memory, where nodes and edges are arranged in a structured order.
\end{example}

\begin{example}
    \textbf{Text to Sequence:}
    Text is naturally stored in computer memory as a sequence of characters or words, where each character is encoded as a sequence of bits (such as ASCII or Unicode values). When transformers process text, they convert each word into a word embedding, which is a vector of real numbers. The sequence of word embeddings corresponds to the sequence of characters or words stored in memory. This natural sequential representation of text in both memory and transformers ensures efficient handling of linguistic data.
\end{example}

\begin{example}
    \textbf{AST (Abstract Syntax Tree) to Sequence:}
    In computer memory, an abstract syntax tree (AST) is typically stored as a tree-like structure, where each node represents a component of the program (e.g., operators, variables, or statements). However, this tree can be linearized into a sequence by traversing it in a specific order (e.g., pre-order traversal). When transformers process an AST, they convert it into a sequence of tokens, where each token corresponds to a node in the tree. This sequential representation of the tree in transformers mirrors how the tree is stored as nodes and edges in memory, and how it can be flattened into a linear sequence.
\end{example}

In conclusion, the sequence-based representation in transformers is not just a novel approach for deep learning but is deeply rooted in how data has been stored and processed in computer memory since the early days of computing. This consistency between how data is stored in memory and how transformers process data as sequences is a key factor in their adaptability across different domains.

\subsection{Local and Global Types}

Now we define the type foundation of Cybertron.

Types are fundamental objects for programming language theory. Here we use types to faciliate our proofs. Type signatures contain rich information that help guarantee correctness of the program.
Here, we choose a mathematical definition of types that is most convenient for the discussion in this paper. We introduce the notion of ``local type''. Roughly speaking, they are types without heap allocation and intended to be represented with $\mathbb{R}^{\dmodel}$ over a single token. For more complicated heap-allocated data structures like trees, graphs, etc., we shall represent them by sequences of these ``local type''s, which translate directly to vector sequences for transformers.

\begin{definition}
    [Local Type] Given a base space $\Basespace$ with at least two elements and a countably infinite identification space $\IdentificationSpace$, a local type $\tyvar$ over $\Basespace$ is a finite set $S$ together with an embedding $\phi$ from $S$ to $\Basespace^d$ and some fixed $d\in \mathbb{N}$ and an identification $\psi\in \Psi$.

    For convenience, define $\typeUnderlyingSet{\tyvar}= S$, $\typeEncodingDimension{\tyvar}=d$ and $\typeEncoding{\tyvar}=\phi$ and $\typeIdentification{\tyvar}=\psi$. And let $\basepointZero$, and $\basepointOne$ be two different elements of $\Basespace$. And $\Basespace^0:=\set{\basepointZero}$ so that $\abs{\Basespace^{i}}=\abs{\Basespace}^i$ holds for all $i\in \mathbb{N}$.
\end{definition}

\begin{remark}
    We need $B$ to be at least size $2$, so that $B^d$ can be as large as we want for $d$ large enough.For typical computer representation, we can take $\Basespace$ to be $\mathbf{2}=\{0,1\}$. For transformers or neural networks in general, we can take $\Basespace$ to be $\RealNumbers$ if we ignore precision. If we don't ignore precision, $\Basespace$ should be some finite set of floating point numbers. Thus, we shall keep the generality of $\Basespace$ throughout our discussion as all of these settings are important.
\end{remark}

\begin{remark}
    The role of identification $\typeIdentification{\tyvar}\in \IdentificationSpace$ is to make two types mathematically different even if they have the same underlying set, encoding dimension, and encoding. Basically we are establishing a specialized type of theory tailored towards the expressive power of transformers upon a foundation of intuitive set theory.
\end{remark}

\newcommand{\FinSet}[1]{{\left\llbracket{#1}\right\rrbracket}}
\begin{example}
    [Finite Set]

    In mathematics, we have the finite set denoted by $[n]=\set{0,1,\ldots, n-1}$. Here we use a slightly different notation for a type with underlying set $\FinSet{n}$ and some encoding.
\end{example}

\newcommand{\Position}[1]{{\mathop{\text{Pos}}\left({#1}\right)}}
\begin{example}
    [Position Encoding]

    Position encoding can be viewed as the encoding of a type denoted by $\Position{n}$ with the underlying set being $[n]$ where $n$ is the context length. Although it has the same underlying set as type $\FinSet{n}$, it is a different type for a different purpose and might have different encoding.

    If $\Basespace$ is $\RealNumbers$, then the position encoding can be understood as the encoding of type $\FinSet{L}$ where $L$ is the context length. More explicitly, we have
    \begin{equation}
        \phi(x)=(e^{\im L^{-i/d} x})_{i\in[d/2]},
    \end{equation}

    viewed as a $d$ dimensional $\RealNumbers$-vector through the natural conversion of $\mathbb{C}$ to $\mathbb{R}^2$, since $d$ is even.
\end{example}

It's too cumbersome to manually give the underlying set and the encoding. Here we introduce a classical concept from programming language theory~\citet{program2013homotopy} that makes it super easy to construct new types and make things fairly readable.

\begin{definition}
    [Finite Algebraic Data Type, Mathematical Forms]

    We define two ways of creating new types by combining existing types: \begin{enumerate}
        \item Sum type. Given types $\tyvar_i=(S_i,\phi_i,d_i)$ over base space $\Basespace$ for $i=1,\ldots, n$, we define the sum type of $\tyvar_i$, denoted by $\sum_{i=1}^n \tyvar_i$, as follows,
              \begin{itemize}
                  \item let $S=(\set{1}\times S_1)\sqcup\ldots\sqcup(\set{n}\times S_n)$;
                  \item let $d=\typeEncodingDimension{\FinSet{n}}+\max_{i=1}^n d_i$;
                  \item let $\phi: S\rightarrow \Basespace^d$ be such that
                        \begin{equation}
                            \forall i\in\FinSet{n}, s\in S_i, \phi((i,s))=\phi_{\FinSet{n}}(i)\oplus \phi_i(s)\in \Basespace^{\typeEncodingDimension{\FinSet{n}}+d_i}\subseteq \Basespace^{d}.
                        \end{equation}

                        Note that $\abs{S}=\sum_{i=1}^n\abs{S_i}$, thus the name sum type.
              \end{itemize}
        \item Product type. Given Local Types $\tyvar_i=(S_i,\phi_i,d_i)$ over base space $\Basespace$ for $i=1,\ldots, n$, we define the product type of $\tyvar_i$, denoted by $\prod_{i=1}^n \tyvar_i$, as follows,
              \begin{itemize}
                  \item let $S=S_1\times\ldots\times S_n$;
                  \item let $d=\sum_{i=1}^n d_i$;
                  \item let $\phi: S\rightarrow \Basespace^d$ be such that
                        \begin{equation}
                            \forall s=(s_1,\ldots,s_n)\in S, \phi(s)=\phi_1(s_1)\oplus\ldots \phi_n(s_n)\in \Basespace^{d}.
                        \end{equation}

                        Note that $\abs{S}=\prod_{i=1}^n\abs{S_i}$, thus the name product type.
              \end{itemize}
    \end{enumerate}
\end{definition}

Although we can define things and refer to things in terms of mathematical equations, it's sometimes cumbersome to do so. So we shall frequently refer to types using a programming language form, like \codeline{CybertronForm} or more complicated things like \codeline{Option<T>} a builtin generic type.

\begin{definition}
    [Unit Type]

    The unit type is a type with $S=\set{0}$ and $\phi:S \rightarrow \Basespace^0, 0\mapsto \basepointZero$. In Cybertron, it's denoted as \codeline{()}.
\end{definition}

\begin{definition}
    [Array Type]

    Given a type $\tyvar$, the array type of $\tyvar$ with length $\ell\in \mathbb{N}$ is the type with $S=\mathop{\text{S}}(\tyvar)^\ell$, $d=\ell \typeEncodingDimension{\tyvar}$ and $\phi:S \rightarrow \Basespace^{\ell\typeEncodingDimension{\tyvar}}, (s_1,\ldots,s_\ell)\mapsto \typeEncoding{\tyvar}(s_1)\oplus\ldots\oplus \typeEncoding{\tyvar}(s_\ell)$. It's denoted by $\tyvar^\ell$. In Cybertron, it's denoted as \codeline{[T;N]}.
\end{definition}

\begin{definition}
    [Vector Type of Finite Capacity]

    Given a type $\tyvar$, the vector type of finite capacity of $\tyvar$ with maximal length $\ell\in \mathbb{N}$ is the type with $S=\bigsqcup_{i=1}^\ell \typeUnderlyingSet{\tyvar}^i$, $d=\typeEncodingDimension{\FinSet{\ell}}+\ell\typeEncodingDimension{\tyvar}$ and $\phi:S \rightarrow \Basespace^{\typeEncodingDimension{\FinSet{\ell}}+\ell\typeEncodingDimension{\tyvar}}, (s_1,\ldots,s_i)\mapsto \typeEncoding{\FinSet{\ell}}(i)\oplus\typeEncoding{\tyvar}(s_1)\oplus\ldots\oplus \typeEncoding{\tyvar}(s_i)\oplus \basepointZero\oplus\ldots\oplus \basepointZero$ with just enough number of copies of $\basepointZero$ such that the dimensionality matches. It's denoted by $\tyvar^{\le\ell}$.In cybertron, it's denoted as \codeline{BoundedVec<T,N>}.
\end{definition}

However, it's cumbersome and obtuse to define and operate in mathematical forms only. So we shall give a definition closer to actual programming that is more convenient and easy to read.

\begin{definition}
    [Finite Algebraic Data Type, the Code Forms]

    We define two ways to create new types:

    \begin{enumerate}
        \item Enum type. An enum type is the sum type of a finite set of variant types. Each variant type is associated with a different identifier and can be
              \begin{itemize}
                  \item unit like, a unit type;
                  \item struct like, a product of several types, each called a field of the variant, and associated with an identifier;
                  \item tuple like, a product of several types, each called a field of the variant, but not associated with an identifier.
              \end{itemize}

              Syntactically, an enum type is specified as follows,

              \begin{lstlisting}[language=Cybertron]
enum <type-name> {
    <identifier> {            // 1st variant, struct like
        <identifier>: <type>, // 1st named field of 1st variant
        <identifier>: <type>, // 2nd named field of 1st variant
        ...
    },
    <identifier> {            // 2nd variant, struct like
        <identifier>: <type>, // 1st field of 2nd variant
        ...
    },
    <identifier> (            // 3rd variant, tuple like
        <type>,               // 1st tuple field of 3rd variant
        <type>,               // 2nd tuple field of 3rd variant
        ...
    ),
    <identifier>,             // 4th variant, unit like
    ...
}
\end{lstlisting}

              For example,

              \begin{lstlisting}[language=Cybertron]
enum Expr {
    Variable(IdentToken),    // 1st variant, tuple like
    Binary {                 // 2nd variant, struct like
        lopd: ExprId,
        opr: BinaryOprToken,
        ropd: ExprId,
    },
    Prefix {                 // 3rd variant, struct like
        opr: PrefixOprToken,
        opd: ExprId,
    },
    Suffix {                 // 4th variant, struct like
        opd: ExprId,
        opr: SuffixOprToken,
    },
    Panic,                   // 5th variant, unit like
}
\end{lstlisting}

        \item Struct type. A struct type is just the product type of
              \begin{lstlisting}[language=Cybertron]
struct <type-name> {
    <identifier>: <type>,
    <identifier>: <type>,
    ...
}
\end{lstlisting}

              \begin{lstlisting}[language=Cybertron]
struct A {
    a: i32
}
\end{lstlisting}
    \end{enumerate}
\end{definition}

To show how convenient this is, we can define the very useful option type as follows,

\begin{definition}
    [Option type] For a local type \cybertron{T}, we can define an option type as
    \begin{lstlisting}[language=Cybertron]
enum Option<T> {
    Some(T),
    None
}
\end{lstlisting}
\end{definition}

\begin{definition}
    [Global Types]

    Global types are defined to be sequences of local types.
\end{definition}

\begin{example}[Representation of Graphs]
    Graphs can be represented as sequences of its nodes. We can use position index to use as node references.
\end{example}

\subsection{Computation Graph}
For convenience, we shall use computation graph as a vehicle to describe complicated computation processes. Computation graph is close to actual computation process and one can derive an understanding of the computation difficulty from the graph's mathematic properties (width, depth, etc.)

\begin{definition}
    [Directed Simple Graph] A directed simple graph $G$ is a pair $(V,E)$ where $V$ is a finite set, and $E\subseteq V\times V$ is called edges.
\end{definition}

In the following, we shall simplify the "directed simple graph" to just graph.

\begin{definition}
    [Computation Graph]

    A computation graph is an acyclic directed graph $G=(V, E)$ with additional structures:
    \begin{enumerate}
        \item for each vertex $v \in V$, there is an associated type, denoted by $T_v$;
        \item for each vertex $v \in V$ with a positive number of incoming edges, let $v_1, \dots, v_n$ be the other vertices for the incoming edges, then there is an associated function $f_v$ from $T_{v_1} \times \dots \times T_{v_n}$ to $T_v$.
    \end{enumerate}
\end{definition}

A computation graph naturally generates a function from \textbf{source vertices} to \textbf{sink vertices}. Let $v_1^{\operatorname{in}}, \dots, v_n^{\operatorname{in}}$ be the set of vertices with no incoming edges, and let $v_1^{\operatorname{out}}, \dots, v_m^{\operatorname{out}}$ be the set of vertices with no outgoing edges. Then we can construct a function from $T_{v^{\operatorname{in}}_1} \times \dots \times T_{v^{\operatorname{in}}_n}$ to $T_{v^{\operatorname{out}}_1} \times \dots \times T_{v^{\operatorname{out}}_m}$ in the following obvious manner:
\begin{enumerate}
    \item let $(x_1, \dots, x_n) \in T_{v^{\operatorname{in}}_1} \times \dots \times T_{v^{\operatorname{in}}_n}$ be an input;
    \item for each $v^{\operatorname{in}}_i$, assign it with value $x_i$;
    \item for each vertex $v \in V$ with all its incoming vertices $v_1, \dots, v_l$ assigned with a value, assign it with the value $f_v (x_{v_1}, \dots, x_{v_l})$ where $x_{v_i}$ denotes the value assigned to $v_i$;
    \item repeat the process until all vertices are assigned a value, then take $(x_{v^{\operatorname{out}}_1}, \dots, x_{v^{\operatorname{out}}_m})$ as the output.
\end{enumerate}

Our goal is to make a hypothesis class using the above graph. To control the statistical and computational complexity, we put restrictions on the choice of $T_v$ and $f_v$, as follows:

\begin{definition}
    [Restricted Computation Graph]

    Let $\Universe$ be a set of types, and for any $A,B\in\Universe$, there is a set of functions $\operatorname{Mor}(A,B)$ from $A$ to $B$. We require that $T_v,T_v^{\text{in}}\in \Universe$ and $f_v\in \operatorname{Mor}(T_v^{\text{in}},T_v)$ where $T_v^{\text{in}}:=\prod\limits_{v'v\in E} T_{v'}$. We also require that the underlying graph $G$ satisfies certain conditions (width, depth, etc.)
\end{definition}

\begin{definition}
    [Restricted Computation Graph Of Sequences] Let $\Universe$ be a universe such that for a set of types $\LocalUniverse$ all types in $\Universe$ are of the form $A^*$ for some type $A\in\LocalUniverse$, and $\operatorname{Mor}(A^*,B^*)$ are functions that preserve sequence lengths.
\end{definition}

Given a restriction, the class of functions generated by restricted computation graphs is the central object to study in this paper. We shall use an even more restricted computation graph of sequences. We shall argue about the class of functions formed that
\begin{enumerate}
    \item it's rich enough to contain many interesting operations including SQL, compiler (type inference, static analysis)
    \item it's computationally reasonable, and can be represented by transformers with pragmatic bounds
    \item it has a reasonable statistical complexity
\end{enumerate}

As a corollary, our theories suggest that transformers can possibly learn to do many interesting things with reasonable computational and statistical complexity.

To our knowledge, this is the first theoretical paper that gives pragmatic optimistic bounds for the power of transformers in a wide range of meaningful language tasks.

Now we introduce graph-theoretical measures that will play key roles in our new complexity theory.

The most basic one is the following:

\begin{definition}
    [Depth of Graph]
    The depth of a computation graph is defined to the length of the longest path, denoted by $d_G$.
\end{definition}

For convenience, we define the following vertex-wise depth.

\begin{definition}
    [Depth of Graph Vertex]
    The depth of a vertex $v$ of a computation graph is defined as the length of the longest path with end $v$, denoted by $d_v$.
\end{definition}

The smaller $d_G$ is, the more parallel the computation is.

However, we shall discuss a more nuanced measure, containment, as follows:

\subsection{Functions over Local Types}

\begin{definition}
    [Functions over Local Types] Given Local Types $\mathcal{T},\mathcal{R}$, the functions from $\mathcal{T}$ to $\mathcal{R}$ are defined to be just the functions from $\typeUnderlyingSet{\mathcal{T}}$ to $\typeUnderlyingSet{\mathcal{R}}$.
\end{definition}

\begin{remark}
    The domains and codomains are all finite sets, so there aren't many constraints we want to enforce here. Basically, these are ``discrete'' functions.
\end{remark}

\begin{definition}[Functions over Algebraic Data Types]
    Let $\mathcal{T},\mathcal{S}_1,\ldots,\mathcal{S}_m,\mathcal{R}$ be Local Types, and suppose that $\mathcal{T}$ is an algebraic data type, then we can construct functions from $\mathcal{T}\times \mathcal{S}_1\times\ldots\times\mathcal{S}_m$ to $\mathcal{R}$ as follows,
    \begin{enumerate}
        \item suppose that $\mathcal{T}$ is the sum type of $\mathcal{T}_1,\ldots,\mathcal{T}_n$. Then given functions $f_i: \mathcal{T}_i\times \mathcal{S}_1\times\cdots\times\mathcal{S}_m$ for $i=1,\ldots,n$, we can construct a function $f$, by letting
              \begin{equation}
                  f((i,t), s_1,\ldots,s_m)=f_i(t,s_1,\ldots,s_m),
              \end{equation}
              for each $t\in \typeUnderlyingSet{T_i}, s_1\in \typeUnderlyingSet{\mathcal{S}_1},\ldots, s_m\in\typeUnderlyingSet{\mathcal{S}_m}$.

              (Note that we use the pair $(i,t)$ because the underlying set of $\mathcal{T}$ is $\bigsqcup_{i=1}^n \set{i}\times \typeUnderlyingSet{\mathcal{T}_i}$.)
        \item suppose that $\mathcal{T}$ is the product type of $\mathcal{T}_1,\ldots,\mathcal{T}_n$. Then given a function $f_*: \mathcal{T}_1\times\cdots\times\mathcal{T}_n\times \mathcal{S}_1\times\cdots\times\mathcal{S}_m$ for $i=1,\ldots,n$, we can construct a function $f$, by letting
              \begin{equation}
                  f((t_1,\ldots,t_n), s_1,\ldots,s_m)=f_*(t_1,\ldots,t_n,s_1,\ldots,s_m),
              \end{equation}
              for each $t\in \typeUnderlyingSet{T_i}, s_1\in \typeUnderlyingSet{\mathcal{S}_1},\ldots, s_m\in\typeUnderlyingSet{\mathcal{S}_m}$.
    \end{enumerate}
\end{definition}

It is not enough to just mathematically construct. We should also discuss how neural networks can represent these functions. We define the representation of functions over Local Types formally as follows:

\begin{definition}
    [Representation of Functions over Local Types Using Multi-Layer Perceptions]

    Let $\mathcal{T},\mathcal{R}$ be Local Types. Given a function $f$ from $\mathcal{T}$ to $\mathcal{R}$, we say it is representable by MLP of dimension $d\ge \max\left\{\typeEncodingDimension{\mathcal{T}},\typeEncodingDimension{\mathcal{R}}\right\}$ and number of layers $L$, if there exists $\tilde f\in \ResMlpClass{d}{L}$ such that
    \begin{equation}\label{eq:representation-of-functions-over-finite-types-using-mlps}
        \iota_1\circ\typeEncoding{\mathcal{R}}\circ f=\tilde f\circ\iota_2\circ\typeEncoding{\mathcal{T}},
    \end{equation}
    where $\iota_1: \RealNumbers^{\typeEncodingDimension{\mathcal{R}}}\rightarrow \RealNumbers^d$ and $\iota_2:\RealNumbers^{\typeEncodingDimension{\mathcal{T}}}\rightarrow\RealNumbers^d$ are the canonical embeddings by putting zeros to fit the dimensionalities.
\end{definition}

Here are some trivially true facts:

\begin{proposition}\label{prop:identities-representable}
    [Identities are Representable] For any Local Type $\mathcal{T}$, the identity map $\identityMap{\mathcal{T}}$ is representable in $\ResMlpClass{\typeEncodingDimension{\mathcal{T}}}{1}$.
\end{proposition}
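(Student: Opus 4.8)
The plan is to unwind the definition of ``representable by MLP'' and observe that here the source and target types coincide, so the bookkeeping collapses. Concretely, with $\mathcal{R}=\mathcal{T}$ the dimension $d=\typeEncodingDimension{\mathcal{T}}$ satisfies $d\ge\max\{\typeEncodingDimension{\mathcal{T}},\typeEncodingDimension{\mathcal{R}}\}$, and both canonical embeddings $\iota_1,\iota_2\colon\RealNumbers^{\typeEncodingDimension{\mathcal{T}}}\to\RealNumbers^{d}$ are the identity. Since $f=\identityMap{\mathcal{T}}$, the defining identity \eqref{eq:representation-of-functions-over-finite-types-using-mlps} reduces to $\typeEncoding{\mathcal{T}}=\tilde f\circ\typeEncoding{\mathcal{T}}$. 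Hence it suffices to produce a single $\tilde f\in\ResMlpClass{\typeEncodingDimension{\mathcal{T}}}{1}$ that is the identity on all of $\RealNumbers^{\typeEncodingDimension{\mathcal{T}}}$ (in particular on the image of $\typeEncoding{\mathcal{T}}$).

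First I would recall that a one-layer ResMLP computes $X^{(1)}=X^{(0)}+\fFcn(X^{(0)})$ with $X^{(0)}=X$, where $\fFcn$ is the single-layer fully connected network with $4\times$ intermediate space. Then I would pick all of its weights and biases to be zero, so that $\fFcn(X)=W_2\,\reluItself(W_1 X+B_1)+B_2=0$ for every $X$; consequently $X^{(1)}=X$, i.e.\ $\tilde f=\identityMap{\RealNumbers^{\typeEncodingDimension{\mathcal{T}}}}$. (Equivalently, one may invoke the earlier proposition that a single-layer FCN with $4\times$ intermediate space represents any affine map, applied to the zero map, to obtain the same $\fFcn$; the residual connection then supplies the identity.) Substituting $\tilde f=\mathrm{Id}$ into the displayed equation yields $\typeEncoding{\mathcal{T}}=\typeEncoding{\mathcal{T}}$, which is trivially true, so the proof is complete.

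There is essentially no obstacle here: the statement is a sanity check that the representability framework is closed under the identity, and the only mild care needed is the dimension matching noted above (admissibility of $d=\typeEncodingDimension{\mathcal{T}}$ and the fact that the two canonical embeddings collapse to identities). Once those are observed, the zero-weight construction closes the argument immediately.
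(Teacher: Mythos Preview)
Your proof is correct and takes essentially the same approach as the paper: make the feed-forward block vanish (all weights and biases zero) so the residual connection supplies the identity. Your write-up is in fact cleaner than the paper's, which contains a notational slip (it refers to a nonexistent $W_0^{(1)}$), but the underlying construction is identical.
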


\begin{proof}
    Just take $W_0^{(1)}=I_d,W_1^{(1)}=W_2^{(2)}=0,B_1^{(1)}=B_2^{(2)}=0$.
\end{proof}

\begin{proposition}\label{prop:equality-representable}
    [Equality is Representable]
    The equality function for any local type $\mathcal{T}$ is representable in $\ResMlpClass{2d}{2}$, where $d$ is the encoding dimension of $\mathcal{T}$.
\end{proposition}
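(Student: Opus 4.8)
The plan is to reduce exact equality of the encodings $\phi(s),\phi(s')$ to a one-dimensional thresholding that a depth-two piecewise-linear network can realize. Write $S=\typeUnderlyingSet{\mathcal{T}}$ and $\phi=\typeEncoding{\mathcal{T}}$, and recall that the argument type $\mathcal{T}\times\mathcal{T}$ has encoding dimension $2d$ with $\phi_{\mathcal{T}\times\mathcal{T}}(s,s')=\phi(s)\oplus\phi(s')$, while the equality function $\mathrm{eq}\colon S\times S\to\{\mathtt{true},\mathtt{false}\}$ has codomain $\mathtt{bool}$ of small encoding dimension $d_{\mathtt{bool}}\le 2d$, say with $\phi_{\mathtt{bool}}(\mathtt{true})=c_1$ and $\phi_{\mathtt{bool}}(\mathtt{false})=c_0$ for distinct reals $c_0,c_1$. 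Because $S$ is finite and $\phi$ is injective, the separation constant
\[
  \delta := \min\bigl\{\lVert u-v\rVert_1 : u,v\in\phi(S),\ u\neq v\bigr\}
\]
is strictly positive, and for all $s,s'$ we have $s=s'$ iff $\lVert\phi(s)-\phi(s')\rVert_1=0$, with $\lVert\phi(s)-\phi(s')\rVert_1\in\{0\}\cup[\delta,\infty)$ in all cases.

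Concretely I would use two ResMLP layers on the input $z^{(0)}=(x,y)\in\RealNumbers^{2d}$ where $x=\phi(s)$, $y=\phi(s')$. First layer: using $|t|=\relu{t}+\relu{-t}$, a single $4\times$-width $\fFcn$ can form the $2d$ hidden units $\relu{x_i-y_i}$, $\relu{y_i-x_i}$ and sum them through $W_2$ to produce $a:=\sum_i|x_i-y_i|=\lVert x-y\rVert_1$; the remaining hidden units (there are $8d$ available, far more than needed) realize the affine correction demanded by the residual connection, so that the layer output has one designated coordinate equal to $a$ and every other coordinate equal to $0$ — legitimate since zeroing coordinates is affine and, by the preceding proposition, any affine map is representable by a $4\times$-width FCN. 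Second layer: since $a\ge 0$, apply the piecewise-linear bump $g(a)=\relu{1-a/\delta}$, which is $1$ at $a=0$ and $0$ on $[\delta,\infty)$, hence equals $\mathbbm{1}[a=0]$ exactly on the values $a$ can take; then the affine map $c_0+(c_1-c_0)g(a)$ (with zeros in the remaining coordinates) outputs exactly $\iota_1\circ\phi_{\mathtt{bool}}\circ\mathrm{eq}$, again absorbing the residual. This yields a $\tilde f\in\ResMlpClass{2d}{2}$ representing $\mathrm{eq}$.

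The main point — the only genuinely load-bearing step — is the existence of the positive gap $\delta$: finiteness of $\typeUnderlyingSet{\mathcal{T}}$ is exactly what lets ``equality of real-valued encodings'' be expressed with no approximation, by a bump whose width is chosen below the gap. Everything else is routine bookkeeping: counting hidden units to check that the $4\times$ intermediate width suffices, and threading the residual connections so that the unused coordinates are zeroed — which matters only because the target identity requires the output to lie exactly in the image of $\iota_1$ rather than merely near it. I expect no further subtleties here; this serves as a warm-up for the heavier representation lemmas (boolean connectives, \texttt{Option} operations, attention-based primitives) that follow.
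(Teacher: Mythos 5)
Your proposal is correct and follows essentially the same route as the paper's proof: both compute the $\ell^1$ distance $\sum_i\abs{\phi_{\mathcal{T}}(x)_i-\phi_{\mathcal{T}}(y)_i}$ in the first layer via the $\relu{t}+\relu{-t}$ decomposition and then threshold it in the second layer, exploiting the positive separation gap guaranteed by finiteness of $\typeUnderlyingSet{\mathcal{T}}$ (the paper writes the threshold as $\min(1,A\cdot{})$ where you write $\relu{1-a/\delta}$, an affine-equivalent choice). Your version is in fact more careful than the paper's about the residual connections, the $4\times$ width budget, and landing exactly in the image of $\iota_1$.
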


\begin{proof}
    Let $x, y \in \mathcal{T}$ be the inputs. We encode them as $\phi_{\mathcal{T}}(x), \phi_{\mathcal{T}}(y) \in \mathbb{R}^d$. The equality function can be represented as:

    \[f_{\text{eq}}(x, y) =  \min\left(1,A\sum_{i=1}^d\abs{\phi_{\mathcal{T}}(x)_i - \phi_{\mathcal{T}}(y)_i}\right),\]

    where $A$ is a large enough positive constant such that the RHS is either $1$ or $0$.

    This can be implemented in two-layer ResMLP with dimension $2d$.
\end{proof}

\begin{proposition}\label{prop:boolean-not-representable}
    [Boolean NOT is Representable]
    The Boolean NOT function is representable in $\ResMlpClass{1}{1}$.
\end{proposition}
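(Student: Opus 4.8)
The plan is to exploit that, in its $1$-dimensional encoding, the Boolean type is a two-element local type and that the NOT map is nothing more than the transposition of its two values, which is realized by an affine function of one real variable. Concretely, write $\typeEncoding{\text{Bool}}(\mathrm{false}) = \beta_0$ and $\typeEncoding{\text{Bool}}(\mathrm{true}) = \beta_1$ for two distinct reals $\beta_0\neq\beta_1$ (the natural choice being $\beta_0 = 0$, $\beta_1 = 1$); since $\typeUnderlyingSet{\text{Bool}}$ has two elements we have $\typeEncodingDimension{\text{Bool}} = 1$, so in the definition of representability the canonical embeddings are $\iota_1 = \iota_2 = \identityMap{\RealNumbers}$. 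Thus it suffices to build $\tilde f \in \ResMlpClass{1}{1}$ with $\tilde f(\beta_0) = \beta_1$ and $\tilde f(\beta_1) = \beta_0$, after which equation~\eqref{eq:representation-of-functions-over-finite-types-using-mlps} holds because both sides agree on the only two encoded inputs.

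First I would unfold the definition of a one-layer ResMLP: it is the map $X \mapsto X + \fFcn(X)$, where $\fFcn$ is a single-layer fully connected network with $4\times$ intermediate space. I would then pick $\fFcn$ to be the affine map $X \mapsto (\beta_0 + \beta_1) - 2X$; this is a legitimate choice because the earlier proposition establishes that such a network expresses any affine map from $\RealNumbers^{\dmodel}$ to $\RealNumbers^{\dmodel}$, instantiated here with $\dmodel = 1$. With this choice $\tilde f(X) = X + \fFcn(X) = (\beta_0 + \beta_1) - X$, and evaluating at the encoded values gives $\tilde f(\beta_0) = \beta_1$, $\tilde f(\beta_1) = \beta_0$, i.e. $\tilde f \circ \typeEncoding{\text{Bool}} = \typeEncoding{\text{Bool}} \circ \mathrm{NOT}$ on $\typeUnderlyingSet{\text{Bool}}$, as required. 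If one prefers to avoid invoking the affine-representability proposition, one can exhibit weights directly: take $W_1 = \fourVec{1}{-1}{0}{0}$, $B_1 = 0$, $W_2 = \begin{pmatrix} -2 & 2 & 0 & 0 \end{pmatrix}$, $B_2 = \beta_0 + \beta_1$, and use $\relu{X} - \relu{-X} = X$ to get $\fFcn(X) = (\beta_0+\beta_1) - 2X$.

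There is essentially no obstacle here: the whole content is the observation that the target transformation is affine on the two-point encoding set, which is immediate once a $1$-dimensional encoding is fixed, and the rest is the residual trick plus the affine power of the $4\times$ feed-forward block. The only point worth stating carefully is the bookkeeping that $\typeEncodingDimension{\text{Bool}} = 1$ forces $\iota_1,\iota_2$ to be identities, so that the single layer genuinely suffices and no embedding padding is needed.
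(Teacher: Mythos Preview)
Your proposal is correct and is precisely the paper's approach spelled out in full: the paper's entire proof is the two words ``It's affine,'' and you have carefully unpacked what that means --- identifying the encoding, invoking the affine-representability proposition for the $4\times$ feed-forward block, and verifying the residual update realizes the swap. Nothing is missing; if anything you have written more than is needed.
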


\begin{proof}
    It's affine.
\end{proof}
\begin{proposition}\label{prop:boolean-and-representable}
    [Boolean AND is Representable]
    The Boolean AND function is representable in $\ResMlpClass{2}{1}$.
\end{proposition}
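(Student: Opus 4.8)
The plan is to write down an explicit one-layer ResMLP of dimension $2$ that realizes the representation equation. A Boolean is encoded by a single real coordinate valued in $\{0,1\}$, so $\mathrm{Bool}\times\mathrm{Bool}$ has encoding dimension $2$ with $\phi_{\mathrm{Bool}\times\mathrm{Bool}}(x,y)=(x,y)$, the target dimension is $\max\{2,1\}=2$, and the canonical embedding $\iota_1$ of the codomain sends $z\mapsto(z,0)$. Since a one-layer ResMLP computes $\tilde f(X)=X+\fFcn(X)$, unwinding the representation equation shows that it suffices to build the single-layer fully connected network $\fFcn$ (with its $4\times$ intermediate space) so that $\fFcn(x,y)=(x\wedge y-x,\,-y)$ for every $(x,y)\in\{0,1\}^2$.

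The key arithmetic fact is $x\wedge y=\relu{x+y-1}$ on $\{0,1\}^2$: the preactivation $x+y-1$ takes the values $-1,0,0,1$ on the four inputs, so the rectifier returns $0,0,0,1=\min(x,y)$. I would use only three of the eight hidden units $h_1,h_2,h_3$: set $h_1=\relu{x+y-1}$ (weight row $(1,1)$, bias $-1$), $h_2=\relu{x}$ (weight row $(1,0)$, bias $0$), and $h_3=\relu{y}$ (weight row $(0,1)$, bias $0$), with all remaining rows of $W_1$ and entries of $B_1$ equal to zero. Because $x,y\ge 0$, the rectifier acts as the identity on the last two preactivations, so $h_2=x$ and $h_3=y$ exactly. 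I would then pick $W_2$ so that the first output coordinate equals $h_1-h_2=x\wedge y-x$ and the second output coordinate equals $-h_3=-y$, and take $B_2=0$.

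Putting this together, $\tilde f(x,y)=(x,y)+\fFcn(x,y)=(x,y)+(x\wedge y-x,-y)=(x\wedge y,0)=\iota_1(\phi_{\mathrm{Bool}}(x\wedge y))$, which is exactly the required identity; hence $\mathrm{AND}$ is representable in $\ResMlpClass{2}{1}$. There is essentially no obstacle here: the only care needed is to respect the encoding conventions (product encoding is concatenation, and $\iota_1$ pads the output with a zero coordinate) and to verify the rectifier identity on all four Boolean inputs so that the construction is exact rather than merely approximate.
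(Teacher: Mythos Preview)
Your construction is correct and follows essentially the same approach as the paper: both compute $\mathrm{AND}$ in one residual feed-forward layer via a single $\reluItself$ identity, with the paper using $\min(a,b)=b-\relu{b-a}$ while you use the equally valid $x\wedge y=\relu{x+y-1}$ on $\{0,1\}^2$. Your write-up is in fact more careful than the paper's terse argument, since you explicitly handle the residual connection and the zero-padding embedding $\iota_1$ of the codomain.
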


\begin{proof}
    Represent each Boolean value as a binary flag within a $1$-dimensional vector. Then AND is just taking the minimum. By \(\min(a,b)=b - \relu{b - a}\), we're done.
\end{proof}

\begin{proposition}\label{prop:boolean-or-representable}
    [Boolean OR is Representable]
    The Boolean OR function is representable in $\ResMlpClass{2}{1}$.
\end{proposition}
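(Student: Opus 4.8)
The plan is to proceed exactly in parallel with the proof of Boolean AND (Proposition~\ref{prop:boolean-and-representable}), replacing the minimum by a maximum. Encode each of the two input Booleans as a $\{0,1\}$-valued flag occupying a single coordinate, so the input pair is a point $(a,b)\in\{0,1\}^2\subseteq\mathbb{R}^2$ and the single-Boolean output needs only one coordinate; since the input type has encoding dimension $2$ and the output type encoding dimension $1$, working at ambient dimension $d=2$ is precisely the minimal allowed choice, matching the claimed $\ResMlpClass{2}{1}$. The key algebraic fact is the rectifier identity $\max(a,b)=a+\relu{b-a}$, which writes OR as a single affine form passed through one ReLU and then read out affinely — exactly the shape $W_2\reluItself(W_1X+B_1)+B_2$ of a one-layer ResMLP block $X\mapsto X+\fFcn(X)$.

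Concretely, I would instantiate $\fFcn$ on $\mathbb{R}^2$ with its $4\times$ intermediate width (so $8$ hidden units, far more than needed) to compute $\fFcn\big((a,b)\big)=\big(\relu{b-a},\,-b\big)$: one hidden unit realizes $\relu{b-a}$, and two auxiliary units realize $\relu{b}$ and $\relu{-b}$ so that the readout can produce $-b=\relu{-b}-\relu{b}$ in the second coordinate. Adding the residual connection then yields $\big(a+\relu{b-a},\,b-b\big)=\big(\max(a,b),\,0\big)$, which for $\{0,1\}$-valued inputs is exactly $\iota_1\big(\typeEncoding{\mathcal{R}}(a\vee b)\big)$ with the unused coordinate zeroed — i.e.\ the representation condition of Equation~\eqref{eq:representation-of-functions-over-finite-types-using-mlps}. (A less efficient alternative is De Morgan, $a\vee b=\neg(\neg a\wedge\neg b)$, composing Propositions~\ref{prop:boolean-not-representable} and~\ref{prop:boolean-and-representable}, but this costs extra layers via a composition lemma, so the direct construction is preferable.)

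I do not anticipate any genuine obstacle: the only points to verify are bookkeeping — that the $4\times$ intermediate width indeed accommodates both the $\relu{b-a}$ term and the affine correction in one layer, that the residual/embedding conventions make the answer land in the first coordinate with the rest zero, and that the ReLUs are only ever evaluated at arguments consistent with the $\{0,1\}$ encoding. All of these are immediate, so the argument collapses to essentially one line: "OR is $\max$, and $\max(a,b)=a+\relu{b-a}$."
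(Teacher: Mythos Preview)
Your proposal is correct and follows exactly the same approach as the paper: encode the Booleans as $\{0,1\}$ flags, observe that OR is the maximum, and use the identity $\max(a,b)=a+\relu{b-a}$. The paper's proof is the single line you anticipated at the end; your additional bookkeeping about the residual connection and zeroing the unused coordinate is more detail than the paper provides, but entirely in the same spirit.
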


\begin{proof}
    Represent each Boolean value as a binary flag within a $1$-dimensional vector. Then OR is just taking the maximum. By \(\max(a,b) = a + \relu{b - a}\), we're done.
\end{proof}

\begin{proposition}\label{prop:then-some-representable}
    [THEN\_SOME is Representable]
    The function \codeline{Bool::then\_some}: \codeline{Bool} $\times$ \codeline{T} $\to$ \codeline{Option T} returns \codeline{Some t} if the boolean is true and \codeline{None} otherwise. This function is representable in $\ResMlpClass{d+1}{1}$.
\end{proposition}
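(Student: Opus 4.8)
The plan is to build the representing ResMLP explicitly. First I would unpack the two encodings. Writing $d$ for the encoding dimension of $\codeline{T}$ and $\phi$ for its encoding map, the product rule gives that $\codeline{Bool}\times\codeline{T}$ has encoding dimension $1+d$, so an input is encoded as $X=(\beta,v)\in\RealNumbers^{d+1}$ with $v:=\phi(t)\in\RealNumbers^{d}$ and $\beta$ the one-dimensional Boolean tag, which takes one of two distinct scalar values $\beta_{\mathrm t}$ (true) and $\beta_{\mathrm f}$ (false). The sum rule applied to $\codeline{Option<T>}$, whose variants are $\codeline{Some}$ (a one-field tuple variant of type $\codeline{T}$) and $\codeline{None}$ (unit), gives encoding dimension $\typeEncodingDimension{\FinSet 2}+\max\{d,0\}=1+d$: the encoding of $\codeline{Some}(t)$ is $(c_{\mathrm s},v)$ and the encoding of $\codeline{None}$ is $(c_{\mathrm n},\mathbf 0)$, where $c_{\mathrm s}\neq c_{\mathrm n}$ are the variant tags and $\mathbf 0\in\RealNumbers^{d}$ is the fixed pad vector of the sum-type encoding. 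Since both the domain type and the codomain type have encoding dimension exactly $d+1$, the canonical embeddings in the definition of representability are identities, so it remains to exhibit $\tilde f\in\ResMlpClass{d+1}{1}$ with $\tilde f(\beta,v)=\bigl(\tau(\beta),\,\pi(\beta,v)\bigr)$ on all valid encodings, where $\tau(\beta)=c_{\mathrm s}$ and $\pi(\beta,v)=v$ when $\beta=\beta_{\mathrm t}$, and $\tau(\beta)=c_{\mathrm n}$, $\pi(\beta,v)=\mathbf 0$ when $\beta=\beta_{\mathrm f}$.

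Next I would exploit the residual structure: a one-layer ResMLP is $\tilde f(X)=X+\fFcn(X)$ with $\fFcn$ the single-layer fully connected network with $4\times$ intermediate space (Definition~\ref{def:single-layer-fcn}), so it suffices to realize $\fFcn(\beta,v)=\bigl(\tau(\beta)-\beta,\ \pi(\beta,v)-v\bigr)$. Let $h(\beta)$ be the affine function of $\beta$ that equals $1$ at $\beta_{\mathrm f}$ and $0$ at $\beta_{\mathrm t}$ (the arithmetic underlying Boolean NOT, Proposition~\ref{prop:boolean-not-representable}). The tag coordinate $\tau(\beta)-\beta$ is affine in $\beta$, a two-point interpolation, hence realized by a single rectifier unit plus a bias. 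For the $i$-th payload coordinate I need $\pi(\beta,v)_i-v_i=h(\beta)\,(\mathbf 0_i-v_i)$; with $w_i:=\mathbf 0_i-v_i$, which is affine in $X$, and a constant $M>\max_{t}\lVert\mathbf 0-\phi(t)\rVert_\infty$ (finite because $\codeline{T}$, being a local type, has a finite underlying set), one has on the whole domain
\[
h(\beta)\,w_i \;=\; \relu{w_i-M\bigl(1-h(\beta)\bigr)}-\relu{-w_i-M\bigl(1-h(\beta)\bigr)},
\]
since at $\beta=\beta_{\mathrm t}$ the offset $M\bigl(1-h(\beta)\bigr)=M$ drives both rectifiers to zero, while at $\beta=\beta_{\mathrm f}$ the offset vanishes and the two rectifiers recombine to $w_i$. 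Therefore $\fFcn$ uses two rectifier units per payload coordinate and one for the tag, $2d+1$ in all, comfortably within the $4(d+1)$ units of a single $4\times$ layer of width $d+1$; reading off the matrices $W_1,B_1,W_2,B_2$ finishes the construction.

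The step I expect to be the real obstacle is the observation that forces the nonlinearity: on its (finite) domain $\codeline{Bool::then\_some}$ is \emph{not} an affine map, because the payload $\beta$-gates $v$ and this gate is bilinear in $(\beta,v)$, so no linear-algebra-only argument can work and the rectifiers are essential. What makes a single layer enough is precisely that $\beta$ takes only two values, which lets a large-$M$ mask collapse the bilinear gate into a piecewise-affine map computable with $O(d)$ rectifiers — the ``discreteness of functions over local types'' paying off. A secondary point needing care is the dimension bookkeeping: one must check that both $\codeline{Bool}\times\codeline{T}$ and $\codeline{Option<T>}$ sit inside $\RealNumbers^{d+1}$, so that no auxiliary coordinates are needed and the stated width $d+1$ is exactly right — which rests on $\typeEncodingDimension{\FinSet 2}=1$, itself a consequence of $\abs{\Basespace}\ge 2$.
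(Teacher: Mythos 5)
Your construction is correct and follows essentially the same route as the paper's proof: a large constant multiple of the Boolean flag shifts the arguments of a positive/negative ReLU pair so as to gate the payload, with the tag handled affinely within a single residual layer. If anything, yours is the more careful version — you make the residual decomposition $\tilde f(X)=X+\fFcn(X)$, the $\mathbf 0$-padding of the \codeline{None} encoding, and the rectifier-unit count explicit, whereas the paper states the same large-constant gating identity with the roles of the two flag values apparently transposed in its prose.
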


\begin{proof}
    Encode the boolean as a binary flag in a $(d+1)$-dimensional vector, where the first component indicates the boolean value and the remaining $d$ components hold the value of type \codeline{T}. The residual MLP $\fResMlp$ constructs the output \codeline{Option T} by assembling the flag and the value split into positive and negative parts influenced by the flag:
    \[
        \fResMlp(X) = \begin{pmatrix}
            x_1 \\
            \relu{x_{2:d+1} - Ax_1} - \relu{-x_{2:d+1} - Ax_1}
        \end{pmatrix}.
    \]
    Here, $A$ is a vector of dimension $d$ with all entries positive and large enough to ensure proper thresholding. Specifically, each entry of $A$ should be larger than the maximum absolute value that can be represented in the corresponding dimension of type \codeline{T}. This ensures that when $x_1 = 1$, the subtraction $x_{2:d+1} - A$ will always be negative, and when $x_1 = 0$, it will not affect the value.

    When the flag is true ($x_1 = 1$), $\relu{x_{2:d+1} - A}=0$
    and $\relu{-x_{2:d+1} - A}$ retains the negated value, resulting in \codeline{Some t}. When the flag is false ($x_1 = 0$), both ReLU terms preserve the value, yielding \codeline{None}. Thus, $\fResMlp$ effectively implements \codeline{Bool::then\_some} within a single layer of the MLP.
\end{proof}

\begin{proposition}\label{prop:option-or-representable}
    [Option Or is Representable] Let \codeline{T} be a local type, let \codeline{Option::or} be the function that maps two values \codeline{a,b} of type \codeline{Option T} to a value \codeline{c} of type \codeline{Option T} such that \codeline{c} is equal to \codeline{a} when \codeline{a} is not none, and equal to \codeline{b} otherwise. Then \codeline{Option::or} is representable in $\ResMlpClass{2(d+1)}{1}$.
\end{proposition}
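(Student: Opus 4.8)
The plan is to mirror the construction behind Proposition~\ref{prop:then-some-representable} and build an explicit single residual layer that performs a multiplexing (gating) operation on the two \codeline{Option T} inputs. First I would pin down the encoding: by the sum-type construction a value of type \codeline{Option T} occupies $d+1$ coordinates, namely one \codeline{Some}-versus-\codeline{None} flag $f$ together with $d$ payload coordinates for the wrapped \codeline{T} (zeros when the value is \codeline{None}, by the padding convention of sum types), and after an affine reparametrization of the flag coordinate we may take $f=1$ for \codeline{Some} and $f=0$ for \codeline{None}. An input pair $(a,b)$ is then the vector $(f_a,u_a,f_b,u_b)\in\mathbb{R}^{2(d+1)}$ with $f_a,f_b\in\{0,1\}$ and $u_a,u_b\in\mathbb{R}^{d}$ --- exactly the dimension $2(d+1)$ claimed --- and the representation definition requires the ResMLP to output the encoding of $c$ in the first $d+1$ coordinates and zeros in the remaining $d+1$.

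Next I would write down the target explicitly. The output $c$ has flag $f_c=\max(f_a,f_b)$ and payload $u_c=f_a u_a+(1-f_a)u_b$. Because of the residual structure, the fully connected block $\fFcn$ must therefore compute
\[
  \fFcn(f_a,u_a,f_b,u_b)=\bigl(\relu{f_b-f_a},\ (1-f_a)(u_b-u_a),\ -f_b,\ -u_b\bigr),
\]
using $\max(f_a,f_b)=f_a+\relu{f_b-f_a}$ and $u_c-u_a=(1-f_a)(u_b-u_a)$. The only non-affine component is the gated difference; for it I would reuse the thresholding trick of Proposition~\ref{prop:then-some-representable}: writing $w:=u_b-u_a$ and taking a constant $A$ larger than any payload magnitude the encoding of \codeline{T} can produce,
\[
  (1-f_a)\,w=\relu{w-A f_a\vone}-\relu{-w-A f_a\vone},
\]
since for $f_a=0$ the right-hand side equals $\relu{w}-\relu{-w}=w$, while for $f_a=1$ both arguments are negative in every coordinate and it equals $0$. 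The remaining components $\relu{f_b-f_a}$, $-f_b$ and $-u_b$ are piecewise linear in the input and cost $O(d)$ further ReLU units (for instance $-u_b=\relu{-u_b}-\relu{u_b}$, then negated), and the hidden width $4\cdot 2(d+1)=8(d+1)$ comfortably holds the $4d+O(1)$ units needed. Choosing $W_1,B_1,W_2,B_2$ to realize this map and wrapping it in one residual layer finishes the construction, giving representability in $\ResMlpClass{2(d+1)}{1}$.

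The only step that is not pure bookkeeping is verifying that the gating $(1-f_a)(u_b-u_a)$ is realized \emph{exactly}, not approximately, within a single ReLU layer; this relies on the flag being a genuine $0/1$ value and on choosing $A$ strictly larger than every payload coordinate the finite encoding of \codeline{T} admits, so that the suppressed ReLU terms vanish identically instead of leaking a residual. Granting that, the rest is an assembly of affine pieces entirely parallel to the Boolean and THEN\_SOME constructions (Propositions~\ref{prop:boolean-or-representable} and~\ref{prop:then-some-representable}).
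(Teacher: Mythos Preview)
Your proposal is correct and follows essentially the same ReLU-thresholding trick as the paper's proof: gate one option's payload by subtracting a large multiple of the other's flag so that the unwanted branch is driven below zero before the ReLU. You are in fact more careful than the paper, which only writes down the payload formula $a_{\text{val}} + \relu{b_{\text{val}} - A a_{\text{flag}}} - \relu{-b_{\text{val}} - A a_{\text{flag}}}$ (relying implicitly on $a_{\text{val}}=0$ when $a$ is \codeline{None}) and does not spell out the flag update, the zeroing of the trailing $d+1$ coordinates, or the residual bookkeeping that you handle explicitly.
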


\begin{proof}
    Each \codeline{Option T} is represented as a $(d+1)$-dimensional vector, where the first component is a binary flag indicating the presence (1 for \codeline{Some}, 0 for \codeline{None}), and the remaining $d$ components encode the value. Given inputs $a, b \in \codeline{Option T}$, the residual MLP $\fResMlp$ processes the concatenated vector
    \[
        X = \begin{pmatrix} a_{\text{flag}} \\ a_{\text{val}} \\ b_{\text{flag}} \\ b_{\text{val}} \end{pmatrix}.
    \]
    The MLP is designed to separate $b_{\text{val}}$ into positive and negative parts ($b_+,b_-$ respectively) influenced by $a_{\text{flag}}$. Specifically, it computes:
    \begin{equation}
        \begin{split}
            \fResMlp(X) & = a_{\text{val}} + \relu{b_+ - A a_{\text{flag}}} - \relu{b_- - A a_{\text{flag}}}                         \\
                        & = a_{\text{val}} + \relu{b_{\text{val}} - A a_{\text{flag}}} - \relu{-b_{\text{val}} - A a_{\text{flag}}},
        \end{split}
    \end{equation}
    where $A$ is a vector with large positive entries that ensures the ReLU activation zeroes out the non-selected parts based on the flag. When $a_{\text{flag}} = 1$, the terms involving $b$ are suppressed, resulting in $c = a$. Conversely, when $a_{\text{flag}} = 0$, the positive part of $b$ remains, effectively selecting $b$. Thus, $\fResMlp$ accurately implements the \codeline{Option::or} function, demonstrating that it is representable within $\ResMlpClass{2(d+1)}{1}$.
\end{proof}

\begin{proposition}
    [Field Access Is Representable in ResMlp]

    For algebraic data type, either struct field access, enum discriminator, and variant field access can be represented in \(\ResMlpClass{d}{1}\) where $d$ is the encoding dimension.
\end{proposition}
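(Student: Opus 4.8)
The plan is to observe that, under the algebraic-data-type encodings defined above, each of the three operations acts on the encoding coordinate vectors as a single affine map $g:\R^{d}\to\R^{d}$ — in fact a linear one, namely a coordinate-block extraction possibly followed by a coordinate shift and zero-padding — and then to invoke the fact that any affine self-map of $\R^{d}$ is realized by a one-layer residual MLP. For the latter, recall that a one-layer ResMLP computes $X\mapsto X+\fFcn(X)$, so to realize an affine $g$ it suffices to let its internal single-layer fully connected network compute $\fFcn(X)=g(X)-X$, which is still affine; by the proposition following Definition~\ref{def:single-layer-fcn} a single-layer fully connected network with $4\times$ intermediate width expresses \emph{any} affine map, so such an $\fFcn$ exists. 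Hence everything reduces to exhibiting the linear map $g$ in each case.

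\textbf{Struct (product) field access.} If $\mathcal{T}=\prod_{i=1}^{n}\mathcal{T}_i$, then $\typeEncoding{\mathcal{T}}(s_1,\dots,s_n)=\typeEncoding{\mathcal{T}_1}(s_1)\oplus\cdots\oplus\typeEncoding{\mathcal{T}_n}(s_n)$, so field $j$ occupies the contiguous coordinate block $[\,\typeEncodingDimension{\mathcal{T}_1}+\cdots+\typeEncodingDimension{\mathcal{T}_{j-1}}+1,\ \typeEncodingDimension{\mathcal{T}_1}+\cdots+\typeEncodingDimension{\mathcal{T}_j}\,]$; the linear map reading off that block into the first $\typeEncodingDimension{\mathcal{T}_j}$ coordinates (and zero elsewhere) equals $\iota_1\circ\typeEncoding{\mathcal{T}_j}$ composed with the projection to the $j$-th factor. \textbf{Enum discriminator.} If $\mathcal{T}=\sum_{i=1}^{n}\mathcal{T}_i$, then $\typeEncoding{\mathcal{T}}((i,s))=\typeEncoding{\FinSet{n}}(i)\oplus\typeEncoding{\mathcal{T}_i}(s)$, zero-padded to dimension $d$, so keeping exactly the first $\typeEncodingDimension{\FinSet{n}}$ coordinates returns $\typeEncoding{\FinSet{n}}(i)$, i.e.\ the encoded discriminant. \textbf{Variant field access.} Treating this as the operation that, on a value already known to be of variant $i$ (the setting in which it appears in sum-type elimination), returns its $k$-th field, the payload $\typeEncoding{\mathcal{T}_i}(s)$ sits right after the $\typeEncodingDimension{\FinSet{n}}$ discriminant coordinates and field $k$ is once more a contiguous block inside it, ending at coordinate $\typeEncodingDimension{\FinSet{n}}+\typeEncodingDimension{\mathcal{T}_i}\le\typeEncodingDimension{\FinSet{n}}+\max_{j}\typeEncodingDimension{\mathcal{T}_j}=d$, hence within $\R^{d}$; reading it off is again linear.

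In each case, with $g$ the linear map just described, $f$ the field-access function, and $\mathcal{R}$ its output type, one checks directly that $\iota_1\circ\typeEncoding{\mathcal{R}}\circ f=g\circ\iota_2\circ\typeEncoding{\mathcal{T}}$; since $\typeEncodingDimension{\mathcal{R}}\le\typeEncodingDimension{\mathcal{T}}=d$ in all three cases, $X\mapsto g(X)-X$ is an affine self-map of $\R^{d}$, realized by one $\fFcn$, so $\tilde f(X)=X+\fFcn(X)$ is a one-layer residual MLP witnessing representability in $\ResMlpClass{d}{1}$. There is no deep obstacle here; the only things to be careful about are the zero-padding convention in the sum-type encoding (so that a variant's payload, and the sub-fields within it, stay inside the first $d$ coordinates rather than overflowing into the pad) and the harmless residual connection, which can be absorbed into $\fFcn$ exactly because the $4\times$ intermediate width lets it express an arbitrary affine map rather than only a bias-free one.
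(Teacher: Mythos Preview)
Your proposal is correct and follows essentially the same approach as the paper, which simply states ``Obvious because these operations are linear.'' You have filled in the details the paper omits—identifying each operation as a coordinate-block extraction and then invoking the fact that a one-layer ResMLP with $4\times$ intermediate width can realize any affine self-map—but the underlying idea is identical.
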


\begin{proof}
    Obvious because these operations are linear.
\end{proof}

\begin{proposition}\label{prop:composition-of-functions-representable-in-resmlp}
    [Composition of Functions Representable in ResMlp] For local types $\mathcal{T}, \mathcal{S}, \mathcal{R}$, with maps $f: \mathcal{T}\rightarrow\mathcal{S}$ and map $g:\mathcal{S}\rightarrow\mathcal{R}$ representable in $\ResMlpClass{d_1}{L_1}$ and $\ResMlpClass{d_2}{L_2}$ respectively. Then $g\circ f$ is representable in $\ResMlpClass{\max\left\{d_1,d_2\right\}}{L_1+L_2}$.
\end{proposition}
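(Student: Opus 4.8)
The plan is to realize $g \circ f$ by literally concatenating the layers of the two given ResMLPs, after first lifting both of them to the common dimension $d := \max\{d_1, d_2\}$. Observe that representability of $f$ forces $d_1 \ge \max\{\typeEncodingDimension{\mathcal{T}}, \typeEncodingDimension{\mathcal{S}}\}$ and representability of $g$ forces $d_2 \ge \max\{\typeEncodingDimension{\mathcal{S}}, \typeEncodingDimension{\mathcal{R}}\}$, so $d \ge \typeEncodingDimension{\mathcal{T}}, \typeEncodingDimension{\mathcal{S}}, \typeEncodingDimension{\mathcal{R}}$ and the canonical zero-padding embeddings into $\RealNumbers^{d}$ are all well defined; moreover padding first into $\RealNumbers^{d_i}$ and then into $\RealNumbers^{d}$ is the same as padding directly into $\RealNumbers^{d}$, which is the only compatibility fact needed to glue the two representation equations.

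First I would prove a zero-padding lemma: for $d \ge d'$, any $\psi \in \ResMlpClass{d'}{L}$ extends to $\bar\psi \in \ResMlpClass{d}{L}$ with $\bar\psi(x, \mathbf{0}) = (\psi(x), \mathbf{0})$ for all $x \in \RealNumbers^{d'}$, where $\mathbf{0} \in \RealNumbers^{d - d'}$. It suffices to do this layer by layer: for a single-layer fully connected block $\fFcn$ on $\RealNumbers^{d'}$ with weights $W_1, W_2$ and biases $B_1, B_2$, define the $d$-dimensional block with $W_1' = \begin{pmatrix} W_1 & 0 \\ 0 & 0 \end{pmatrix}$, $W_2' = \begin{pmatrix} W_2 & 0 \\ 0 & 0 \end{pmatrix}$, $B_1' = (B_1, \mathbf{0})$, $B_2' = (B_2, \mathbf{0})$, padding the $4d'$-dimensional hidden space to $4d$ with zeros. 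Since $\reluItself$ applied to the zero block is zero, this block maps $(x, \mathbf{0}) \mapsto (\fFcn(x), \mathbf{0})$, so the residual update sends $(x, \mathbf{0}) \mapsto (x + \fFcn(x), \mathbf{0})$; iterating over all $L$ layers gives $\bar\psi(x, \mathbf{0}) = (\psi(x), \mathbf{0})$, and in particular the trailing $d - d'$ coordinates stay zero throughout the residual stream.

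Next I would assemble the composition. Let $\bar f \in \ResMlpClass{d}{L_1}$ and $\bar g \in \ResMlpClass{d}{L_2}$ be the zero-padded liftings of the ResMLPs $\tilde f, \tilde g$ that witness representability of $f$ and $g$. Define $\tilde h$ by stacking the $L_1$ layers of $\bar f$ followed by the $L_2$ layers of $\bar g$; this is a ResMLP in $\ResMlpClass{d}{L_1 + L_2}$ whose induced map is $\bar g \circ \bar f$, since the recurrence in Definition~\ref{def:residual-mlp} simply continues from one block to the next. Chasing an input $t \in \typeUnderlyingSet{\mathcal{T}}$: the embedding $\iota \circ \typeEncoding{\mathcal{T}}(t)$ into $\RealNumbers^{d}$ is $(\typeEncoding{\mathcal{T}}(t), \mathbf{0})$; applying $\bar f$ together with the representation equation for $f$ yields $(\typeEncoding{\mathcal{S}}(f(t)), \mathbf{0}) = \iota \circ \typeEncoding{\mathcal{S}}(f(t))$; applying $\bar g$ together with the representation equation for $g$ yields $(\typeEncoding{\mathcal{R}}(g(f(t))), \mathbf{0}) = \iota \circ \typeEncoding{\mathcal{R}}((g\circ f)(t))$. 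This is exactly identity \eqref{eq:representation-of-functions-over-finite-types-using-mlps} for $g \circ f$ in dimension $d$ with $L_1 + L_2$ layers.

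The only delicate point, and the one I would be most careful about, is the bookkeeping in the padding lemma: each block's hidden layer has width $4$ times the model dimension, so one must pad both the $d'$-to-$d$ block and the $4d'$-to-$4d$ hidden block with zeros in a mutually consistent way, and then verify that $\reluItself$ of a zero pre-activation vanishes so that the padded coordinates never leak into the active ones through the residual connection. Everything else is routine linear bookkeeping and substitution of the two representation equations.
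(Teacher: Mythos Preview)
Your proposal is correct and takes the same approach as the paper: stack the $L_1$ layers realizing $f$ followed by the $L_2$ layers realizing $g$, and observe that the intermediate state is the encoding of $\mathcal{S}$. The paper's proof is a single sentence to this effect together with a figure; your version simply fills in the details the paper leaves implicit, in particular the zero-padding lemma needed to lift both ResMLPs to the common dimension $\max\{d_1,d_2\}$ while keeping the trailing coordinates inert through the residual stream.
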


\begin{proof}
    Obvious by using the first $L_1$ layers to map from $\mathcal{T}$ to $\mathcal{S}$ and using the rest $L_2$ layers to map from $\mathcal{S}$ to $\mathcal{R}$. The process can be visualized as in Figure~\ref{fig:representation-of-composition-of-local-functions}.

    \begin{figure}
        \centering
        \begin{tikzpicture}[roundnode/.style={circle, draw=black, minimum size=10mm, align=center},
                rectnode/.style={rectangle, draw=black, minimum size=15mm, align=center}]

            \node[roundnode] (input) at (0, 0) {$\phi_{\mathcal{T}}(x)$};

            \node[rectnode] (mlp1) at (3, 0) {$L_1$ layers\\MLP};

            \node[roundnode] (hidden) at (6, 0) {$\phi_{\mathcal{S}}(f(x))$};

                \node[rectnode] (mlp2) at (9, 0) {$L_2$ layers\\MLP};

                \node[roundnode] (output) at (12, 0) {$\phi_{\mathcal{R}}(g(f(x)))$};

            \draw[->] (input) -- (mlp1);
            \draw[->] (mlp1) -- (hidden);
            \draw[->] (hidden) -- (mlp2);
            \draw[->] (mlp2) -- (output);

        \end{tikzpicture}
        \caption{Transformation from \(\phi_{\mathcal{T}}(x)\) to \(\phi_{\mathcal{S}}(f(x))\) to \(\phi_{\mathcal{R}}(g(f(x)))\) with MLP layers.}
        \label{fig:representation-of-composition-of-local-functions}
    \end{figure}
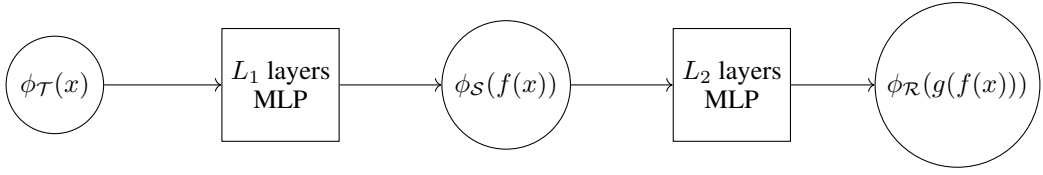
\end{proof}

\begin{proposition}\label{prop:composition-of-computation-graph-representable-in-resmlp}
    [Computation Graph of Functions Representable in ResMlp] Let $\mathcal{G}$ be a computation graph, with each vertex $v$ being of some local type $\mathcal{T}_v$, and the construction functions are representable in $\ResMlpClass{d_v}{L_v}$. For convenience, if $v$ is a source vertex, $d_v$ is defined to be the encoding dimension of $\mathcal{T}_v$ and $L_v=0$. Then the function induced by the computation graph is representable in $\ResMlpClass{\sum_{v\in \mathcal{G}}d_v}{\GraphDepth{\mathcal{G}}(\max_{v\in\mathcal{G}} L_v+1)+1}$.
\end{proposition}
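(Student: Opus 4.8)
The plan is to prove this by induction on the graph depth, realizing the induced function by a single ResMLP acting on the \emph{slotted} space $\R^{\sum_v d_v}\cong\bigoplus_{v\in\mathcal{G}}\R^{d_v}$, where the block indexed by $v$ (its \emph{slot}) is reserved to hold the encoding $\phi_{\mathcal{T}_v}$ of the value carried by $v$ in its leading $\typeEncodingDimension{\mathcal{T}_v}$ coordinates, together with the scratch space that the ResMLP representing $f_v$ requires. I would order the slots so that the source vertices come first and recall that, for a source vertex, $d_v$ equals $\typeEncodingDimension{\mathcal{T}_v}$; then the canonical embedding $\iota_2\circ\phi$ of an input tuple is exactly the state in which each source slot holds its encoding and every other coordinate is $0$, which is precisely the ResMLP input convention in the definition of representability.

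The construction proceeds in $\GraphDepth{\mathcal{G}}$ \emph{rounds}, round $k$ handling all vertices of vertex-depth exactly $k$; here I use that every incoming neighbor of a depth-$k$ vertex has depth at most $k-1$, so by induction their slots already hold the correct encodings when round $k$ starts, while all depth-$\ge k$ slots are still identically zero. Each round consists of (i) one \textbf{gather layer} and (ii) $\max_v L_v$ \textbf{compute layers}. The gather layer applies a ResMLP layer $X\mapsto X+\fFcn(X)$ whose $\fFcn$ is the affine map that writes, into the leading coordinates of each depth-$k$ slot $v$, the concatenation $\phi_{\mathcal{T}_{v_1}}(x_{v_1})\oplus\cdots\oplus\phi_{\mathcal{T}_{v_n}}(x_{v_n})$ read off from the disjoint, already-filled predecessor slots, and outputs $0$ on every other coordinate; since a single $4\times$-intermediate fully-connected layer expresses any affine map and the depth-$k$ slots were still zero, afterwards each depth-$k$ slot holds exactly the padded encoding $\iota\circ\phi_{\mathcal{T}_v^{\mathrm{in}}}$ of its argument tuple while all other slots are untouched. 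For the compute layers I take, for each depth-$k$ vertex $v$, the ResMLP $\tilde f_v\in\ResMlpClass{d_v}{L_v}$ representing $f_v$, pad it with identity layers (Proposition~\ref{prop:identities-representable}) up to $\max_v L_v$ layers, and run all of them in parallel: because the slots are disjoint, the layerwise direct sum $\bigoplus_v\fFcn_v^{(\ell)}$ is again a single $4\times$-intermediate fully-connected layer on $\R^{\sum_v d_v}$, so this is a genuine $\max_v L_v$-layer ResMLP after which every depth-$k$ slot holds $\iota\circ\phi_{\mathcal{T}_v}(f_v(\cdots))$; this closes the induction. A final ResMLP layer, again realizing the affine map $X\mapsto -X+(\text{gather of the sink slots})$, collects the sink-vertex encodings into the leading coordinates and zeroes out everything else, producing exactly $\iota_1\circ\phi_{\prod_j\mathcal{T}_{v_j^{\mathrm{out}}}}$ of the output. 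Counting $\GraphDepth{\mathcal{G}}$ rounds of $1+\max_v L_v$ layers plus this last layer gives the claimed depth $\GraphDepth{\mathcal{G}}(\max_v L_v+1)+1$ and dimension $\sum_v d_v$.

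The routine parts are the affine-map bookkeeping (copying and permuting coordinates, padding with zeros) and the observation that a direct sum of $4\times$-intermediate fully-connected layers is again one. I expect the only genuinely delicate point to be maintaining a clean invariant across rounds: that before round $k$ every depth-$\le k-1$ slot holds the correct encoding in its leading coordinates and every depth-$\ge k$ slot is still identically zero. The subtlety is that the ResMLPs $\tilde f_v$ are only guaranteed to produce the correct output in the leading coordinates and may leave arbitrary garbage in their own scratch coordinates, so the argument must use that this garbage lives strictly inside slot $v$ and is never read again — subsequent gather layers read only the leading $\typeEncodingDimension{\mathcal{T}_v}$ coordinates of a slot, and the final layer overwrites the scratch with zeros. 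I would therefore state this invariant explicitly and check it is preserved by both the gather layer and the direct-summed compute layers.
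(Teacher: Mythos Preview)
Your proposal is correct and follows essentially the same approach as the paper: both allocate a slot per vertex in $\mathbb{R}^{\sum_v d_v}$, process vertices depth level by depth level using one affine ``gather'' layer followed by $\max_v L_v$ compute layers that run the padded local ResMLPs in parallel as a block-diagonal direct sum, and finish with one affine cleanup layer. Your treatment of the round invariant and the scratch-garbage issue is actually more explicit than the paper's, which glosses over the fact that the residual structure means slots are added to rather than overwritten; your observation that depth-$\ge k$ slots are still zero before round $k$ is exactly what makes the gather-by-addition work.
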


\begin{proof}
    We construct a global residual multi-layer perceptron (ResMLP) that simulates the computation graph \(\mathcal{G}\) by aggregating and updating the states of all vertices simultaneously. Let \(D = \sum_{v \in \mathcal{G}} d_v\) be the total dimension, where \(d_v\) is the dimension associated with vertex \(v\). The global ResMLP will have a depth of \(\GraphDepth{\mathcal{G}} (\max_{v} L_v + 1)\).

    Consider the concatenated state vector \(X^{(t)} \in \mathbb{R}^D\), which is a concatenation of the states of all vertices:
    \[
        X^{(t)} = \left( X_v^{(t)} \right)_{v \in \mathcal{G}},
    \]
    where \(X_v^{(t)} \in \mathbb{R}^{d_v}\) is the state of vertex \(v\) at layer \(t\).

    Initialization occurs at depth zero, corresponding to the source vertices of the computation graph. The state vector \(X^{(0)}\) is set by assigning the input vectors to the source vertices and initializing all other vertices to zero. Formally, if \(V_0\) denotes the set of source vertices, then:
    \[
        X_v^{(0)} =
        \begin{cases}
            x_v & \text{if } v \in V_0, \\
            0   & \text{otherwise},
        \end{cases}
    \]
    where \(x_v \in \mathbb{R}^{d_v}\) is the input to source vertex \(v\). Because \(X_v^{(0)}\) is of dimensionality \(d_v\) equal to the encoding dimension, this agrees with our convention for representing functions over local types.

    We proceed inductively over the depth levels of the computation graph. For each depth level \(k = 1, 2, \dotsc, \GraphDepth{\mathcal{G}}\), we perform the following steps in the global ResMLP.

    \begin{enumerate}
        \item \text{Input Aggregation Layer.} We apply a linear transformation to gather the outputs from the predecessor vertices of each vertex at depth \(k\) and feed them as inputs to these vertices. Specifically, we define a linear mapping \(W_{\text{agg}}^{(k)} \in \mathbb{R}^{D \times D}\) such that:
              \[
                  \tilde{X}^{(t_k)} = W_{\text{agg}}^{(k)} X^{(t_{k-1})},
              \]
              where \(t_{k-1}\) is the layer after processing depth \(k-1\), and \(\tilde{X}^{(t_k)}\) is the aggregated input for the vertices at depth \(k\). The matrix \(W_{\text{agg}}^{(k)}\) rearranges and combines the outputs from predecessor vertices to provide the correct inputs to each vertex at depth \(k\).
              Specifically, for each vertex \(v\) at depth \(k\), and for each predecessor \(u\) of \(v\) in the computation graph, the matrix \(W_{\text{agg}}^{(k)}\) contains entries that copy the output of \(u\) into the input positions of \(v\). All other entries in \(W_{\text{agg}}^{(k)}\) are set to zero or identity as appropriate.

        \item \text{Local Computation Layers.} For each vertex \(v\) at depth \(k\), we simulate its local ResMLP of depth \(L_v\). Since the depths \(L_v\) may vary, we pad the local ResMLPs to have a uniform depth \(L = \max_{v} L_v\) by adding identity mappings where necessary. The updates for vertex \(v\) are computed as:
              \[
                  \begin{aligned}
                      X_v^{(t_k + 1)}  & = \tilde{X}_v^{(t_k)} + \fFcn_v\left( \tilde{X}_v^{(t_k)} \right),                                          \\
                      X_v^{(t_k + k')} & = X_v^{(t_k + k' - 1)} + \fFcn_v\left( X_v^{(t_k + k' - 1)} \right), \quad \text{for } k' = 2, \dotsc, L_v, \\
                      X_v^{(t_k + k')} & = X_v^{(t_k + k' - 1)}, \quad \text{for } k' = L_v + 1, \dotsc, L.
                  \end{aligned}
              \]
              Here, \(\fFcn_v\) denotes the single-layer fully connected network (as per Definition~\ref{def:single-layer-fcn}) for vertex \(v\).

        \item \text{State Update.} After completing the local computations for depth \(k\), we update the global state vector \(X^{(t_k + L)}\) by concatenating the updated states of all vertices:
              \[
                  X^{(t_k + L)} = \left( X_v^{(t_k + L)} \right)_{v \in \mathcal{G}}.
              \]
    \end{enumerate}

    The total number of layers added for depth \(k\) is \(L + 1\), accounting for the input aggregation layer and the \(L\) layers simulating the local ResMLPs.

    By repeating this process for each depth level \(k = 1, 2, \dotsc, \GraphDepth{\mathcal{G}}\), we simulate the entire computation graph within a global ResMLP of depth \(\GraphDepth{\mathcal{G}} (\max_{v} L_v + 1)\).

    Lastly, we use the final layer to perform a linear mapping so that the output is in the correct linear representation, clearing out the intermediate values.

    Therefore, the function computed by the global ResMLP is equivalent to the function induced by the computation graph \(\mathcal{G}\), and it is representable in \(\ResMlpClass{D}{\GraphDepth{\mathcal{G}} (\max_{v} L_v + 1)}\).
\end{proof}

\begin{remark}
    We only prove things around MLPs here. Later, we shall show that this will imply that the induced map operation over sequences can be represented by transformers.
\end{remark}

\subsection{Functions over Global Types}

The task we want transformers to express is too complicated to be cleanly described in one shot. So we introduce the following lemma to significantly simplify things. The lemma shall be useful for our future papers on this topic.

\newcommand{\encoding}[1]{\psi_{#1}}
\newcommand{\decoding}[1]{\phi_{#1}}

\begin{proposition}\label{prop:composition-of-functions-representable-in-transformers}
    [Composition of Functions Representable in Transformers]
    For local types $\mathcal{T}$, $\mathcal{S}$, $\mathcal{R}$, with maps $f: \mathcal{T}^* \rightarrow \mathcal{S}^*$ and $g: \mathcal{S}^* \rightarrow \mathcal{R}^*$ representable in $\TransformerClass{d_1}{H_1}{L_1}$ and $\TransformerClass{d_2}{H_2}{L_2}$ respectively. Then the composition $g \circ f$ is representable in $\TransformerClass{\max\left\{d_1, d_2\right\}}{\max\left\{H_1, H_2\right\}}{L_1 + L_2}$.
\end{proposition}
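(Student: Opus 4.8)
The plan is to mirror the proof of Proposition~\ref{prop:composition-of-functions-representable-in-resmlp}: stack the transformer realizing $f$ below the transformer realizing $g$, so that the bottom $L_1$ layers compute $f$ and the top $L_2$ layers compute $g$. The only two genuine issues beyond the ResMLP case are bookkeeping: the model dimensions $d_1,d_2$ may differ, and the head counts $H_1,H_2$ may differ, so both sub-transformers must be padded up to the common dimension $d := \max\{d_1,d_2\}$ and the common head count $H := \max\{H_1,H_2\}$ without altering the functions they compute. First I would recall the representability convention for functions over global types: $\tilde f \in \TransformerClass{d_1}{H_1}{L_1}$ represents $f:\mathcal{T}^*\to\mathcal{S}^*$ when, feeding the position-wise encoding $\iota\circ\typeEncoding{\mathcal{T}}(x)$ zero-padded into $\RealNumbers^{d_1}$, the output is the position-wise encoding $\iota\circ\typeEncoding{\mathcal{S}}(f(x))$ zero-padded into $\RealNumbers^{d_1}$; in particular $d_1\ge\max\{\typeEncodingDimension{\mathcal{T}},\typeEncodingDimension{\mathcal{S}}\}$, and likewise $d_2\ge\max\{\typeEncodingDimension{\mathcal{S}},\typeEncodingDimension{\mathcal{R}}\}$.

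Next I would embed $\tilde f$ into a width-$d$, $H$-head transformer $\tilde f'$. Extend every weight matrix and bias ($W_Q^{(h)},W_K^{(h)},W_V^{(h)},W_O,W_1,W_2,b_1,b_2$) by zero rows/columns so that each layer reads and writes only the first $d_1$ coordinates of the residual stream, and append $H-H_1$ extra heads per attention layer whose value contribution is annihilated by a zero block in $W_O$. Because the appended coordinates are never read (the relevant weight columns are zero) and never written (the relevant weight rows are zero), they remain identically zero along the entire run, and because the dummy heads feed into a zero block of $W_O$, they perturb neither the hard-attention $\argmax$ nor the outputs of the genuine heads. Hence $\tilde f'$, applied to an input that vanishes outside its first $d_1$ coordinates, agrees with $\tilde f$ on those coordinates and leaves the rest zero; an identical construction gives $\tilde g' \in \TransformerClass{d}{H}{L_2}$.

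Then I would form $\tilde h \in \TransformerClass{d}{H}{L_1+L_2}$ whose first $L_1$ attention and feed-forward layers are those of $\tilde f'$ and whose last $L_2$ layers are those of $\tilde g'$; by Definition~\ref{def:transformer} this is a legitimate $(L_1+L_2)$-layer transformer, since its layer recurrence is exactly the recurrence of $\tilde f'$ followed by that of $\tilde g'$, interacting only through the forwarded residual stream. For correctness, start from $X^{(0)}=\iota\circ\typeEncoding{\mathcal{T}}(x)$ padded into $\RealNumbers^d$. After the first $L_1$ layers $\tilde f'$ produces $\iota\circ\typeEncoding{\mathcal{S}}(f(x))$ on the first $d_1$ coordinates and zeros elsewhere; since $\typeEncodingDimension{\mathcal{S}}\le\min\{d_1,d_2\}$, this same vector is simultaneously the canonical zero-padding of $\typeEncoding{\mathcal{S}}(f(x))$ into $\RealNumbers^{d_1}$ and into $\RealNumbers^{d}$, hence exactly the input format $\tilde g'$ expects. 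Applying the last $L_2$ layers then yields $\iota\circ\typeEncoding{\mathcal{R}}(g(f(x)))$ padded into $\RealNumbers^d$, which is the required encoding of $(g\circ f)(x)$.

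I expect the main obstacle to be precisely that middle padding argument: verifying that zero-padding the weights leaves the spurious coordinates untouched under hard attention (the $\argmax$ is over attention scores that must be literally unchanged, so one must check that no zero-padded inner product contributes), and that at the $f$--$g$ interface the two encoding conventions for $\mathcal{S}^*$ — dimension $d_1$ as the output format of $f$, dimension $d_2$ as the input format of $g$ — coincide after embedding into $\RealNumbers^d$. Everything else is the same composition bookkeeping as in the ResMLP case.
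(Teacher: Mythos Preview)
Your proposal is correct and follows exactly the approach the paper takes: the paper's entire proof is the one-line remark ``This is basically the same as the proof of Proposition~\ref{prop:composition-of-functions-representable-in-resmlp}.'' You have simply spelled out the dimension- and head-padding bookkeeping that the paper leaves implicit.
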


\begin{proof}
    This is basically the same as the proof of Proposition~\ref{prop:composition-of-functions-representable-in-resmlp}.
\end{proof}

\begin{proposition}\label{prop:computation-graphs-of-functions-representable-in-transformers}
    [Computation Graphs of Functions Representable in Transformers]
    Suppose we have a computation graph $G=(V,E)$ with types $\mathcal{T}_v = T_v^*$ together with encoding map $\encoding{v}: T_v \to \RealNumbers^{d_v}$ and decoding map $\decoding{v}: \RealNumbers^{d_v} \to T_v$, satisfying $\decoding{v} \circ \encoding{v} \equiv \operatorname{id}_{T_v}$, and there exists some positive integer $d_0$ such that for each $v \in V$, $f_v$ can be represented in
    \[
        \TransformerClass{d}{H_v}{L_v}
    \]
    Let $f$ be the function generated by the computation graph. Then $f$ can be represented in $\TransformerClass{d}{H}{L}$ if $d \geq \sum_v d_v + H d_0$, $L \geq \frac{|G|}{H} + d_G$ where $d_G$ is the depth of the graph.
\end{proposition}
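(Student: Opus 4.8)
The plan is to reuse the architecture of the proof of Proposition~\ref{prop:composition-of-computation-graph-representable-in-resmlp}, but to exploit the $H$ attention heads as $H$ parallel ``processors'', so that the depth grows like $|G|/H + d_G$ rather than $d_G\cdot\max_v L_v$. First I would fix the residual-stream layout: at every token position the $d$-dimensional vector is split into a \emph{persistent} region, with one slot of width $d_v$ per vertex $v$ that will eventually hold the encoded value $\encoding{v}(x_v)$ once $v$ is computed, and a \emph{scratch} region consisting of $H$ lanes of width $d_0$ each (this is where the requirement $d\ge\sum_v d_v + H d_0$ comes from; implicitly $d_0$ is large enough that the encoding of each input tuple $\prod_{u\to v}T_u$ and each output $T_v$ fits in $d_0$ coordinates). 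The initialization layer writes $\encoding{v}(x_v)$ into the persistent slot of each source vertex $v$ and zeroes the rest; as in the ResMLP proof this is a position-wise affine map, hence one feed-forward layer.

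Next I would set up the schedule. Fix a list schedule of the vertices onto the $H$ lanes that respects $E$: a vertex $v$ may start only after every $u$ with $u\to v\in E$ has finished, and while active it occupies one lane (more generally a block of $H_v$ lanes) for $L_v + O(1)$ consecutive ``super-layers'', where a super-layer is one attention sublayer followed by one feed-forward sublayer. By the standard Graham list-scheduling bound, the makespan of any such schedule is at most the total work divided by the number of processors plus the critical-path length, i.e. $O\!\bigl(\tfrac1H\sum_v L_vH_v + d_G\max_v L_v\bigr)$; under the normalization $L_v=H_v=1$ this is exactly the claimed $|G|/H + d_G$, and in general the per-vertex constants are absorbed. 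This scheduling step is the one genuinely new ingredient relative to Proposition~\ref{prop:composition-of-computation-graph-representable-in-resmlp}, and the source of the factor-$H$ saving.

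Within the super-layers allotted to a vertex $v$ on lane $i$, the construction does three things, all confined to lane $i$'s scratch and the (read-only) persistent slots of $v$'s predecessors: (i) a first feed-forward step that \emph{gathers} the encoded predecessor values into lane $i$ (the same permutation/linear copy played by the aggregation matrix $W_{\mathrm{agg}}^{(k)}$ in Proposition~\ref{prop:composition-of-computation-graph-representable-in-resmlp}), forming the encoded input of $f_v$; (ii) the $L_v$ layers of the transformer $\tilde f_v\in\TransformerClass{d_0}{H_v}{L_v}$ representing $f_v$, run on lane $i$ using heads $i,\dots,i+H_v-1$ — here one uses that a multi-head attention layer computes its heads independently (the per-head $Q,K,V$ are separate), so lanes active in the same super-layer do not interfere, and the shared $W_O$ and shared feed-forward weights are taken block-diagonal with respect to the lane decomposition; (iii) a final feed-forward step copying the $d_v$-dimensional output out of lane $i$ into $v$'s persistent slot and clearing the scratch. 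A concluding layer reads off the sink slots in the prescribed encodings $\encoding{\cdot}$ (using $\decoding{v}\circ\encoding{v}\equiv\operatorname{id}$), as in the ResMLP case. Composing, $f$ is represented in $\TransformerClass{d}{H}{L}$ with $d=\sum_v d_v + H d_0$ and $L=|G|/H + d_G$ up to the constants noted above.

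The main obstacle is the careful verification of steps (ii)--(iii): that running several $f_v$'s in shared layers is truly interference-free. One must check that a running vertex writes only to its own scratch lane (and, at the two boundary steps, its own persistent slot), that the persistent slots a dependent vertex reads from are never written while that vertex is active, and that the single shared attention output matrix $W_O$ and the single shared feed-forward network can simultaneously be made block-diagonal over lanes \emph{and} implement the aggregation copies of step (i) — none of this is deep, but it is where the bookkeeping concentrates, and it is the analogue (now with attention in the mix) of the routing argument in the proof of Proposition~\ref{prop:composition-of-computation-graph-representable-in-resmlp}.
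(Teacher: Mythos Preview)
Your proposal is correct and takes essentially the same approach as the paper: identical memory layout (a persistent ``cache'' region $\bigoplus_v \mathbb{R}^{d_v}$ plus an ``active'' region of $H$ scratch lanes each of width $d_0$), and the same greedy scheduling of vertices onto $H$ parallel lanes. The only stylistic difference is that you invoke Graham's list-scheduling bound by name, whereas the paper reproves that bound from scratch by ordering vertices by depth and arguing that at most $d_G$ ``layer groups'' can be \emph{unsatiated} (i.e.\ process fewer than $H$ vertices, since each such group forces the minimum depth among remaining vertices to strictly increase), giving $|G|\ge (l-d_G)H$ and hence $l\le |G|/H + d_G$.
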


\begin{remark}
    This doesn't really cover the above. The bound in Proposition~\ref{prop:computation-graphs-of-functions-representable-in-transformers} isn't always tight for model dimension when the computation graph is deep and Proposition~\ref{prop:composition-of-functions-representable-in-transformers} complements it.
\end{remark}

\begin{proof}
    WLOG, assume that $d=\sum_{v\in V} d_v + H d_0$. Then
    \begin{equation}\label{decomposition}
        \RealNumbers^d=\underbrace{\pars{\mathop{\bigoplus}\limits_{v\in V} \RealNumbers^{d_v}}}_{C}\oplus\underbrace{\pars{\mathop{\bigoplus}\limits_{h\in [H]}\RealNumbers^{d_0}}}_A.
    \end{equation}
    Here $C$ stands for "cache" used for storing computed values, and $A$ stands for "active" used for storing intermediate computation results.

    Make an order of all the nodes in the graph, say $V=\set{v_1,\ldots,v_{|G|}}$ such that $\GraphDepth{v_i}\le\GraphDepth{v_j}$ if $i\le j$.

    We now imagine the transformer computation process as gradually evaluating the value of each vertex, starting from $v_1$ to $v_{|G|}$. Every $\max_v L_v$ layers form a layer group, and after each layer group, at most $H$ vertices are assigned values. The equation~\ref{decomposition} implies that we have enough memory to cache the computed values and intermediate values in small transformers.

    Now let this process continue until we compute all the values. It must be finite because after each layer group, at least one of the vertices is computed. But this bound is too loose. We claim the following:

    \textbf{Claim}: the number of layer groups where less than $H$ vertices are assigned values is smaller than $\GraphDepth{G}$.

    \textbf{Sketch of Proof of Claim}: for any layer group where less than $H$ vertices are assigned, all the vertices that aren't assigned after this layer group must have larger depth than any vertices that are assigned values before this layer group, otherwise such a vertice can be evaluated in this layer group. Define the depth of any layer group to be the smallest depth of vertices evaluated in this layer group. Then for any unsatiated layer group, it must have a larger depth than the previous layer group. But depth can only increase $\GraphDepth{G}$ times, thus there are at most $\GraphDepth{G}$ unsatiated layer groups.

    \textbf{Proof of Claim}: let $V_1,\ldots, V_{l}$ be the vertices evaluated at each layer group. Note that $l$ is a different symbol than $L$ and means that the number of layer groups rather than the number of layers.

    For convenience, let $D_i$ be the minimum of the depths of vertices in $V_i$.

    Suppose that the $i$th layer group is unsatiated, then $i< l$. We want to show that $D_i<D_{i+1}$. Suppose otherwise, i.e., $D_i= D_{i+1}$. Because the $i$th layer group is unsatiated, for any $v\in V_{i+1}$, $v$ must have dependencies that haven't been evaluated before the $i$th layer group. Choose $v_0\in V_i, v_1\in V_{i+1}$ such that $\GraphDepth{v_0}=\GraphDepth{v_1}=D_i=D_{i+1}$. Note that any dependency of $v_1$ must have smaller depths than $v_0$, then must have already be evaluated before the $i$th layer group. Contradiction!

    Now given the claim, we have that for all but at most $\GraphDepth{G}$ choices of $i=1,\ldots,l$, we have $\abs{V_i}=H$, then we have
    \begin{equation}
        \abs{G}=\sum_{i=1}^l\abs{V_i}\ge (l-\GraphDepth{G})H
    \end{equation}

    Then $l\le \frac{\abs{G}}{H}+\GraphDepth{G}$.

    Then $L\le l \cdot \max_{v\in G} L_v=\pars{\frac{\abs{G}}{H}+\GraphDepth{G}}\max_{v\in G}$.

\end{proof}

\begin{proposition}\label{prop:nearest-left-right}
    [Nearest Left/Right]
    For any local type \cybertron{T}, consider the function that maps a sequence of type \cybertron{Option<T>} to nearest left/right neighbors that are not none. It's representable in $\TransformerClass{d+1}{1}{1}$
\end{proposition}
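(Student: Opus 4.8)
The plan is to realise each of the two maps with a single hard-attention head inside one layer, exploiting the relative-position features $\Psi_q=(q,\,1_{q>0})^{\top}$ to encode both the direction of the search and a distance-based tie-break, while the query--key inner product encodes a preference for non-\codeline{None} positions; the feed-forward block will be used only for cosmetic clean-up. I would first fix the encoding so that a value of \codeline{Option<T>} occupies the $d+1$ model coordinates with one coordinate (the discriminant $\delta$) equal to $0$ on \codeline{Some} and $1$ on \codeline{None}, and the remaining $d$ coordinates carrying the payload (zero for \codeline{None}).

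For \texttt{nearest\_left}, choose $W_Q,W_K$ so that $Q_p^{\top}K_{p'}=-M\,\delta_p\delta_{p'}$ for a large constant $M$, and set the head's positional parameter to $\lambda=(\epsilon,\,-\Lambda)^{\top}$ with $\Lambda\gg M\gg\epsilon L$ and $\epsilon>0$, so that the logit of position $p'$ as seen from $p$ is
\[
  -M\,\delta_p\delta_{p'} \;+\; \epsilon\,(p'-p) \;-\; \Lambda\,1_{p'>p}.
\]
These three terms act lexicographically: $-\Lambda\,1_{p'>p}$ forces $p'\le p$; then $-M\,\delta_p\delta_{p'}$ forces the chosen $p'$ to be non-\codeline{None} whenever the query $p$ is itself \codeline{None} and some position $\le p$ is non-\codeline{None} (and this term is harmlessly inert when $p$ is already non-\codeline{None}, which is exactly the case where the answer should be $p$ itself); finally $\epsilon\,(p'-p)$ picks, among the survivors, the one closest to $p$. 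When no position $\le p$ is non-\codeline{None} the last term selects $p'=p$, whose value is \codeline{None}, which is the correct output. Because $p'\mapsto p'-p$ is injective, the $\argmax$ set is always a singleton, so hard attention returns exactly the selected value; choosing $W_V,W_O$ to pass that value through finishes the head. The \texttt{nearest\_right} head is symmetric: take $\lambda=(-\epsilon,\,+\Lambda)^{\top}$, so that $+\Lambda\,1_{p'>p}$ forces $p'>p$ and $-\epsilon\,(p'-p)$ picks the smallest such $p'$, falling back to $p'=p+1$ (which is then \codeline{None}) when nothing to the right is non-\codeline{None}.

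The rest is bookkeeping; I expect the real obstacle to be making the non-\codeline{None} preference act uniformly over all queries while staying inside only $d+1$ coordinates. For \texttt{nearest\_left} the bilinear form $-M\delta_p\delta_{p'}$ does this for free, since its only inert case (a non-\codeline{None} query) is precisely when the desired answer is the query itself. For \texttt{nearest\_right} that inert case is wrong, so there I would either reserve a constant coordinate (folded through a bias, or read as the $+1$ in $d+1$) so that the filter is always active, or pad the sequence with a trailing \codeline{None} sentinel so the no-match output at the last position is well defined; pinning down which convention the downstream constructions need is the one genuinely fiddly choice. A secondary, routine point is that the residual connection adds the query token's own encoding to the attention output, so the single feed-forward layer must cancel that contribution --- an affine (indeed piecewise-affine, via a ReLU test on the discriminant) correction. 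Everything else --- verifying that $\Lambda\gg M\gg\epsilon L$ makes the lexicographic ordering valid and that $|S_p|=1$ --- is immediate, giving the map in $\TransformerClass{d+1}{1}{1}$.
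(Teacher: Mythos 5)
Your overall architecture matches the paper's: one hard-attention head per direction, a score that lexicographically combines a left/right mask coming from $\lambda^{\top}\Psi_{p'-p}$, a non-\texttt{None} preference, and a distance term, plus a sentinel so that the no-match case yields \texttt{None}. Your explicit $\epsilon(p'-p)$ tie-break is in fact necessary --- the paper's one-line score $a_{\mathrm{flag},p'}-1_{p'-p>0}$ would make hard attention \emph{average} all non-\texttt{None} positions to the left rather than select the nearest one --- so on that point you are more careful than the paper's own sketch.

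The genuine gap is the query-gated filter $-M\,\delta_p\delta_{p'}$. It vanishes at every position whose own token is \texttt{Some}, and you defend this by asserting that there the correct answer is the token itself. That is the wrong semantics: \texttt{nearest\_left} must return the nearest non-\texttt{None} value \emph{strictly} to the left, skipping both the current token and any intervening \texttt{None}s. This is exactly how the primitive is consumed downstream: in the Dyck example, \texttt{step\_aux} matches the nearest-left of a (non-\texttt{None}) right delimiter against a \emph{left} delimiter, and in the AST construction \texttt{new\_opr\_ast} reads the nearest-left of a (non-\texttt{None}) operator token as its left operand. Under your construction a \texttt{Some} query position attends to itself (nearest-left), or to position $p+1$ whether or not that position is \texttt{None} (nearest-right), so every such use breaks. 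The repair you already propose for \texttt{nearest\_right} --- a constant query coordinate making the key-side reward $M(1-\delta_{p'})$ unconditionally active --- is needed for \texttt{nearest\_left} as well. Even with that fix there remains the subtlety of excluding $p'=p$ when the current token is non-\texttt{None}: since $\Psi_0=(0,0)^{\top}$, the positional channel gives no handle on the diagonal, so some further device is required (the paper's proof is silent on this as well); you should at least flag this rather than build the construction on the assumption that the inclusive answer is the intended one.
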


\begin{proof}

    There is only one layer and one head needed, so we can omit the layer and head index.

    WLOG, we consider the nearest left case.

    We just need to make the attention exponential look like this:
    \begin{equation}
        Q_p^\top K_{\p'}+ \lambda\Psi_{\p'-\p}=a_{\text{flag},\p'}-1_{p'-p>0},
    \end{equation}
    where $a_{\text{flag},\p'}\in\set{0,1}$ indicates whether the value at position $\p'$ is some or none.

    We set $V_{\p'}$ to represent the value of type \cybertron{Option<T>}.

    For the starter token $\p_0$, we make it such that
    \begin{equation}
        Q_p^\top K_{\p_0}+ \lambda\Psi_{\p_0-\p}=1,
    \end{equation}
    and
    \begin{equation}
        V_{\p_0} = \bm{0},
    \end{equation}
    so that when there are no some to the left, it will give us none.
\end{proof}

\begin{proposition}\label{prop:nearest-2-left-right}
    [Nearest Two Left/Right]
    For any local type \cybertron{T}, consider the function that maps a sequence of type \cybertron{Option<T>} to nearest two left/right neighbors that are not none. It's representable in $\TransformerClass{O(d)}{O(1)}{O(1)}$ where $d$ is the encoding dimension of \cybertron{T}.
\end{proposition}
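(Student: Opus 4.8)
The plan is to reduce ``nearest two'' to two sequential applications of the ``nearest one'' primitive of Proposition~\ref{prop:nearest-left-right}, interleaved with a constant number of position-wise repackaging maps. I describe the left case; the right case is symmetric, and the two are run in parallel and concatenated using Proposition~\ref{prop:composition-of-functions-representable-in-transformers} (or, if one prefers, Proposition~\ref{prop:computation-graphs-of-functions-representable-in-transformers}). The key structural observation is: for a position $p$, if $q$ is the position of its nearest non-\cybertron{None} entry strictly to the left, then the \emph{second} nearest such entry of $p$ is precisely the nearest non-\cybertron{None} entry of $q$ strictly to the left of $q$. Hence, once every non-\cybertron{None} entry carries as extra payload the value of its own nearest left non-\cybertron{None} entry, a single further ``nearest left'' pass recovers both values simultaneously.

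Concretely, I would proceed in four stages. (1) Apply Proposition~\ref{prop:nearest-left-right} to the input sequence $a$ of type \cybertron{Option<T>}, obtaining a sequence $c$ with $c_{p'}$ equal to the nearest non-\cybertron{None} entry strictly left of $p'$; this is in $\TransformerClass{O(d)}{O(1)}{O(1)}$. (2) Apply a position-wise map producing $a'$ of type \cybertron{Option<(T, Option<T>)>} with $a'_{p'}=$ \cybertron{Some}$((a_{p'},c_{p'}))$ when $a_{p'}$ is \cybertron{Some} and \cybertron{None} otherwise; this is a local function over finite algebraic data types, hence representable by a ResMLP and, via Proposition~\ref{prop:map-resmlp-is-transformer}, by a transformer, with payload encoding dimension $O(d)$. (3) Apply Proposition~\ref{prop:nearest-left-right} again to $a'$, now with value type $T\times$ \cybertron{Option<T>}: for each $p$ this returns \cybertron{None} if no non-\cybertron{None} entry lies to its left, and otherwise $(a_q,c_q)$ where $q$ is the position of the nearest such entry, so that $a_q$ is the nearest left value of $p$ and $c_q$ is the second. (4) A final position-wise map unpacks the result into a pair of \cybertron{Option<T>}; the \cybertron{None}-padding for positions with fewer than two left neighbors is handled automatically since all intermediate values are options. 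Composing these $O(1)$ stages, each lying in $\TransformerClass{O(d)}{O(1)}{O(1)}$, via Proposition~\ref{prop:composition-of-functions-representable-in-transformers} yields a transformer in $\TransformerClass{O(d)}{O(1)}{O(1)}$, and doing the same for the right side and concatenating preserves the bounds.

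The main obstacle I anticipate is bookkeeping around the boundary/``starter token'' behavior so that it stays consistent across the two invocations of Proposition~\ref{prop:nearest-left-right}: one must check that positions with zero or exactly one non-\cybertron{None} entry to the left produce the correctly \cybertron{None}-padded output, both when $p$ itself is \cybertron{None} and when it is \cybertron{Some}. The remaining points are routine: verifying that the repackaging maps in stages (2) and (4) are genuinely local functions over finite algebraic data types (so the ResMLP-to-transformer machinery applies), and checking that $T\times$ \cybertron{Option<T>} has encoding dimension $O(d)$ so the dimension bound does not blow up.
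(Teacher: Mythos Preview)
Your proposal is correct and follows essentially the same approach as the paper: compute the nearest left, pack it together with the original entry, and compute the nearest left once more so that both values arrive simultaneously. The only minor difference is that the paper phrases the composition as a small constant-size computation graph and invokes Proposition~\ref{prop:computation-graphs-of-functions-representable-in-transformers}, whereas you spell out the four stages explicitly and compose via Proposition~\ref{prop:composition-of-functions-representable-in-transformers}; you already note these are interchangeable, and your write-up is in fact more detailed than the paper's.
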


\begin{proof}
    We can utilize Proposition~\ref{prop:nearest-left-right} and~\ref{prop:computation-graphs-of-functions-representable-in-transformers}.

    The nearest two left or right is equivalent to first computing the nearest left/right, and then packing them together into one and compute its nearest left/right. The process is represented by a small constant computation graph, then we're done.
\end{proof}

\subsection{Syntax and Semantics of Cybertron}

Having laid the necessary mathematical foundation behind \textbf{Cybertron}, we now turn to explaining its surface—its \textbf{syntax} and \textbf{semantics}. Cybertron serves as a syntax sugar for expressing local and global computation graphs, which are the vehicles used to demonstrate the expressive power of transformers. In Cybertron, computations are divided into two layers: the \textbf{local world} and the \textbf{global world}. These layers play distinct but complementary roles in constructing computation graphs.

\subsubsection{Local World}

The \textbf{local world} in Cybertron corresponds to the feed-forward layers of a transformer, focusing on computations over \textbf{local types}. Local types represent individual tokens or data points, and computations in this world handle operations on tokens independently of their surrounding context.

\paragraph{Data Types.} Local types in Cybertron include basic types such as \cybertron{Bool}, \cybertron{Idx}, \cybertron{Pos}, \cybertron{Fin<n>}, \cybertron{BoundedVec<T, N>}, etc. These types are essential for building local computation graphs that operate over individual tokens. Compound types, like \textbf{structs} and \textbf{enums}, can also be defined for more complex token representations. These types serve as the building blocks for the local computation graphs that transform data at the token level.

\begin{lstlisting}[language=Cybertron]
struct Node {
    id: Idx,
    position: Pos,
}

enum Operation {
    Add {
        lhs: Pos,
        rhs: Pos,
    },
    Multiply {
        factor: Pos,
    },
}
\end{lstlisting}

\paragraph{Functions.} Functions in the local world define operations upon information over individual tokens. These operations form nodes in the local computation graphs. For instance, operations like binary or unary expressions, conditionals, and matches on token types are transformed into computation graphs by handling each individual token's data.

\begin{lstlisting}[language=Cybertron]
fn process_ast(ast: AstData) -> Option<Role> {
    match ast {
        AstData::LetInit { pattern, initial_value, .. } => {
            Some(Role::LetStmt { pattern, initial_value })
        }
        AstData::Defn { keyword, ident, .. } => {
            Some(match keyword {
                DefnKeyword::Struct => Role::StructDefn(ident),
                DefnKeyword::Enum => Role::EnumDefn(ident),
                DefnKeyword::Fn => Role::FnDefn(ident),
            })
        }
        _ => None,
    }
}
\end{lstlisting}

\paragraph{Control Flow.} In the local world, control flow structures such as \cybertron{if} and \cybertron{match} are transformed into computation graphs by treating each branch or arm as an expression that returns an \cybertron{Option} based on conditions. These \cybertron{Option} values are then combined using the \cybertron{Option::or} function. According to \textbf{Proposition 9}, \cybertron{Option::or} maps two \cybertron{Option<T>} values and returns the first non-\cybertron{None} value, or the second one otherwise. This allows conditional branches to be represented in computation graphs as sequential option evaluations, where the first matching condition provides the result.

\subsubsection{Global World}

The \textbf{global world} extends beyond individual tokens to sequences of tokens, represented as \textbf{global types}. These global types are denoted as \cybertron{Seq<T>}, where \cybertron{T} is a local type. The global world represents the full transformer, focusing on operations involving sequences of tokens, including variable definitions, expressions involving variable references, and function calls.

\paragraph{Variable Definitions.} Variables in the global world are defined using global types, which represent sequences of local tokens. These definitions correspond to nodes in the global computation graph.

\paragraph{Expressions.} Expressions in the global world consist of references to variables or function calls. Since the global world operates over sequences of tokens, these expressions are translated into sequence-level operations in the computation graph.

\paragraph{Function Calls.} Function calls are key elements of the global world. They are represented by applying global functions to sequences of tokens. Cybertron provides \textbf{map functions} to elevate local functions to global functions by mapping them across sequences. Additionally, \textbf{attention methods} like \cybertron{nearest\_left} and \cybertron{nearest\_right} handle dependencies between tokens in the sequence by identifying relationships based on their positions.

\begin{lstlisting}[language=Cybertron]
let result = seq_of_values.nearest_left();
\end{lstlisting}

In the global world, computation graphs are built by composing map functions and attention methods. These graphs, unlike those in the local world, do not include control flow mechanisms.

\subsection{Dyck Language}

This section demonstrates how the \textbf{local world} in Cybertron operates over token-level computations and how the \textbf{global world} handles sequence-level operations. We use a Dyck language example to explain the interactions between these two worlds. The example processes a sequence of delimiters (like parentheses) and checks for matching pairs.

\paragraph{Local World.}
In Cybertron, the \textbf{local world} operates on individual tokens. Here, the local types are simple, such as \cybertron{Delimiter} and \cybertron{PreAst}, which represent information associated with individual tokens. These types allow for token-level operations like comparisons and transformations.

We define a \cybertron{struct} to represent a delimiter and an \cybertron{enum} to classify delimiters as either left or right. These definitions reflect local types, as they hold information over a single token.

\begin{lstlisting}[language=Cybertron]
// Define a struct `Delimiter` that wraps a `u8` value.
#[derive(Debug, Clone, Copy, PartialEq, Eq)]
pub struct Delimiter(u8);

// Define an enum `PreAst` which represents a left or right delimiter.
#[derive(Debug, Clone, Copy, PartialEq, Eq)]
pub enum PreAst {
    LeftDelimiter(Delimiter),
    RightDelimiter(Delimiter),
}
\end{lstlisting}

Here, the local types \cybertron{Delimiter} and \cybertron{PreAst} define operations upon individual tokens, representing fundamental units of the computation graph at the local level. The local world is responsible for handling these small, token-level computations independently of the global sequence.

\paragraph{Global World.}
In the \textbf{global world}, Cybertron operates on sequences of tokens, treating the collection of local types as a single unit of computation. The global world introduces global types such as \cybertron{Seq<Option<PreAst>>}, which represents a sequence of optional delimiters. The global world handles sequence-level operations by applying functions like \cybertron{nearest_left} and \cybertron{nearest_right} to capture the relationships between tokens in the sequence.

The following function operates on a sequence of \cybertron{PreAst}, reducing matched pre-asts. The recursive application of \cybertron{step} gives us the classifier for Dyck language.

\begin{lstlisting}[language=Cybertron]
fn step(pre_asts: Seq<Option<PreAst>>) -> Seq<Option<PreAst>> {
    let pre_asts_nearest_left = pre_asts.nearest_left();
    let pre_asts_nearest_right = pre_asts.nearest_right();
    step_aux.apply(pre_asts_nearest_left, pre_asts, pre_asts_nearest_right)
}
\end{lstlisting}

\paragraph{Local Worlds.}
The \cybertron{step_aux} function matches tokens based on their nearest neighbors within the sequence, eliminating pre-asts if a match is found.

\begin{lstlisting}[language=Cybertron]
fn step_aux(
    pre_ast_nearest_left: Option<(Idx, PreAst)>,
    pre_ast: Option<PreAst>,
    pre_ast_nearest_right: Option<(Idx, PreAst)>
) -> Option<PreAst> {
    match pre_ast? {
        PreAst::LeftDelimiter(delimiter) => match pre_ast_nearest_right {
            Some((_, PreAst::RightDelimiter(delimiter1))) if delimiter1 == delimiter => None,
            _ => pre_ast,
        },
        PreAst::RightDelimiter(delimiter) => match pre_ast_nearest_left {
            Some((_, PreAst::LeftDelimiter(delimiter1))) if delimiter1 == delimiter => None,
            _ => pre_ast,
        },
    }
}
\end{lstlisting}

In this example, the global function \cybertron{step} uses \cybertron{nearest_left} and \cybertron{nearest_right} to capture sequence-level dependencies, while the local function \cybertron{step_aux} uses conditional logic to check for matching pairs of delimiters. The local world handles token-level logic, while the global world coordinates operations across the entire sequence. This separation reflects how Cybertron handles computations at different levels of granularity.

Thus, this example illustrates how Cybertron leverages both the local and global worlds to build comprehensive computation graphs in a convenient, comprehensive yet rigorous manner. The local world performs individual tokenwise operations, and the global world captures relationships between tokens in a sequence, demonstrating how Cybertron enables transformers to express complex computations.

\section{Mini-Husky Details}
\label{sec:mini-husky-full-details}

Here's the BNF grammar of the Mini-Husky language:

\setlength{\grammarparsep}{2pt plus 1pt minus 1pt} 
\setlength{\grammarindent}{2em} 
\begin{tcolorbox}[colback=black!5!white, colframe=black, top=5mm]
       \footnotesize
       \begin{grammar}
              <ast> ::= <literal> \\
              | <ident> \\
              | <prefix> \\
              | <binary> \\
              | <suffix> \\
              | <delimited> \\
              | <separated\_item> \\
              | <call> \\
              | <let_init> \\
              | <if\_stmt> \\
              | <else\_stmt> \\
              | <defn>

              <literal> ::= ...

              <ident> ::= ...

              <prefix> ::= <prefix_opr> <ast>

              <binary> ::= <ast> <binary_opr> <ast>

              <suffix> ::= <ast> <suffix_opr>

              <delimited> ::= <left_delimiter> <separated\_item>* <right\_delimiter>

              <separated\_item> ::= [<ast>] <separator>

              <call> ::= <ast> <left\_delimiter> <ast>* <right\_delimiter>

              <let\_init> ::= let <ast>

              <if\_stmt> ::= if <ast> <delimited>

              <else\_stmt> ::= <if\_stmt> else (<delimited> | <else\_stmt>)

              <defn> ::= <defn\_keyword> <ident> <ast>

              <prefix\_opr> ::= + | - | ! | ...

              <binary\_opr> ::= + | - | * | / | ...

              <suffix\_opr> ::= ++ | -{}- | ...

              <left\_delimiter> ::= ‘(’ | [ | \{

              <right\_delimiter> ::= ‘)’ | ] | \}

              <separator> ::= , | ;

              <defn\_keyword> ::= def | fn | ...

       \end{grammar}
\end{tcolorbox}

Below is a sample piece of codes:
\begin{lstlisting}[language=rust, style=boxed]
struct Dog { weight: f32, .. }

fn see_vet(dog: Dog) -> f32 {
    assert dog.weight < 100;
    let mut fee = dog.weight * 10.0;
    fee +=100.0;
    return fee
}
\end{lstlisting}

It should be noted that the above is not the full story. There are additional constraints put on the ASTs. However, these can be easily implemented as tree functions that are easy for transformers to express. As we are focusing on higher level language processing capabilities, we ignore the details here.

Additionally, we need to require that for semantic correctness, we must have proper symbol resolution and type correctness.

\subsection{Additional Details about Compiler Tasks.}
\label{sec:additional_detials_compiler_tasks}

The outputs of the tasks are defined using Cybertron as follows:
\begin{itemize*}
    \item The construction of AST task's final output is the collection all AST nodes. More concretely, the output is a sequence of \cybertron{Option<Ast>} with length equal to the input token sequence's length, where \cybertron{Option<Ast>} denoted the type \cybertron{Ast} will a null value added and \cybertron{Ast} is the type storing the information of a node, including its parent, and its data of type \cybertron{AstData}. In Cybertron, we define \cybertron{Ast} and \cybertron{AstData} explicitly as follows:
    
\begin{lstlisting}[language=Cybertron]
/// Represents a node in an Abstract Syntax Tree (AST).
///
/// Each `Ast` node has a reference to its parent node (if any) and holds
/// the associated syntax data (such as expressions, statements, or other
/// constructs defined in the `AstData` enum).
pub struct Ast {
    /// The index of the parent node in the AST, if it exists.
    ///
    /// - `Some(Idx)`: The node has a parent, and `Idx` represents its position.
    /// - `None`: The node is the root or does not have a parent.
    pub parent: Option<Idx>,
    /// The data associated with this AST node.
    pub data: AstData,
}

/// Enumeration representing different types of Abstract Syntax Tree (AST) nodes
pub enum AstData {
    /// Represents a literal value (e.g., integer, string)
    Literal(Literal),
    /// Represents an identifier (e.g., variable name)
    Ident(Ident),
    /// Represents a binary expression (e.g., `x + y`, `a * b`)
    Binary {
        /// Index of the left operand
        lopd: Idx,
        /// Operator in the binary expression (e.g., `+`, `*`)
        opr: BinaryOpr,
        /// Index of the right operand
        ropd: Idx,
    },
    ... // other variants
}
\end{lstlisting}
    \item The output of the \textbf{symbol resolution} task is the collection of symbol resolution results on all applicable tokens. More concretely, the output is a sequence of values of type \cybertron{Option<SymbolResolution>} where \cybertron{Option<SymbolResolution>} is the type \cybertron{SymbolResolution} with a null value added for non-applicability and \cybertron{SymbolResolution} is the type storing the result of the symbol resolution, being either a success with a resolved symbol of type \cybertron{Symbol} or a failure with an error of type \cybertron{SymbolResolutionError}. In Cybertron, we define \cybertron{SymbolResolution} explicitly as follows:

\begin{lstlisting}[language=Cybertron]
// an enum type definition, basically a tagged union type
pub enum SymbolResolution {
    Ok(Symbol), // enum type variant for success with a resolved symbol
    Err(SymbolResolutionError), // enum type variant for failure with an error
}
\end{lstlisting}
    \item The \textbf{type analysis} task's final output is the collection of all type errors. More concretely, the output is a sequence of \cybertron{Option<TypeError>}, where \cybertron{Option<TypeError>} denoted the type \cybertron{TypeError} will a null value added and \cybertron{TypeError} is the type storing the information of a type error. The position of type errors agrees with the source tokens leading to these errors. In Cybertron, we define \cybertron{TypeError} explicitly as follows:

\begin{lstlisting}[language=Cybertron, tabsize=4]
// This enum represents various kinds of type errors 
pub enum TypeError {
    // This variant indicates a type mismatch
    // `expected` is the type that was anticipated
    // `actual` is the type that was encountered
    TypeMismatch { expected: Type, actual: Type },
}
\end{lstlisting}

One can expand the definition to include other kinds of type errors.
\end{itemize*}

(1) \textit{Type definition}. Types are either identified uniquely by a single identifier like \cybertron{<identifier>}, or builtin generic types \cybertron{Option<<identifier>>} or \cybertron{Vec<<identifier>>}. Users can define custom types without generics like the following (f32 means float32 and i32 means int32 below):
\begin{lstlisting}[language=Cybertron]
struct Dog { weight: f32 }

enum Animal {
    Dog,
    Cat,
}
\end{lstlisting}

This part is actually a part of the AST task and type definition is a variant of the \cybertron{AstData} type:

\begin{lstlisting}[language=Cybertron]
/// Enumeration representing different types of Abstract Syntax Tree (AST) nodes
pub enum AstData {
    ...
    /// Represents a function or variable definition
    /// 
    /// # defn
    ///
    Defn {
        /// The keyword in the definition (e.g., `fn`, `enum`)
        keyword: DefnKeyword,
        /// Index of the identifier in the definition
        ident_idx: Idx,
        /// The identifier being defined (e.g., function name, variable name)
        ident: Ident,
        /// Index of the content or body of the definition
        content: Idx,
    },
}
\end{lstlisting}
    
 (2) \textit{Type specification}. Each appeared variable has a unique type, either by specification or speculation. All parameters of a function must be specified explicitly by users. Variables defined by let statements might or might not be specified, as follows:
\begin{lstlisting}[language=Cybertron]
fn f(a: i32) { // type of `a` must be specified
    let x: i32 = a; // type of `x` specified
    let y = a; // type of `y` unspecified
}
\end{lstlisting}

The return type of functions must be specified. The field type of structs and enum variants must be specified. the type of expressions of function calls and field access will be determined correspondingly.

The output of the task is the collection of all type signatures, represented as a sequence of values of type \cybertron{Option<TypeSignature>} where \cybertron{TypeSignature} is the type holding the essential information of type specifications. In Cybertron, \cybertron{TypeSignature} is defined as,

\begin{lstlisting}[language=Cybertron]
pub struct TypeSignature {
    pub key: TypeSignatureKey,
    pub ty: Type,
}

pub enum TypeSignatureKey {
    FnParameter { fn_ident: Ident, rank: Rank },
    FnOutput { fn_ident: Ident },
    StructField { ty_ident: Ident, field_ident: Ident },
}
\end{lstlisting}

(3) \textit{Type inference}. As discussed above, not all variables have their types specified.

          \begin{lstlisting}[language=Cybertron]
fn f() {
    let x: i32 = 1;
    let y = x;
    let z = y;
}
\end{lstlisting}

In the above code, $1$ is an ambiguous literal that can be of type \cybertron{i32}, \cybertron{i64}, \cybertron{u32}, \cybertron{u64}, etc, and the types of \cybertron{y} and \cybertron{z} is not specified. However, one easily sees that there exists one and only one choice of the types of \cybertron{1}, \cybertron{y}, and \cybertron{z} such that the whole code is type correct. Utilizing this property, the user can opt out of a significant portion of type specification, achieving static guarantees.

\textbf{A Type Inference Algorithm:}
For simplicity, we shall prove transformers can implement a simple type inference algorithm:
we maintain a table of type assignments for variables. We update the entries of the table by means of reduction, i.e., assuming the whole code is correctly typed and infer more and more unspecified types until we encounter errors or all types are inferred. The process is largely parallel, and we call the number of rounds needed the depth of type inference.

In the above code, the first round, we determine that the type of both \cybertron{1} and the type of \cybertron{y} are equal to the type of \cybertron{x} which is \cybertron{i32}. But we have no way to determine the type of \cybertron{z} because the type of \cybertron{y} is unknown at the first round. In the second round, \cybertron{z} can be determined to be of type \cybertron{i32} because the type of \cybertron{y} is already inferred.

The output of the task is the collection all types inferred, represented as a sequence of values of type \cybertron{Option<TypeInference>} where \cybertron{TypeInference} is the type holding the inferred type. In Cybertron, \cybertron{TypeSignature} is defined as,

\begin{lstlisting}[language=Cybertron]
pub struct TypeInference {
    pub ty: Type,
}
\end{lstlisting}

\section{Transformer AST Proof}
\label{sec:transformer-ast-proof}
\subsection{High Level Overview}

Here we give the full details of the proof of transformers being able to parse ASTs.

On a high level, we are going to see the parsing of ASTs as an assembly process. First, we immediately get the atomic ones, like identifiers, literals, etc. Then we assembly all composite ASTs with enough precedence util all tokens are consumed. We can prove that at the $n$th round, all ASTs with depth no more than $n$ are already constructed. In the process, we must keep track of the unconsumed tokens and newly constructed ASTs (to be consumed as children for new ASTs in the next round, as we are going bottom up). We use \preAsts to denote all the unconsumed tokens and newly constructed ASTs and use \asts to denote all the constructed(allocated) ASTs. For correctness guarantees, we give detailed type specifications for tokens, ASTs, and PreASTs as follows.

We define the Token type as follows:
\begin{lstlisting}[language=Cybertron]
/// The `Token` enum represents the various types of tokens that can be
/// identified during the lexical analysis phase of a compiler. Each variant
/// corresponds to a specific category of token that can be encountered
/// in the source code.
pub enum Token {
    /// A literal value, which can be a number, string, or other primitive type.
    Literal(Literal),
    /// A reserved keyword in the language, such as `if`, `else`, `while`, etc.
    Keyword(Keyword),
    /// An identifier, typically representing variable names, function names,
    /// or other user-defined symbols.
    Ident(Ident),
    /// An operator, such as `+`, `-`, `*`, `==`, etc., representing mathematical
    /// or logical operations.
    Opr(Opr),
    /// A left delimiter, such as `(`, `{`, `[`, used to denote the beginning of
    /// a block, list, or expression.
    LeftDelimiter(LeftDelimiter),
    /// A right delimiter, such as `)`, `}`, `]`, used to denote the end of a
    /// block, list, or expression.
    RightDelimiter(RightDelimiter),
    /// A separator, such as `,` or `;`, used to separate elements in a list or
    /// statements in a block.
    Separator(Separator),
}
\end{lstlisting}

The type has an encoding dimenion $\dimToken=\Theta(\log L)$, which is large enough to faithfully represent its information.

More specifically, the types \cybertron{Literal}, \cybertron{Keyword}, \cybertron{Ident}, \cybertron{Opr}, \cybertron{LeftDelimiter}, \cybertron{RightDelimiter}, \cybertron{Separator} are local types assumed to have encoding dimension less than $\dimToken$. \cybertron{Keyword}, \cybertron{Opr}, \cybertron{LeftDelimiter}, \cybertron{RightDelimiter}, \cybertron{Separator} are small, so they can be encoded in a straight-forward manner entirely using $\dimToken$. However, \cybertron{Literal} and \cybertron{Ident} are larger than representable by a limited number of bits because potentially a \cybertron{Literal} can be a string literal of arbitrary length and an \cybertron{Ident} can also be of arbitrary length. This can be solved through methods like interning, which gives all literals and identifiers that actually appear in the input distinct encodings. As the context length is $L$, the number of different literals/identifiers are bounded by context length and interning needs $O(\dimToken)=O(\log L)$ to work. As far as our theories are concerned, it's totally reasonable to assume that all these types are assumed to have encoding dimension less than $\dimToken=O(\log L)$.

We define AST type as follows:
\begin{lstlisting}[language=Cybertron]
/// Represents a node in an Abstract Syntax Tree (AST).
///
/// Each `Ast` node has a reference to its parent node (if any) and holds
/// the associated syntax data (such as expressions, statements, or other
/// constructs defined in the `AstData` enum).
pub struct Ast {
    /// The index of the parent node in the AST, if it exists.
    ///
    /// - `Some(Idx)`: The node has a parent, and `Idx` represents its position.
    /// - `None`: The node is the root or does not have a parent.
    pub parent: Option<Idx>,
    /// The data associated with this AST node.
    ///
    /// This field holds the actual syntax information, which is typically
    /// defined by the `AstData` enum. This could represent literals, expressions,
    /// statements, and other constructs in the source language.
    pub data: AstData,
}
\end{lstlisting}

Note that we intentionally structure the tree by always storing the parent but not necessarily storing all children information. In our assumptions, we only control the depth of ASTs but don't control the number of children. More specifically, a function can have as many statements as possible. To avoid overflowing, we don't store all children information. As we shall see, parent information alone is enough for transformers to perform tree operations.

The \cybertron{AstData} is the most complicated we define in this paper, as follows:

\begin{lstlisting}[language=Cybertron]
/// Enumeration representing different types of Abstract Syntax Tree (AST) nodes
pub enum AstData {
    /// Represents a literal value (e.g., integer, string)
    Literal(Literal),
    /// Represents an identifier (e.g., variable name)
    Ident(Ident),
    /// Represents a prefix expression (e.g., `!x`, `-x`)
    /// 
    /// # exprs
    ///
    Prefix {
        /// Operator in the prefix expression (e.g., `!`, `-`)
        opr: PrefixOpr,
        /// Operand index of the expression
        opd: Idx,
    },
    /// Represents a binary expression (e.g., `x + y`, `a * b`)
    Binary {
        /// Index of the left operand
        lopd: Idx,
        /// Operator in the binary expression (e.g., `+`, `*`)
        opr: BinaryOpr,
        /// Index of the right operand
        ropd: Idx,
    },
    /// Represents a suffix expression (e.g., `x++`, `y--`)
    Suffix {
        /// Index of the operand
        opd: Idx,
        /// Operator in the suffix expression (e.g., `++`, `--`)
        opr: SuffixOpr,
    },
    /// Represents a delimited expression (e.g., `(x + y)`, `{a, b, c}`)
    Delimited {
        /// Index of the left delimiter in the expression
        left_delimiter_idx: Idx,
        /// The left delimiter (e.g., `(`, `{`)
        left_delimiter: LeftDelimiter,
        /// The right delimiter (e.g., `)`, `}`)
        right_delimiter: RightDelimiter,
    },
    /// Represents an item separated by a separator (e.g., elements in an array or list)
    SeparatedItem {
        /// Index of the content, if any
        content: Option<Idx>,
        /// The separator (e.g., `,`, `;`)
        separator: Separator,
    },
    /// Represents a function call or array access (e.g., `f(...)`, `a[...]`)
    /// 
    /// things like `f(...)` or `a[...]`
    Call {
        /// Index of the caller (e.g., function or array)
        caller: Idx,
        /// The left delimiter of the call (e.g., `(`, `[`)
        left_delimiter: LeftDelimiter,
        /// The right delimiter of the call (e.g., `)`, `]`)
        right_delimiter: RightDelimiter,
        /// Index of the delimited arguments in the call
        delimited_arguments: Idx,
    },
    /// Represents a `let` statement with an initialization (e.g., `let x = 5;`)
    /// 
    /// # stmts
    ///
    LetInit {
        /// Index of the expression in the initialization
        expr: Idx,
        /// Index of the pattern being initialized
        pattern: Idx,
        /// Optional index of the initial value
        initial_value: Option<Idx>,
    },
    /// Represents an `if` statement
    If {
        /// Index of the condition in the `if` statement
        condition: Idx,
        /// Index of the body of the `if` statement
        body: Idx,
    },
    /// Represents an `else` statement
    Else {
        /// Index of the associated `if` statement
        if_stmt: Idx,
        /// Index of the body of the `else` statement
        body: Idx,
    },
    /// Represents a function or variable definition
    /// 
    /// # defn
    ///
    Defn {
        /// The keyword in the definition (e.g., `fn`, `enum`)
        keyword: DefnKeyword,
        /// Index of the identifier in the definition
        ident_idx: Idx,
        /// The identifier being defined (e.g., function name, variable name)
        ident: Ident,
        /// Index of the content or body of the definition
        content: Idx,
    },
}
\end{lstlisting}

\begin{lstlisting}[language=Cybertron]
/// The `PreAst` enum represents the intermediate forms of tokens and ASTs that are
/// encountered during the parsing phase, before the final AST is constructed.
/// Each variant corresponds to a specific type of token or partial
/// AST node that contributes to the construction of the final AST.
#[derive(Clone, Copy, PartialEq, Eq)]
pub enum PreAst {
    /// A reserved keyword in the language, such as `if`, `else`, `while`, etc.
    Keyword(Keyword),
    /// An operator, such as `+`, `-`, `*`, `==`, etc., representing mathematical
    /// or logical operations.
    Opr(Opr),
    /// A left delimiter, such as `(`, `{`, `[`, used to denote the beginning of
    /// a block, list, or expression.
    LeftDelimiter(LeftDelimiter),
    /// A right delimiter, such as `)`, `}`, `]`, used to denote the end of a
    /// block, list, or expression.
    RightDelimiter(RightDelimiter),
    /// A partially constructed AST node, representing a more complex structure
    /// that will be further processed to build the final AST.
    Ast(AstData),
    /// A separator, such as `,` or `;`, used to separate elements in a list or
    /// statements in a block.
    Separator(Separator),
}        
\end{lstlisting}

\begin{lstlisting}[language=Cybertron]
/// this is beyond the scope of Cybertron
/// 
/// rather a general Rust function to integrate for testing
pub fn calc_asts_from_input(input: &str, n: usize) -> (Seq<Option<PreAst>>, Seq<Option<Ast>>) {
    let tokens = tokenize(input);
    let pre_asts = calc_pre_ast_initial_seq(tokens);
    let allocated_asts: Seq<Option<Ast>> = tokens.map(|token| token.into());
    reduce_n_times(pre_asts, allocated_asts, n)
}
\end{lstlisting}

The \cybertron{reduce} function in Cybertron is designed to progressively refine sequences of pre-abstract syntax trees (pre-ASTs) and allocated abstract syntax trees (ASTs). The function takes two input sequences: \cybertron{pre\_asts}, which is a sequence of optional pre-ASTs, and \cybertron{allocated\_asts}, which is a sequence of optional ASTs. It returns a tuple containing the reduced sequences of pre-ASTs and allocated ASTs.

The reduction process is carried out in multiple stages, each focusing on different syntactic constructs:

\begin{enumerate}
    \item \cybertron{reduce\_by\_opr}: This step handles reduction by dealing with operators and their precedence. It simplifies expressions involving operations to form more compact ASTs.

    \item \cybertron{reduce\_by\_delimited}: This step reduces constructs that are delimited, such as those involving parentheses, braces, or other grouping symbols. It ensures that delimited blocks are properly nested and combined in the AST.

    \item \cybertron{reduce\_by\_call}: In this stage, function or method calls are reduced. This involves identifying and structuring calls within the AST, ensuring correct representation of function invocations.

    \item \cybertron{reduce\_by\_stmt}: This reduction step addresses statements, ensuring that individual statements are correctly parsed and represented within the AST, such as assignment statements, loops, and conditionals.

    \item \cybertron{reduce\_by\_defn}: Finally, reduction by definition handles the parsing of definitions, such as variable or function declarations. This step ensures that all definitions are correctly represented within the AST.
\end{enumerate}

By sequentially applying these reduction steps, the \cybertron{reduce} function progressively transforms the initial sequences into their most refined forms, ready for further syntactic or semantic analysis.

\begin{lstlisting}[language=Cybertron]
pub fn reduce(
    pre_asts: Seq<Option<PreAst>>,
    allocated_asts: Seq<Option<Ast>>,
) -> (Seq<Option<PreAst>>, Seq<Option<Ast>>) {
    // Reduce ASTs by handling operators and precedence
    let (pre_asts, allocated_asts) = reduce_by_opr(pre_asts, allocated_asts);
    
    // Reduce ASTs by handling delimited constructs like parentheses or braces
    let (pre_asts, allocated_asts) = reduce_by_delimited(pre_asts, allocated_asts);
    
    // Reduce ASTs by handling function or method calls
    let (pre_asts, allocated_asts) = reduce_by_call(pre_asts, allocated_asts);
    
    // Reduce ASTs by handling statements, ensuring proper syntax structure
    let (pre_asts, allocated_asts) = reduce_by_stmt(pre_asts, allocated_asts);
    
    // Reduce ASTs by handling definitions, like variables or functions
    let (pre_asts, allocated_asts) = reduce_by_defn(pre_asts, allocated_asts);
    
    // Return the final reduced sequences of pre-ASTs and allocated ASTs
    (pre_asts, allocated_asts)
}
\end{lstlisting}

\begin{lstlisting}[language=Cybertron]
pub fn reduce_n_times(
    mut pre_asts: Seq<Option<PreAst>>,
    mut allocated_asts: Seq<Option<Ast>>,
    n: usize,
) -> (Seq<Option<PreAst>>, Seq<Option<Ast>>) {
    for _ in 0..n {
        let (pre_asts1, allocated_asts1) = reduce(pre_asts, allocated_asts);
        pre_asts = pre_asts1;
        allocated_asts = allocated_asts1;
    }
    (pre_asts, allocated_asts)
}
\end{lstlisting}

In the above definition, we actually used Rust's mutable variable semantics. However, it's straightforward to see that it translates to a computation graph that is a sequential composition of subgraphs with sequential length $n$. Because the AST's depth is bounded by $D$, we can just take $n$ to be $D$. Each subgraph is generated from the \cybertron{reduce} function, then they are all constant graphs constructed by global and local functions, then by Proposition~\ref{prop:composition-of-functions-representable-in-transformers},\ref{prop:composition-of-functions-representable-in-resmlp} and~\ref{prop:map-resmlp-is-transformer} they translate to transformers with $O(\log L + D)$ depth, model dimension, and number of heads, where $\log L$ comes from the encoding of types like \cybertron{Token}.

Below we give full details of the various reduction functions.

As these are implemented as Rust functions, they have been tested against a number of inputs. We don't guarantee an industry level of correctness, but the key point is well illustrated.

\subsection{Operators}

In this section, we lay down the definition of \cybertron{reduce_by_opr}.

\begin{lstlisting}[language=Cybertron]
pub(super) fn reduce_by_opr(
    pre_asts: Seq<Option<PreAst>>,
    allocated_asts: Seq<Option<Ast>>,
) -> (Seq<Option<PreAst>>, Seq<Option<Ast>>) {
    let pre_asts_nearest_left2 = pre_asts.nearest_left2();
    let pre_asts_nearest_right2 = pre_asts.nearest_right2();
    let new_opr_asts = new_opr_ast.apply(pre_asts_nearest_left2, pre_asts, pre_asts_nearest_right2);
    let (pre_asts_reduced, new_parents) = reduce_pre_asts_by_opr(pre_asts, new_opr_asts);
    let pre_asts = update_pre_asts_by_new_asts(pre_asts_reduced, new_opr_asts);
    let allocated_asts =
        allocate_asts_and_update_parents(allocated_asts, new_opr_asts, new_parents);
    (pre_asts, allocated_asts)
}
\end{lstlisting}

\begin{lstlisting}[language=Cybertron]
/// a finite function
pub(crate) fn new_opr_ast(
    nearest_left2: Option2<(Idx, PreAst)>,
    current: Option<PreAst>,
    nearest_right2: Option2<(Idx, PreAst)>,
) -> Option<AstData> {
    let Some(PreAst::Opr(opr)) = current else {
        return None;
    };
    match opr {
        Opr::Prefix(opr) => {
            let Some((opd, PreAst::Ast(_))) = nearest_right2.first() else {
                return None;
            };
            if let Some((_, ast)) = nearest_right2.second() {
                match ast {
                    PreAst::Keyword(_) => (),
                    PreAst::Opr(right_opr) => match right_opr {
                        Opr::Prefix(_) => (),
                        Opr::Binary(right_opr) => {
                            // every binary opr in our small language is left associative, so `<` instead of `<=`
                            if right_opr.precedence() > opr.precedence() {
                                return None;
                            }
                        }
                        Opr::Suffix(right_opr) => {
                            if right_opr.precedence() > opr.precedence() {
                                return None;
                            }
                        }
                    },
                    PreAst::Ast(_) => (),
                    // function call or index takes higher precedence
                    PreAst::LeftDelimiter(_) => return None,
                    PreAst::RightDelimiter(_) => (),
                    PreAst::Separator(_) => (),
                }
            };
            Some(AstData::Prefix { opr, opd })
        }
        Opr::Binary(opr) => {
            let Some((lopd, PreAst::Ast(_))) = nearest_left2.first() else {
                return None;
            };
            let Some((ropd, PreAst::Ast(_))) = nearest_right2.first() else {
                return None;
            };
            if let Some((_, ast)) = nearest_left2.second() {
                match ast {
                    PreAst::Keyword(kw) => (),
                    PreAst::Opr(left_opr) => match left_opr {
                        Opr::Prefix(left_opr) => {
                            if left_opr.precedence() >= opr.precedence() {
                                return None;
                            }
                        }
                        Opr::Binary(left_opr) => {
                            /// every binary opr in our small language is left associative, so `>=` instead of `>`
                            if left_opr.precedence() >= opr.precedence() {
                                return None;
                            }
                        }
                        Opr::Suffix(_) => (), // actually this will be a syntax error
                    },
                    PreAst::Ast(_) => {
                        if opr != BinaryOpr::LightArrow {
                            return None;
                        }
                    }
                    PreAst::LeftDelimiter(_) => (),
                    PreAst::RightDelimiter(_) => return None,
                    PreAst::Separator(_) => (),
                }
            };
            if let Some((_, ast)) = nearest_right2.second() {
                match ast {
                    PreAst::Keyword(kw) => match kw {
                        Keyword::ELSE => return None,
                        _ => (),
                    },
                    PreAst::Opr(right_opr) => match right_opr {
                        Opr::Prefix(_) => (), // actually this will be a syntax error
                        Opr::Binary(right_opr) => {
                            /// every binary opr in our small language is left associative, so `<` instead of `<=`
                            if right_opr.precedence() > opr.precedence() {
                                return None;
                            }
                        }
                        Opr::Suffix(right_opr) => {
                            if right_opr.precedence() >= opr.precedence() {
                                return None;
                            }
                        }
                    },
                    // function call or index takes higher precedence
                    PreAst::LeftDelimiter(_) => return None,
                    PreAst::RightDelimiter(_) => (),
                    PreAst::Ast(_) => (),
                    PreAst::Separator(_) => (),
                }
            };
            Some(AstData::Binary { lopd, opr, ropd })
        }
        Opr::Suffix(opr) => {
            let Some((opd, PreAst::Ast(_))) = nearest_left2.first() else {
                return None;
            };
            if let Some((_, ast)) = nearest_left2.second() {
                match ast {
                    PreAst::Keyword(_) => (),
                    PreAst::Opr(right_opr) => match right_opr {
                        Opr::Prefix(right_opr) => {
                            if right_opr.precedence() > opr.precedence() {
                                return None;
                            }
                        }
                        Opr::Binary(right_opr) => {
                            /// every binary opr in our small language is left associative, so `<` instead of `<=`
                            if right_opr.precedence() > opr.precedence() {
                                return None;
                            }
                        }
                        Opr::Suffix(_) => (),
                    },
                    PreAst::LeftDelimiter(_) => (),
                    PreAst::RightDelimiter(_) => return None,
                    PreAst::Ast(_) => return None,
                    PreAst::Separator(_) => (),
                }
            };
            Some(AstData::Suffix { opr, opd })
        }
    }
}
\end{lstlisting}

\begin{lstlisting}[language=Cybertron]
/// returns sequence of remaining PreAsts and new parent idxs
pub(crate) fn reduce_pre_asts_by_opr(
    pre_asts: Seq<Option<PreAst>>,
    new_asts: Seq<Option<AstData>>,
) -> (Seq<Option<PreAst>>, Seq<Option<Idx>>) {
    let new_asts_nearest_left = new_asts.nearest_left();
    let pre_asts = reduce_pre_ast_by_new_ast.apply(pre_asts, new_asts);
    let (pre_asts, new_parents) = reduce_pre_ast_by_opr_left
        .apply_enumerated(new_asts_nearest_left, pre_asts)
        .decouple();
    let new_asts_nearest_right = new_asts.nearest_right();
    reduce_pre_ast_by_opr_right
        .apply_enumerated(new_asts_nearest_right, pre_asts, new_parents)
        .decouple()
}
\end{lstlisting}

\begin{lstlisting}[language=Cybertron]
fn reduce_pre_ast_by_new_ast(pre_ast: Option<PreAst>, new_ast: Option<AstData>) -> Option<PreAst> {
    if new_ast.is_some() {
        None
    } else {
        pre_ast
    }
}
\end{lstlisting}

\begin{lstlisting}[language=Cybertron]
fn reduce_pre_ast_by_opr_left(
    idx: Idx,
    new_ast_nearest_left: Option<(Idx, AstData)>,
    pre_ast: Option<PreAst>,
) -> (Option<PreAst>, Option<Idx>) {
    let Some(pre_ast) = pre_ast else {
        return (None, None);
    };
    let Some((new_ast_idx, new_ast_data)) = new_ast_nearest_left else {
        return (Some(pre_ast), None);
    };
    match new_ast_data {
        AstData::Binary { ropd: opd, .. } | AstData::Prefix { opd, .. } if opd == idx => {
            (None, Some(new_ast_idx))
        }
        _ => (Some(pre_ast), None),
    }
}
\end{lstlisting}

\begin{lstlisting}[language=Cybertron]
fn reduce_pre_ast_by_opr_right(
    idx: Idx,
    new_ast_nearest_right: Option<(Idx, AstData)>,
    pre_ast: Option<PreAst>,
    new_parent: Option<Idx>,
) -> (Option<PreAst>, Option<Idx>) {
    let Some(pre_ast) = pre_ast else {
        return (None, new_parent);
    };
    if let Some(new_parent) = new_parent {
        return (None, Some(new_parent));
    }
    let Some((new_ast_idx, new_ast_data)) = new_ast_nearest_right else {
        return (Some(pre_ast), None);
    };
    match new_ast_data {
        AstData::Binary { lopd: opd, .. } | AstData::Suffix { opd, .. } if opd == idx => {
            (None, Some(new_ast_idx))
        }
        _ => (Some(pre_ast), None),
    }
}
\end{lstlisting}

\subsection{Statements}
In this section, we lay down the definition of \cybertron{reduce_by_stmt}.

\begin{lstlisting}[language=Cybertron]
pub(super) fn reduce_by_stmt(
    pre_asts: Seq<Option<PreAst>>,
    allocated_asts: Seq<Option<Ast>>,
) -> (Seq<Option<PreAst>>, Seq<Option<Ast>>) {
    let pre_asts_nearest_left2 = pre_asts.nearest_left2();
    let pre_asts_nearest_right2 = pre_asts.nearest_right2();
    let new_stmt_asts =
        new_stmt_ast.apply(pre_asts_nearest_left2, pre_asts, pre_asts_nearest_right2);
    let (pre_asts, new_parents) = reduce_pre_asts_by_stmt(pre_asts, new_stmt_asts);
    let allocated_asts =
        allocate_asts_and_update_parents(allocated_asts, new_stmt_asts, new_parents);
    let pre_asts = update_pre_asts_by_new_asts(pre_asts, new_stmt_asts);
    (pre_asts, allocated_asts)
}
\end{lstlisting}

\begin{lstlisting}[language=Cybertron]
fn new_stmt_ast(
    pre_ast_nearest_left2: Option2<(Idx, PreAst)>,
    pre_ast: Option<PreAst>,
    pre_ast_nearest_right2: Option2<(Idx, PreAst)>,
) -> Option<AstData> {
    let PreAst::Keyword(Keyword::Stmt(kw)) = pre_ast? else {
        return None;
    };
    match kw {
        StmtKeyword::Let => {
            let Some((idx1, PreAst::Ast(ast))) = pre_ast_nearest_right2.first() else {
                return None;
            };
            if let Some((_, pre_ast)) = pre_ast_nearest_right2.second() {
                match pre_ast {
                    PreAst::Keyword(_) => (),
                    PreAst::Opr(_) | PreAst::LeftDelimiter(_) => return None,
                    PreAst::RightDelimiter(_) => (),
                    PreAst::Ast(_) => return None,
                    PreAst::Separator(separator) => match separator {
                        Separator::Comma => return None,
                        Separator::Semicolon => (),
                    },
                }
            }
            let (pattern, initial_value) = match ast {
                AstData::Binary {
                    lopd,
                    opr: BinaryOpr::Assign,
                    ropd,
                } => (lopd, Some(ropd)),
                AstData::Ident(_)
                | AstData::Prefix { .. }
                | AstData::Binary { .. }
                | AstData::Delimited { .. }
                | AstData::Call { .. } => (idx1, None),
                _ => return None,
            };
            Some(AstData::LetInit {
                expr: idx1,
                pattern,
                initial_value,
            })
        }
        StmtKeyword::If => {
            let Some((condition, PreAst::Ast(ast1))) = pre_ast_nearest_right2.first() else {
                return None;
            };
            let Some((
                body,
                PreAst::Ast(AstData::Delimited {
                    left_delimiter: LCURL,
                    right_delimiter: RCURL,
                    ..
                }),
            )) = pre_ast_nearest_right2.second()
            else {
                return None;
            };
            Some(AstData::If { condition, body })
        }
        StmtKeyword::Else => {
            let Some((if_stmt, PreAst::Ast(AstData::If { .. }))) = pre_ast_nearest_left2.first()
            else {
                return None;
            };
            let Some((
                body,
                PreAst::Ast(
                    AstData::Delimited {
                        left_delimiter: LCURL,
                        right_delimiter: RCURL,
                        ..
                    }
                    | AstData::If { .. }
                    | AstData::Else { .. },
                ),
            )) = pre_ast_nearest_right2.first()
            else {
                return None;
            };
            if let Some((_, PreAst::Keyword(Keyword::ELSE))) = pre_ast_nearest_right2.second() {
                return None;
            }
            Some(AstData::Else { if_stmt, body })
        }
    }
}
\end{lstlisting}

\begin{lstlisting}[language=Cybertron]
fn reduce_pre_asts_by_stmt(
    pre_asts: Seq<Option<PreAst>>,
    new_asts: Seq<Option<AstData>>,
) -> (Seq<Option<PreAst>>, Seq<Option<Idx>>) {
    let new_asts_nearest_left = new_asts.nearest_left();
    let new_asts_nearest_right = new_asts.nearest_right();
    reduce_pre_ast_by_stmt
        .apply_enumerated(new_asts_nearest_left, new_asts_nearest_right, pre_asts)
        .decouple()
}
\end{lstlisting}

\begin{lstlisting}[language=Cybertron]
fn reduce_pre_ast_by_stmt(
    idx: Idx,
    new_ast_nearest_left: Option<(Idx, AstData)>,
    new_ast_nearest_right: Option<(Idx, AstData)>,
    pre_ast: Option<PreAst>,
) -> (Option<PreAst>, Option<Idx>) {
    if let Some((idx1, ast)) = new_ast_nearest_left {
        match ast {
            AstData::LetInit { expr, .. } if expr == idx => (None, Some(idx1)),
            AstData::If {
                condition, body, ..
            } if condition == idx || body == idx => (None, Some(idx1)),
            AstData::Else { body, .. } if body == idx => (None, Some(idx1)),
            _ => (pre_ast, None),
        }
    } else if let Some((idx1, AstData::Else { if_stmt, .. })) = new_ast_nearest_right
        && if_stmt == idx
    {
        (None, Some(idx1))
    } else {
        (pre_ast, None)
    }
}
\end{lstlisting}

\subsection{Generalized Call Forms}
In this section, we lay down the definition of \cybertron{reduce_by_call}.
\begin{lstlisting}[language=Cybertron]
pub(super) fn reduce_by_call(
    pre_asts: Seq<Option<PreAst>>,
    allocated_asts: Seq<Option<Ast>>,
) -> (Seq<Option<PreAst>>, Seq<Option<Ast>>) {
    let pre_asts_nearest_left2 = pre_asts.nearest_left2();
    let pre_asts_nearest_right = pre_asts.nearest_right();
    let new_call_asts =
        new_call_ast.apply_enumerated(pre_asts_nearest_left2, pre_asts_nearest_right);
    let (pre_asts, new_parents) = reduce_pre_asts_by_call(pre_asts, new_call_asts);
    let allocated_asts =
        allocate_asts_and_update_parents(allocated_asts, new_call_asts, new_parents);
    let pre_asts = update_pre_asts_by_new_asts(pre_asts, new_call_asts);
    (pre_asts, allocated_asts)
}
\end{lstlisting}

\begin{lstlisting}[language=Cybertron]
fn new_call_ast(
    idx: Idx,
    pre_ast_nearest_left2: Option2<(Idx, PreAst)>,
    pre_ast_nearest_right: Option<(Idx, PreAst)>,
) -> Option<AstData> {
    let (caller, PreAst::Ast(caller_ast)) = pre_ast_nearest_left2.first()? else {
        return None;
    };
    let (
        delimited_arguments,
        PreAst::Ast(AstData::Delimited {
            left_delimiter_idx,
            left_delimiter,
            right_delimiter,
        }),
    ) = pre_ast_nearest_right?
    else {
        return None;
    };
    if let Some((_, snd)) = pre_ast_nearest_left2.second() {
        match snd {
            PreAst::Keyword(kw) => match kw {
                Keyword::Defn(kw) => match kw {
                    DefnKeyword::Struct | DefnKeyword::Enum => return None,
                    DefnKeyword::Fn => match left_delimiter.delimiter() {
                        Delimiter::Parenthesis | Delimiter::Box => return None,
                        Delimiter::Curly => (),
                    },
                },
                Keyword::Stmt(kw) => match kw {
                    StmtKeyword::Let => (),
                    StmtKeyword::If => match left_delimiter.delimiter() {
                        Delimiter::Parenthesis | Delimiter::Box => (),
                        Delimiter::Curly => return None,
                    },
                    StmtKeyword::Else => return None,
                },
            },
            PreAst::Opr(opr) => match opr {
                Opr::Prefix(_) | Opr::Binary(_) => match left_delimiter.delimiter() {
                    Delimiter::Parenthesis | Delimiter::Box => (),
                    Delimiter::Curly => return None,
                },
                Opr::Suffix(_) => return None,
            },
            PreAst::LeftDelimiter(_) => (),
            PreAst::RightDelimiter(_) => return None,
            PreAst::Ast(snd_ast) => {
                if let AstData::Ident(_) = snd_ast
                    && left_delimiter == LCURL
                {
                    match caller_ast {
                        AstData::Binary {
                            opr: BinaryOpr::LightArrow,
                            ..
                        }
                        | AstData::Delimited {
                            left_delimiter: LPAR,
                            right_delimiter: RPAR,
                            ..
                        } => (),
                        _ => return None,
                    }
                } else {
                    return None;
                }
            }
            PreAst::Separator(_) => (),
        }
    }
    if left_delimiter_idx != idx {
        return None;
    }
    Some(AstData::Call {
        caller,
        delimited_arguments,
        left_delimiter,
        right_delimiter,
    })
}
\end{lstlisting}

\begin{lstlisting}[language=Cybertron]
fn reduce_pre_asts_by_call(
    pre_asts: Seq<Option<PreAst>>,
    new_asts: Seq<Option<AstData>>,
) -> (Seq<Option<PreAst>>, Seq<Option<Idx>>) {
    let new_asts_nearest_left = new_asts.nearest_left();
    let new_asts_nearest_right = new_asts.nearest_right();
    reduce_pre_ast_by_call
        .apply_enumerated(new_asts_nearest_left, new_asts_nearest_right, pre_asts)
        .decouple()
}
\end{lstlisting}

\begin{lstlisting}[language=Cybertron]
fn reduce_pre_ast_by_call(
    idx: Idx,
    new_ast_nearest_left: Option<(Idx, AstData)>,
    new_ast_nearest_right: Option<(Idx, AstData)>,
    pre_ast: Option<PreAst>,
) -> (Option<PreAst>, Option<Idx>) {
    if let Some((
        idx1,
        AstData::Call {
            delimited_arguments,
            ..
        },
    )) = new_ast_nearest_left
        && delimited_arguments == idx
    {
        (None, Some(idx1))
    } else if let Some((idx1, AstData::Call { caller, .. })) = new_ast_nearest_right
        && caller == idx
    {
        (None, Some(idx1))
    } else {
        (pre_ast, None)
    }
}
\end{lstlisting}

\subsection{Definitions}
In this section, we lay down the definition of \cybertron{reduce_by_defn}.

\begin{lstlisting}[language=Cybertron]
pub(super) fn reduce_by_defn(
    pre_asts: Seq<Option<PreAst>>,
    allocated_asts: Seq<Option<Ast>>,
) -> (Seq<Option<PreAst>>, Seq<Option<Ast>>) {
    let pre_asts_nearest_left2 = pre_asts.nearest_left2();
    let pre_asts_nearest_right2 = pre_asts.nearest_right2();
    let new_defn_asts =
        new_defn_ast.apply(pre_asts_nearest_left2, pre_asts, pre_asts_nearest_right2);
    let (pre_asts, new_parents) = reduce_pre_asts_by_defn(pre_asts, new_defn_asts);
    let allocated_asts =
        allocate_asts_and_update_parents(allocated_asts, new_defn_asts, new_parents);
    let pre_asts = update_pre_asts_by_new_asts(pre_asts, new_defn_asts);
    (pre_asts, allocated_asts)
}
\end{lstlisting}

\begin{lstlisting}[language=Cybertron]
fn new_defn_ast(
    pre_ast_nearest_left2: Option2<(Idx, PreAst)>,
    pre_ast: Option<PreAst>,
    pre_ast_nearest_right2: Option2<(Idx, PreAst)>,
) -> Option<AstData> {
    let PreAst::Keyword(Keyword::Defn(keyword)) = pre_ast? else {
        return None;
    };
    {
        let Some((ident_idx, PreAst::Ast(AstData::Ident(ident)))) = pre_ast_nearest_right2.first()
        else {
            return None;
        };
        let Some((content, PreAst::Ast(content_ast))) = pre_ast_nearest_right2.second() else {
            return None;
        };
        match keyword {
            DefnKeyword::Struct => match content_ast {
                AstData::Delimited { .. } => (),
                _ => return None,
            },
            DefnKeyword::Enum => match content_ast {
                AstData::Delimited { .. } => (),
                _ => return None,
            },
            DefnKeyword::Fn => match content_ast {
                AstData::Call { .. } => (),
                _ => return None,
            },
        }
        Some(AstData::Defn {
            keyword,
            ident,
            ident_idx,
            content,
        })
    }
}
\end{lstlisting}

\begin{lstlisting}[language=Cybertron]
fn reduce_pre_asts_by_defn(
    pre_asts: Seq<Option<PreAst>>,
    new_asts: Seq<Option<AstData>>,
) -> (Seq<Option<PreAst>>, Seq<Option<Idx>>) {
    let new_asts_nearest_left = new_asts.nearest_left();
    let new_asts_nearest_right = new_asts.nearest_right();
    reduce_pre_ast_by_defn
        .apply_enumerated(new_asts_nearest_left, new_asts_nearest_right, pre_asts)
        .decouple()
}
\end{lstlisting}

\begin{lstlisting}[language=Cybertron]
fn reduce_pre_ast_by_defn(
    idx: Idx,
    new_ast_nearest_left: Option<(Idx, AstData)>,
    new_ast_nearest_right: Option<(Idx, AstData)>,
    pre_ast: Option<PreAst>,
) -> (Option<PreAst>, Option<Idx>) {
    if let Some((idx1, ast)) = new_ast_nearest_left {
        match ast {
            AstData::Defn {
                keyword,
                ident_idx,
                ident,
                content,
                ..
            } if ident_idx == idx || content == idx => (None, Some(idx1)),
            _ => (pre_ast, None),
        }
    } else if let Some((idx1, AstData::Defn { .. })) = new_ast_nearest_right
        && false
    {
        (None, Some(idx1))
    } else {
        (pre_ast, None)
    }
}
\end{lstlisting}

\section{Transformer Symbol Resolution Proof}
\label{sec:transformer-symbol-resolution-proof}

Here we lay down the code for symbol resolution. The actual process involves many details such as computing ranks (the exact position of an AST node among its siblings), scopes, and roles (a more precise version of AST, computed from its parent recursively), definitions and resolutions.

\subsection{Ranks}

\begin{lstlisting}[language=Cybertron]
#[derive(Debug, Default, PartialEq, Eq, Clone, Copy)]
pub struct Rank(u8);

impl Rank {
    fn next(self) -> Self {
        Self(self.0 + 1)
    }
}

pub fn calc_ranks(asts: Seq<Option<Ast>>) -> Seq<Option<Rank>> {
    let counts = asts.count_past_by_attention(asts, |ast, ast1| {
        let Some(ast) = ast else { return false };
        let Some(ast1) = ast1 else { return false };
        ast.parent == ast1.parent
    });
    (|c: usize, ast| {
        ast?;
        Some(Rank(c.try_into().unwrap()))
    })
    .apply(counts, asts)
}

pub fn calc_ranks1(asts: Seq<Option<Ast>>, n: usize) -> Seq<Option<Rank>> {
    let mut ranks: Seq<Option<Rank>> = asts.map(|_| None);
    for _ in 0..n {
        ranks = calc_sibling_indicies_step(asts, ranks);
    }
    ranks
}

fn calc_sibling_indicies_step(
    asts: Seq<Option<Ast>>,
    ranks: Seq<Option<Rank>>,
) -> Seq<Option<Rank>> {
    let previous_ranks = ranks.nearest_left_filtered_by_attention(asts, asts, |ast, ast1| {
        let Some(ast) = ast else { return false };
        let Some(ast1) = ast1 else { return false };
        ast.parent == ast1.parent
    });
    let ranks = (|ast, rank, previous_rank: Option<Option<Rank>>| {
        let _ = ast?;
        if let Some(rank) = rank {
            return Some(rank);
        }
        let Some(previous_rank) = previous_rank else {
            return Some(Default::default());
        };
        Some(previous_rank?.next())
    })
    .apply(asts, ranks, previous_ranks);
    ranks
}
\end{lstlisting}

In the above, \cybertron{count_past_by_attention} that count is representable by transformers by utilizing directly hard attention and the starter token. If the count is $c$, we shall get $c/(c+1)$ from the attention directly.

\subsection{Scopes}

\begin{lstlisting}[language=Cybertron]
const D: usize = 8usize;

pub struct Scope {
    enclosing_blocks: BoundedVec<Idx, D>,
}

impl Scope {
    pub fn from_ast(idx: Idx, ast: AstData, parent_scope: Scope) -> Self {
        match ast {
            AstData::Delimited {
                left_delimiter_idx,
                left_delimiter: LCURL,
                right_delimiter: RCURL,
            } => Self {
                enclosing_blocks: parent_scope.enclosing_blocks.append(idx),
            },
            _ => parent_scope,
        }
    }

    pub fn new(idx: Idx) -> Self {
        Self {
            enclosing_blocks: todo!(),
        }
    }

    pub fn append(self, idx: Idx) -> Self {
        Self {
            enclosing_blocks: self.enclosing_blocks.append(idx),
        }
    }
}

impl Scope {
    pub fn contains(self, other: Self) -> bool {
        let len = self.enclosing_blocks.len();
        if len > other.enclosing_blocks.len() {
            return false;
        }
        for i in 0..len {
            if self.enclosing_blocks[i] != other.enclosing_blocks[i] {
                return false;
            }
        }
        true
    }
}

pub fn infer_scopes(asts: Seq<Option<Ast>>, n: usize) -> Seq<Option<Scope>> {
    let mut scopes = initial_scope.apply_enumerated(asts);
    for _ in 0..n {
        let parent_scopes = parent_queries(asts, scopes);
        scopes = infer_scopes_step(asts, parent_scopes, scopes);
    }
    scopes
}

fn initial_scope(idx: Idx, ast: Option<Ast>) -> Option<Scope> {
    let ast = ast?;
    if ast.parent.is_some() {
        return None;
    }
    let scope = Scope::default();
    Some(Scope::from_ast(idx, ast.data, scope))
}

fn infer_scopes_step(
    asts: Seq<Option<Ast>>,
    parent_scopes: Seq<Option<Scope>>,
    scopes: Seq<Option<Scope>>,
) -> Seq<Option<Scope>> {
    infer_scope_step.apply_enumerated(asts, parent_scopes, scopes)
}

fn infer_scope_step(
    idx: Idx,
    ast: Option<Ast>,
    parent_scope: Option<Scope>,
    scope: Option<Scope>,
) -> Option<Scope> {
    if let Some(scope) = scope {
        return Some(scope);
    }
    Some(Scope::from_ast(idx, ast?.data, parent_scope?))
}
\end{lstlisting}

\subsection{Roles}
\begin{lstlisting}[language=Cybertron]
#[derive(Debug, Clone, Copy, PartialEq, Eq)]
pub enum Role {
    LetStmt {
        pattern: Idx,
        initial_value: Option<Idx>,
    },
    LetStmtInner {
        pattern: Idx,
        initial_value: Idx,
    },
    LetStmtIdent,
    LetStmtTypedVariables {
        variables: Idx,
        ty: Idx,
    },
    StructDefn(Ident),
    EnumDefn(Ident),
    FnDefn(Ident),
    FnDefnCallForm {
        fn_ident: Ident,
        scope: Scope,
    },
    FnParameters {
        fn_ident: Ident,
        has_return_ty: bool,
        scope: Scope,
    },
    FnParametersAndReturnType {
        fn_ident: Ident,
        parameters: Idx,
        scope: Scope,
        return_ty: Idx,
    },
    FnBody(Ident),
    StructFields(Ident),
    FnParameter {
        fn_ident: Ident,
        rank: Rank,
        ty: Idx,
        fn_ident_idx: Idx,
        scope: Scope,
    },
    FnParameterIdent {
        scope: Scope,
    },
    FnParameterSeparated {
        fn_ident: Ident,
        rank: Rank,
        scope: Scope,
    },
    FnParameterType {
        fn_ident: Ident,
        rank: Rank,
    },
    FnOutputType {
        fn_ident: Ident,
    },
    StructField {
        ty_ident: Ident,
        field_ident: Ident,
        ty_idx: Idx,
    },
    StructFieldType {
        ty_ident: Ident,
        field_ident: Ident,
    },
    TypeArgument,
    TypeArguments,
    StructFieldSeparated(Ident),
    LetStmtVariablesType,
    LetStmtVariables,
}
\end{lstlisting}
\begin{lstlisting}[language=Cybertron]
impl Ast {
    fn role(self) -> Option<Role> {
        match self.data {
            AstData::LetInit {
                expr,
                pattern,
                initial_value,
            } => Some(Role::LetStmt {
                pattern,
                initial_value,
            }),
            AstData::Defn {
                keyword,
                ident_idx,
                ident,
                content,
            } => Some(match keyword {
                DefnKeyword::Struct => Role::StructDefn(ident),
                DefnKeyword::Enum => Role::EnumDefn(ident),
                DefnKeyword::Fn => Role::FnDefn(ident),
            }),
            _ => None,
        }
    }
}
\end{lstlisting}
\begin{lstlisting}[language=Cybertron]
pub fn calc_roles(
    asts: Seq<Option<Ast>>,
    scopes: Seq<Option<Scope>>,
    n: usize,
) -> Seq<Option<Role>> {
    let mut roles: Seq<Option<Role>> = asts.map(|ast| ast?.role());
    let ranks = calc_ranks(asts);
    for _ in 0..n {
        let parent_roles = parent_queries(asts, roles);
        roles = calc_roles_step(asts, parent_roles, roles, ranks, scopes);
    }
    roles
}
\end{lstlisting}

\begin{lstlisting}[language=Cybertron]
fn calc_roles_step(
    asts: Seq<Option<Ast>>,
    parent_roles: Seq<Option<Role>>,
    roles: Seq<Option<Role>>,
    ranks: Seq<Option<Rank>>,
    scopes: Seq<Option<Scope>>,
) -> Seq<Option<Role>> {
    calc_role_step.apply_enumerated(asts, parent_roles, roles, ranks, scopes)
}
\end{lstlisting}
\begin{lstlisting}[language=Cybertron]
fn calc_role_step(
    idx: Idx,
    ast: Option<Ast>,
    parent_role: Option<Role>,
    role: Option<Role>,
    rank: Option<Rank>,
    scope: Option<Scope>,
) -> Option<Role> {
    if let Some(role) = role {
        return Some(role);
    }
    let ast = ast?;
    if let Some(role) = ast.role() {
        return Some(role);
    }
    match parent_role? {
        Role::LetStmt {
            pattern,
            initial_value,
        } => match ast.data {
            AstData::Ident(ident) if idx == pattern => Some(Role::LetStmtIdent),
            AstData::Binary {
                lopd,
                opr: BinaryOpr::Assign,
                ropd,
                lopd_ident,
            } if lopd == pattern => Some(Role::LetStmtInner {
                pattern,
                initial_value: ropd,
            }),
            _ => None,
        },
        Role::LetStmtInner {
            pattern,
            initial_value,
        } => {
            if idx == pattern {
                match ast.data {
                    AstData::Ident(ident) => Some(Role::LetStmtIdent),
                    AstData::Binary {
                        lopd,
                        lopd_ident,
                        opr,
                        ropd,
                    } => Some(Role::LetStmtTypedVariables {
                        variables: lopd,
                        ty: ropd,
                    }),
                    _ => todo!(),
                }
            } else {
                None
            }
        }
        Role::LetStmtIdent => todo!(),
        Role::FnParameterIdent { scope } => todo!(),
        Role::StructDefn(ident) => match ast.data {
            AstData::Literal(_) => todo!(),
            AstData::Ident(_) => None,
            AstData::Prefix { opr, opd } => todo!(),
            AstData::Binary {
                lopd,
                opr,
                ropd,
                lopd_ident,
            } => todo!(),
            AstData::Suffix { opd, opr } => todo!(),
            AstData::Delimited {
                left_delimiter_idx,
                left_delimiter,
                right_delimiter,
            } => Some(Role::StructFields(ident)),
            AstData::SeparatedItem { content, separator } => todo!(),
            AstData::Call { .. } => todo!(),
            AstData::LetInit {
                expr,
                pattern,
                initial_value,
            } => todo!(),
            AstData::Return { result } => todo!(),
            AstData::Assert { condition } => todo!(),
            AstData::If { condition, body } => todo!(),
            AstData::Else { if_stmt, body } => todo!(),
            AstData::Defn {
                keyword,
                ident_idx,
                ident,
                content,
            } => todo!(),
        },
        Role::EnumDefn(_) => None, // ad hoc
        Role::FnDefn(fn_ident) => match ast.data {
            AstData::Literal(_) => todo!(),
            AstData::Ident(_) => None,
            AstData::Prefix { opr, opd } => todo!(),
            AstData::Binary {
                lopd,
                opr,
                ropd,
                lopd_ident,
            } => todo!(),
            AstData::Suffix { opd, opr } => todo!(),
            AstData::Delimited {
                left_delimiter_idx,
                left_delimiter,
                right_delimiter,
            } => todo!(),
            AstData::SeparatedItem { content, separator } => todo!(),
            AstData::Call {
                delimited_arguments,
                ..
            } => Some(Role::FnDefnCallForm {
                fn_ident,
                scope: match scope {
                    Some(scope) => scope.append(delimited_arguments),
                    None => Scope::new(delimited_arguments),
                },
            }),
            AstData::LetInit {
                expr,
                pattern,
                initial_value,
            } => todo!(),
            AstData::Return { result } => todo!(),
            AstData::Assert { condition } => todo!(),
            AstData::If { condition, body } => todo!(),
            AstData::Else { if_stmt, body } => todo!(),
            AstData::Defn {
                keyword,
                ident_idx,
                ident,
                content,
            } => todo!(),
        },
        Role::FnDefnCallForm { fn_ident, scope } => match ast.data {
            AstData::Literal(_) => todo!(),
            AstData::Ident(_) => todo!(),
            AstData::Prefix { opr, opd } => todo!(),
            AstData::Binary {
                lopd,
                opr,
                ropd,
                lopd_ident,
            } => {
                if opr == BinaryOpr::LightArrow {
                    Some(Role::FnParametersAndReturnType {
                        fn_ident,
                        parameters: lopd,
                        return_ty: ropd,
                        scope,
                    })
                } else {
                    unreachable!()
                }
            }
            AstData::Suffix { opd, opr } => todo!(),
            AstData::Delimited {
                left_delimiter_idx,
                left_delimiter,
                right_delimiter,
            } => match left_delimiter.delimiter() {
                Delimiter::Parenthesis => Some(Role::FnParameters {
                    fn_ident,
                    has_return_ty: false,
                    scope,
                }),
                Delimiter::Box => todo!(),
                Delimiter::Curly => Some(Role::FnBody(fn_ident)),
            },
            AstData::SeparatedItem { content, separator } => todo!(),
            AstData::Call { .. } => todo!(),
            AstData::LetInit {
                expr,
                pattern,
                initial_value,
            } => todo!(),
            AstData::Return { result } => todo!(),
            AstData::Assert { condition } => todo!(),
            AstData::If { condition, body } => todo!(),
            AstData::Else { if_stmt, body } => todo!(),
            AstData::Defn {
                keyword,
                ident_idx,
                ident,
                content,
            } => todo!(),
        },
        Role::FnParameters {
            fn_ident, scope, ..
        } => match ast.data {
            AstData::Binary {
                lopd,
                opr,
                ropd,
                lopd_ident,
            } => {
                if opr == BinaryOpr::TypeIs {
                    Some(Role::FnParameter {
                        fn_ident,
                        fn_ident_idx: lopd,
                        rank: rank.unwrap(),
                        ty: ropd,
                        scope,
                    })
                } else {
                    unreachable!()
                }
            }
            AstData::SeparatedItem { .. } => Some(Role::FnParameterSeparated {
                fn_ident,
                rank: rank.unwrap(),
                scope,
            }),
            _ => unreachable!(),
        },
        Role::FnBody(_) => None,
        Role::StructFields(ty_ident) => match ast.data {
            AstData::Binary {
                lopd,
                opr,
                ropd,
                lopd_ident,
            } => {
                assert_eq!(opr, BinaryOpr::TypeIs);
                Some(Role::StructField {
                    ty_ident,
                    field_ident: lopd_ident.unwrap(),
                    ty_idx: ropd,
                })
            }
            AstData::SeparatedItem { content, separator } => {
                Some(Role::StructFieldSeparated(ty_ident))
            }
            _ => None,
        },
        Role::FnParameter {
            fn_ident,
            fn_ident_idx,
            rank,
            ty,
            scope,
            ..
        } => {
            if idx == ty {
                Some(Role::FnParameterType { fn_ident, rank })
            } else if idx == fn_ident_idx {
                Some(Role::FnParameterIdent { scope })
            } else {
                None
            }
        }
        Role::FnParameterSeparated {
            fn_ident,
            rank,
            scope,
        } => match ast.data {
            AstData::Binary {
                lopd,
                opr,
                ropd,
                lopd_ident,
            } => {
                if opr == BinaryOpr::TypeIs {
                    Some(Role::FnParameter {
                        fn_ident,
                        fn_ident_idx: lopd,
                        rank,
                        ty: ropd,
                        scope,
                    })
                } else {
                    unreachable!()
                }
            }
            _ => unreachable!(),
        },
        Role::StructField {
            ty_ident,
            field_ident,
            ty_idx,
        } => {
            if idx == ty_idx {
                Some(Role::StructFieldType {
                    ty_ident,
                    field_ident,
                })
            } else {
                None
            }
        }
        Role::StructFieldSeparated(ty_ident) => match ast.data {
            AstData::Binary {
                lopd,
                opr,
                ropd,
                lopd_ident,
            } => {
                assert_eq!(opr, BinaryOpr::TypeIs);
                Some(Role::StructField {
                    ty_ident,
                    field_ident: lopd_ident.unwrap(),
                    ty_idx: ropd,
                })
            }
            _ => unreachable!(),
        },
        Role::FnParameterType { .. } | Role::StructFieldType { .. } | Role::TypeArgument => {
            match ast.data {
                AstData::Delimited {
                    left_delimiter_idx,
                    left_delimiter,
                    right_delimiter,
                } => Some(Role::TypeArguments),
                _ => None,
            }
        }
        Role::TypeArguments => match ast.data {
            AstData::Ident(_) => Some(Role::TypeArgument),
            AstData::Delimited {
                left_delimiter_idx,
                left_delimiter,
                right_delimiter,
            } => todo!(),
            AstData::SeparatedItem { content, separator } => todo!(),
            AstData::Call {
                caller,
                caller_ident,
                left_delimiter,
                right_delimiter,
                delimited_arguments,
            } => todo!(),
            _ => None,
        },
        Role::FnParametersAndReturnType {
            fn_ident,
            parameters,
            return_ty,
            scope,
        } => {
            if idx == parameters {
                Some(Role::FnParameters {
                    fn_ident,
                    has_return_ty: true,
                    scope,
                })
            } else if idx == return_ty {
                Some(Role::FnOutputType { fn_ident })
            } else {
                unreachable!()
            }
        }
        Role::FnOutputType { fn_ident } => todo!(),
        Role::LetStmtTypedVariables { variables, ty } => {
            if idx == variables {
                Some(Role::LetStmtVariables)
            } else if idx == ty {
                Some(Role::LetStmtVariablesType)
            } else {
                unreachable!()
            }
        }
        Role::LetStmtVariablesType => todo!(),
        Role::LetStmtVariables => todo!(),
    }
}
\end{lstlisting}

\subsection{Defns}
\begin{lstlisting}[language=Cybertron]
#[derive(Debug, Clone, Copy, PartialEq, Eq)]
pub struct SymbolDefn {
    pub symbol: Symbol,
    pub scope: Option<Scope>,
}
\end{lstlisting}
\begin{lstlisting}[language=Cybertron]
pub fn calc_symbol_defns(
    asts: Seq<Option<Ast>>,
    scopes: Seq<Option<Scope>>,
    n: usize,
) -> Seq<Option<SymbolDefn>> {
    let roles = calc_roles(asts, scopes, n);
    calc_symbol_defn.apply_enumerated(asts, roles, scopes)
}
\end{lstlisting}
\begin{lstlisting}[language=Cybertron]
fn calc_symbol_defn(
    idx: Idx,
    ast: Option<Ast>,
    role: Option<Role>,
    scope: Option<Scope>,
) -> Option<SymbolDefn> {
    match ast?.data {
        AstData::Ident(ident) => match role? {
            Role::LetStmt { .. } => unreachable!(),
            Role::LetStmtVariables | Role::LetStmtIdent => Some(SymbolDefn {
                symbol: Symbol {
                    ident,
                    source: idx,
                    data: SymbolData::Variable,
                },
                scope,
            }),
            Role::FnParameterIdent { scope } => Some(SymbolDefn {
                symbol: Symbol {
                    ident,
                    source: idx,
                    data: SymbolData::Variable,
                },
                scope: Some(scope),
            }),
            _ => None,
        },
        AstData::Defn {
            keyword,
            ident_idx,
            ident,
            content,
        } => Some(SymbolDefn {
            symbol: Symbol {
                ident,
                source: idx,
                data: SymbolData::Item {
                    kind: keyword.into(),
                },
            },
            scope,
        }),
        _ => None,
    }
}
\end{lstlisting}

\subsection{Resolutions}
\begin{lstlisting}[language=Cybertron]
pub enum SymbolResolution {
    Ok(Symbol),
    Err(SymbolResolutionError),
}
\end{lstlisting}
\begin{lstlisting}[language=Cybertron]
pub enum SymbolResolutionError {
    NotResolved,
    NotYetDeclared(Symbol),
}
\end{lstlisting}
\begin{lstlisting}[language=Cybertron]
pub fn calc_symbol_resolutions(asts: Seq<Option<Ast>>, n: usize) -> Seq<Option<SymbolResolution>> {
    let scopes = infer_scopes(asts, n);
    let symbol_defns = calc_symbol_defns(asts, scopes, n);
    let idents = asts.map(|ast| match ast?.data {
        AstData::Ident(ident) => Some(ident),
        _ => None,
    });
    let symbols = symbol_defns
        .map(|symbol_defn| Some(symbol_defn?.symbol))
        .first_filtered_by_attention(
            (|ident, scope| (ident, scope)).apply(idents, scopes),
            symbol_defns,
            |(ident, scope), symbol_defn| {
                let Some(ident) = ident else { return false };
                let Some(symbol_defn) = symbol_defn else {
                    return false;
                };
                if let Some(symbol_defn_scope) = symbol_defn.scope {
                    if !symbol_defn_scope.contains(scope.unwrap()) {
                        return false;
                    }
                }
                symbol_defn.symbol.ident == ident
            },
        )
        .map(|s| s.flatten());
    finalize.apply_enumerated(idents, symbols)
}
\end{lstlisting}

In the above code, we use a somehow complicated attention which we should illustrate why it's representable by transformers. The essence is to prove \cybertron{symbol_defn_scope.contains(scope.unwrap())} can be represented as part of the inner product in $Q^\top K$. This can be done by looking closer to what \cybertron{contains} does. Consider two scopes, \cybertron{scope1} and \cybertron{scope2}, which are sequences of bracket ast indices (can be null). The function returns true if the sequence of \cybertron{scope1} contains the sequence of \cybertron{scope2} as prefix, which can be achieved by $\sum_i x_i^\top y_i$ where $x_i,y_i$ are the encoding of $i$th ast indices of \cybertron{scope1} and \cybertron{scope2} after some transformations (different transformations because the function is asymmetric) so that $x_i^\top y_i=0$ if and only if either $x_i$ is a None or $x_i$ represents the same thing as $y_i$, and $x_i^\top y_i<0$ otherwise. More concretely, if $x_i$ is a None, $x_i=\bm{0}$ by choice, and equal to $(1,u_i)$ otherwise where $u_i$ corresponds to the encoding of the $i$th ast index of \cybertron{scope1}; if $y_i$ is a None, $y_i=\bm{0}$ by choice, and equal to $(-1,v_i)$ otherwise where $A>0$ and $v_i$ corresponds to the encoding of the $i$th ast index of \cybertron{scope2}. We should choose the encoding $u_i,v_i$ such that $u_i^\top v_i=1$ if and only if they encode the same index, which is obviously easy enough.

\begin{lstlisting}[language=Cybertron]
fn finalize(idx: Idx, ident: Option<Ident>, symbol: Option<Symbol>) -> Option<SymbolResolution> {
    let _ = ident?;
    let Some(symbol) = symbol else {
        return Some(SymbolResolution::Err(SymbolResolutionError::NotResolved));
    };
    match symbol.data {
        SymbolData::Item { .. } => (),
        SymbolData::Variable => {
            if idx < symbol.source {
                return Some(SymbolResolution::Err(
                    SymbolResolutionError::NotYetDeclared(symbol),
                ));
            }
        }
    }
    Some(SymbolResolution::Ok(symbol))
}
\end{lstlisting}
\section{Transformer Type Checking Proof}
\label{sec:transformer-type-checking-proof}

Here we lay down the code for type analysis. It should be noted that we didn't completely implement all the details. Things like struct fields, enum variant fields are left out. However, we already cover the essential mechanism of type analysis, making it sufficient for proof purposes.

\subsection{Type Signatures}

\begin{lstlisting}[language=Cybertron]
#[deri
ve(Debug, PartialEq, Eq, Clone, Copy)]
pub struct TypeSignature {
    pub key: TypeSignatureKey,
    pub ty: Type,
}
\end{lstlisting}

\begin{lstlisting}[language=Cybertron]
#[derive(Debug, PartialEq, Eq, Clone, Copy)]
pub enum TypeSignatureKey {
    FnParameter { fn_ident: Ident, rank: Rank },
    FnOutput { fn_ident: Ident },
    StructField { ty_ident: Ident, field_ident: Ident },
}
\end{lstlisting}
\begin{lstlisting}[language=Cybertron]
pub(super) fn calc_ty_signatures(
    asts: Seq<Option<Ast>>,
    roles: Seq<Option<Role>>,
    ty_terms: Seq<Option<Type>>,
) -> Seq<Option<TypeSignature>> {
    calc_ty_signature.apply(roles, ty_terms)
}
\end{lstlisting}
\begin{lstlisting}[language=Cybertron]
fn calc_ty_signature(role: Option<Role>, ty_term: Option<Type>) -> Option<TypeSignature> {
    let key = match role? {
        Role::FnParameterType { fn_ident, rank } => {
            TypeSignatureKey::FnParameter { fn_ident, rank }
        }
        Role::StructFieldType {
            ty_ident,
            field_ident,
        } => TypeSignatureKey::StructField {
            ty_ident,
            field_ident,
        },
        Role::FnOutputType { fn_ident } => TypeSignatureKey::FnOutput { fn_ident },
        Role::FnParameters {
            fn_ident,
            has_return_ty: false,
            scope,
        } => {
            let key = TypeSignatureKey::FnOutput { fn_ident };
            let ty = Type::new_ident(Ident::new("unit"));
            return Some(TypeSignature { key, ty });
        }
        _ => return None,
    };
    // put it here!
    let ty = ty_term?;
    Some(TypeSignature { key, ty })
}
\end{lstlisting}
\subsection{Type Inference}
\begin{lstlisting}[language=Cybertron]
pub struct TypeInference {
    pub ty: Type,
}
\end{lstlisting}
\begin{lstlisting}[language=Cybertron]
pub fn calc_ty_inferences(
    asts: Seq<Option<Ast>>,
    symbol_resolutions: Seq<Option<SymbolResolution>>,
    roles: Seq<Option<Role>>,
    ty_terms: Seq<Option<Type>>,
    ty_signatures: Seq<Option<TypeSignature>>,
    n: usize,
) -> Seq<Option<TypeInference>> {
    let mut ty_inferences = infer_tys_initial(asts, ty_signatures);
    let mut ty_designations =
        calc_initial_ty_designations(asts, roles, symbol_resolutions, ty_inferences, ty_terms);
    for _ in 0..n {
        ty_inferences |= infer_tys_step(asts, symbol_resolutions, ty_inferences, ty_designations);
        ty_designations |= calc_ty_designations_step(roles, symbol_resolutions, ty_inferences);
    }
    ty_inferences
}
\end{lstlisting}
\begin{lstlisting}[language=Cybertron]
fn infer_tys_initial(
    asts: Seq<Option<Ast>>,
    ty_signatures: Seq<Option<TypeSignature>>,
) -> Seq<Option<TypeInference>> {
    inference_literal_tys(asts).or(infer_fn_call_tys(asts, ty_signatures))
}
\end{lstlisting}
\begin{lstlisting}[language=Cybertron]
fn inference_literal_tys(asts: Seq<Option<Ast>>) -> Seq<Option<TypeInference>> {
    asts.map(|ast| match ast?.data {
        AstData::Literal(lit) => match lit {
            Literal::Int(_) => Some(TypeInference {
                ty: Type::new_ident(Ident::new("Int")),
            }),
            Literal::Float(_) => Some(TypeInference {
                ty: Type::new_ident(Ident::new("Float")),
            }),
        },
        _ => None,
    })
}
\end{lstlisting}
\begin{lstlisting}[language=Cybertron]
fn infer_fn_call_tys(
    asts: Seq<Option<Ast>>,
    ty_signatures: Seq<Option<TypeSignature>>,
) -> Seq<Option<TypeInference>> {
    ty_signatures
        .first_filtered_by_attention(asts, ty_signatures, |ast, ty_signature| {
            let Some(ast) = ast else { return false };
            let Some(TypeSignature {
                key: TypeSignatureKey::FnOutput { fn_ident },
                ..
            }) = ty_signature
            else {
                return false;
            };
            match ast.data {
                AstData::Call {
                    caller,
                    caller_ident,
                    left_delimiter,
                    right_delimiter,
                    delimited_arguments,
                } if caller_ident == Some(fn_ident) => true,
                _ => false,
            }
        })
        .map(|ty_inference| {
            Some(TypeInference {
                ty: ty_inference??.ty,
            })
        })
}
\end{lstlisting}
\subsection{Type Expectations}
\begin{lstlisting}[language=Cybertron]
pub struct TypeExpectation {
    pub ty: Type,
    pub source: TypeExpectationSource,
}
\end{lstlisting}
\begin{lstlisting}[language=Cybertron]
pub enum TypeExpectationSource {
    CallArgument { caller_ident: Ident, rank: Rank },
}
\end{lstlisting}
\begin{lstlisting}[language=Cybertron]
pub fn calc_ty_expectations(
    asts: Seq<Option<Ast>>,
    ranks: Seq<Option<Rank>>,
    ty_signatures: Seq<Option<TypeSignature>>,
) -> Seq<Option<TypeExpectation>> {
    let parent_asts = asts.index(asts.map(|ast| ast?.parent)).map(Option::flatten);
    let grandparent_asts = asts
        .index(parent_asts.map(|parent_ast| parent_ast?.parent))
        .map(Option::flatten);
    let ty_expectation_sources = calc_ty_expectation_source.apply(grandparent_asts, ranks);
    let retrieved_ty_signatures = ty_signatures
        .first_filtered_by_attention(
            ty_expectation_sources,
            ty_signatures,
            |ty_expection_source, ty_signature| {
                let Some(type_expectation_source) = ty_expection_source else {
                    return false;
                };
                let Some(type_signature) = ty_signature else {
                    return false;
                };
                match (type_expectation_source, type_signature.key()) {
                    (
                        TypeExpectationSource::CallArgument {
                            caller_ident,
                            rank: rank0,
                        },
                        TypeSignatureKey::FnParameter {
                            fn_ident,
                            rank: rank1,
                        },
                    ) if caller_ident == fn_ident && rank0 == rank1 => true,
                    _ => false,
                }
            },
        )
        .map(Option::flatten);
    (|ty_expectation_source: Option<TypeExpectationSource>,
      retrieved_ty_signature: Option<TypeSignature>| {
        Some(TypeExpectation {
            ty: retrieved_ty_signature?.ty(),
            source: ty_expectation_source?,
        })
    })
    .apply(ty_expectation_sources, retrieved_ty_signatures)
}
\end{lstlisting}
\begin{lstlisting}[language=Cybertron]
fn calc_ty_expectation_source(
    grandparent_ast: Option<Ast>,
    rank: Option<Rank>,
) -> Option<TypeExpectationSource> {
    let grandparent_ast = grandparent_ast?;
    let rank = rank?;
    match grandparent_ast.data {
        AstData::Call {
            caller,
            caller_ident: Some(caller_ident),
            left_delimiter,
            right_delimiter,
            delimited_arguments,
        } => Some(TypeExpectationSource::CallArgument { caller_ident, rank }),
        _ => None,
    }
}
\end{lstlisting}
\subsection{Type Errors}
\begin{lstlisting}[language=Cybertron]
pub enum TypeError {
    TypeMismatch { expected: Type, actual: Type },
}
\end{lstlisting}

\begin{lstlisting}[language=Cybertron]
pub fn calc_ty_errors(
    ty_inferences: Seq<Option<TypeInference>>,
    ty_expectations: Seq<Option<TypeExpectation>>,
) -> Seq<Option<TypeError>> {
    calc_ty_error.apply(ty_inferences, ty_expectations)
}
\end{lstlisting}

\begin{lstlisting}[language=Cybertron]
fn calc_ty_error(
    ty_inference: Option<TypeInference>,
    ty_expectation: Option<TypeExpectation>,
) -> Option<TypeError> {
    let ty_inference = ty_inference?;
    let ty_expectation = ty_expectation?;
    if ty_inference.ty == ty_expectation.ty {
        None
    } else {
        Some(TypeError::TypeMismatch {
            expected: ty_expectation.ty,
            actual: ty_inference.ty,
        })
    }
}
\end{lstlisting}
\section{Lower Bounds}
\label{sec:lower-bounds}

\begin{lstlisting}[language=Cybertron]
struct <ty-ident-1> {}
struct <ty-ident-2> {}
struct <ty-ident-3> {}
struct <ty-ident-4> {}

fn <f-ident-1>(a: <arg-ty-ident-1>) {}
fn <f-ident-2>(a: <arg-ty-ident-2>) {}
fn <f-ident-3>(a: <arg-ty-ident-3>) {}
fn <f-ident-4>(a: <arg-ty-ident-4>) {}

fn g() {
    let x: <ty-ident> = ...;
    <f-ident>(x);
}
\end{lstlisting}

\subsection{Lower bounds for RNN: Easy Bounds due to Memory}

\begin{proof}[Proof of Theorem~\ref{thm:rnn}]
  Our proof resonates with the proof of Theorem 4.6 in \cite{Wen2024RNNsAN} and Theorem 8 in \cite{Bhattamishra2024SeparationsIT}. For $L,D,H\in\mathbb{N}$, suppose that $D$ makes \miniHuskyAnnotated{D}{H} to be nontrivial, i.e., one can define functions with one parameter and use function calls. Simple calculations shows we can choose $D=7$ and $H=1$.
  If a RNN represents a function maps any token sequence of length $L$ in \miniHuskyAnnotated{D}{H} to its type errors represented as a sequence of values of type \cybertron{Option<TypeError>}, then the memory right before type checking must store all previous type signatures, the number of which can be as many as $\Omega(L)$ in the worst case. Assuming proper numerical discretization, the memorization of these type signatures would require the memory size to be $\Omega(L)$ in the worst case.
\end{proof}





\section{Additional Experiment Details}
\label{sec:additional-experiments}

\subsection{Setups}

Model details are shown in Table\,\ref{tab:model_spec}, and other hyperparameters are shown in Table\,\ref{tab:hparam}.

\begin{table}[h]
    \caption{Model specification}
    \label{tab:model_spec}
    \centering
    \begin{tabular}{l | l}
      \hline
      \textbf{Specification}                 & \textbf{Value}         \\
      \hline
      Transformer                            &                        \\
      - Hidden size ($d_h$)                  & $\{8 k\ |\ 1 \le k \le 8\} \cup \{\rebut{240}\}$ \\
      - Num attention heads                  & $\rebut{1}$            \\
      - Num hidden layers                    & $8$                    \\
      - Intermediate size                    & $2 d_h$                \\
      - Max position embeddings              & $\le  2048$            \\
      \hline
      RNN                                    &                        \\
      - Hidden size                          & $\{8 k\ |\ 1 \le k \le 8\} \cup \{256\}$ \\
      - Num layers                           & $8$                    \\
      \hline
    \end{tabular}
\end{table}

\begin{table}[h]
    \caption{Hyperparameters of experiments}
    \label{tab:hparam}
    \centering
    \begin{tabular}{l | l}
      \hline
      \textbf{Hyperparameter}                  & \textbf{Value}     \\
      \hline
      Dataset                                  &                    \\
      - $(\rebut{n,}\ f, d)$                   & $\{(\rebut{100000,}\ 10, 3), (\rebut{200000,}\ 20, 5), (\rebut{300000,}\ 40, 10), (\rebut{400000,}\ 80, 20)\}$\\
      - $(\rebut{a, c,}\ v, e)$                & $(\rebut{5, 5,}\ 0.2, 0.5)$ \\
      \hline
      Number of epochs                         & $\rebut{80}$               \\
      Train batch size                         & $512$              \\
      Optimizer                                & Adam               \\
      LR scheduler                             & Linear warmup-decay\\
      - Warmup min lr                          & $1 \times 10^{-5}$ \\
      - Warmup max lr                          & $1 \times 10^{-3}$ \\
      - Warmup steps                           & $990$              \\
      \hline
    \end{tabular}
\end{table}

\subsection{Additional Results}

\rebut{Figures\,\ref{fig:f10},\ref{fig:f20},\ref{fig:f40},\ref{fig:f80} include other metrics (train loss, accuracies for expected type in validation set, and validation loss) in the experiments.
Note that for the expressive power of the models, training accuracies are better indicators.}

\begin{figure}
  \centering
  \includegraphics[width=0.32\linewidth]{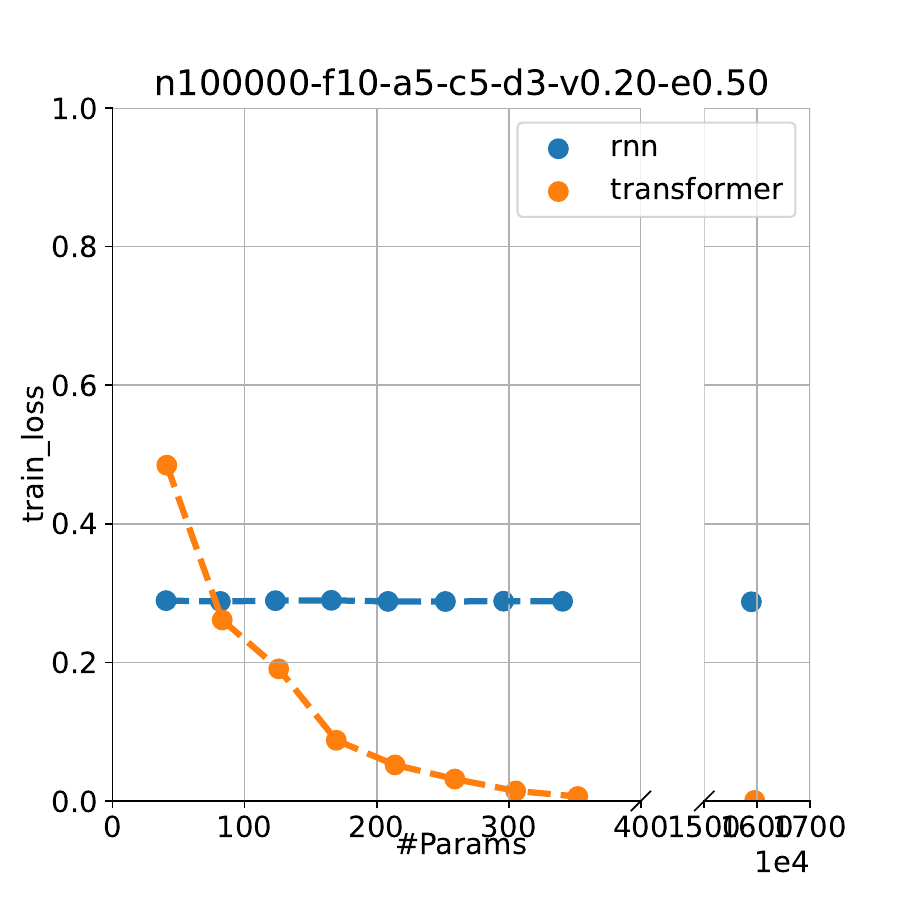}
  \includegraphics[width=0.32\linewidth]{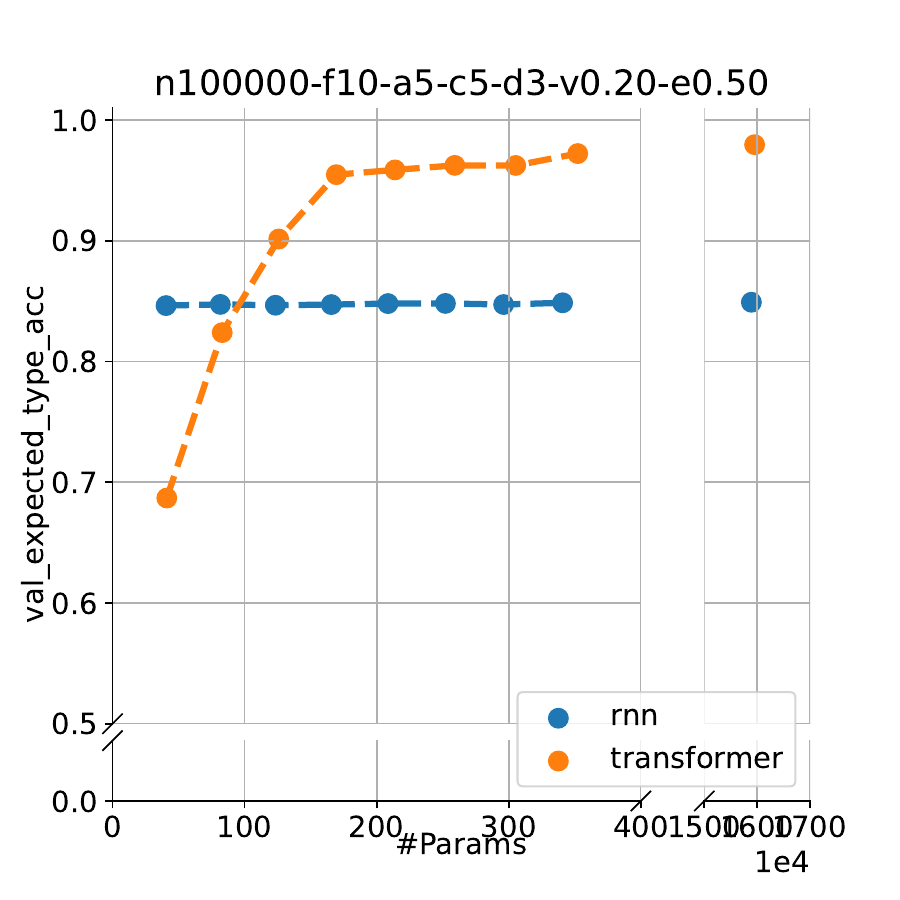}
  \includegraphics[width=0.32\linewidth]{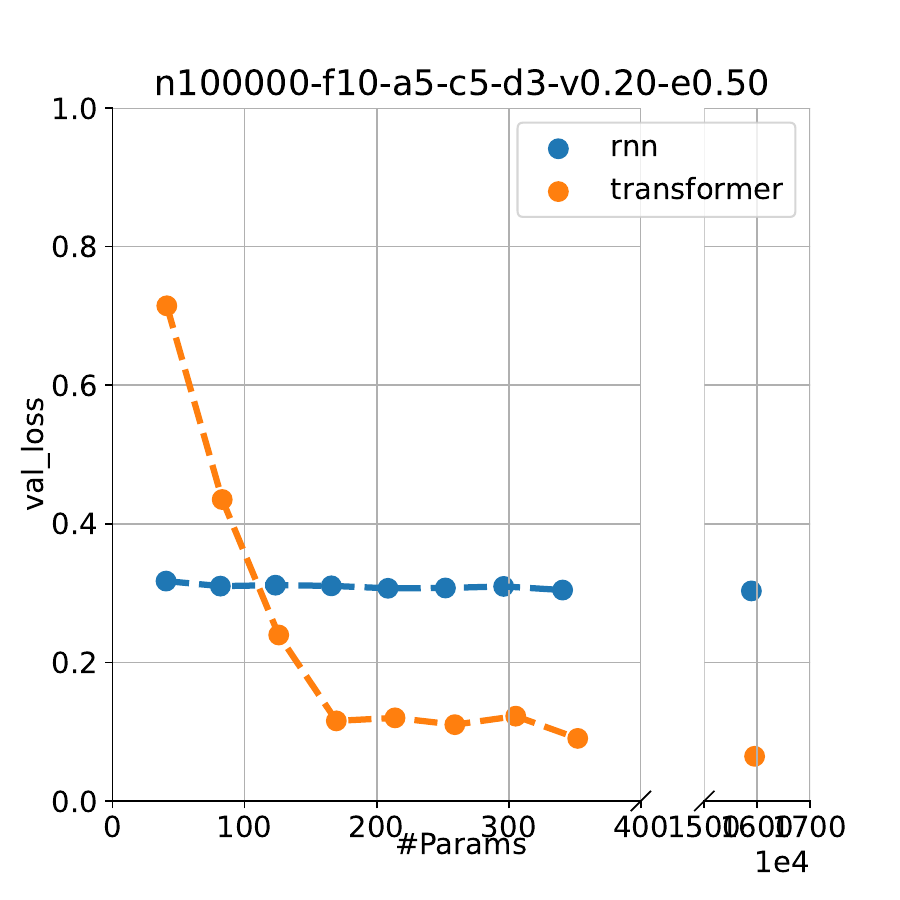}
  \caption{Figures for the dataset with $(f, \rebut{a, c,}\ d, v, e) = (10, \rebut{5, 5,}\ 3, 0.2, 0.5)$.}
  \label{fig:f10}
\end{figure}

\begin{figure}
  \centering
  \includegraphics[width=0.32\linewidth]{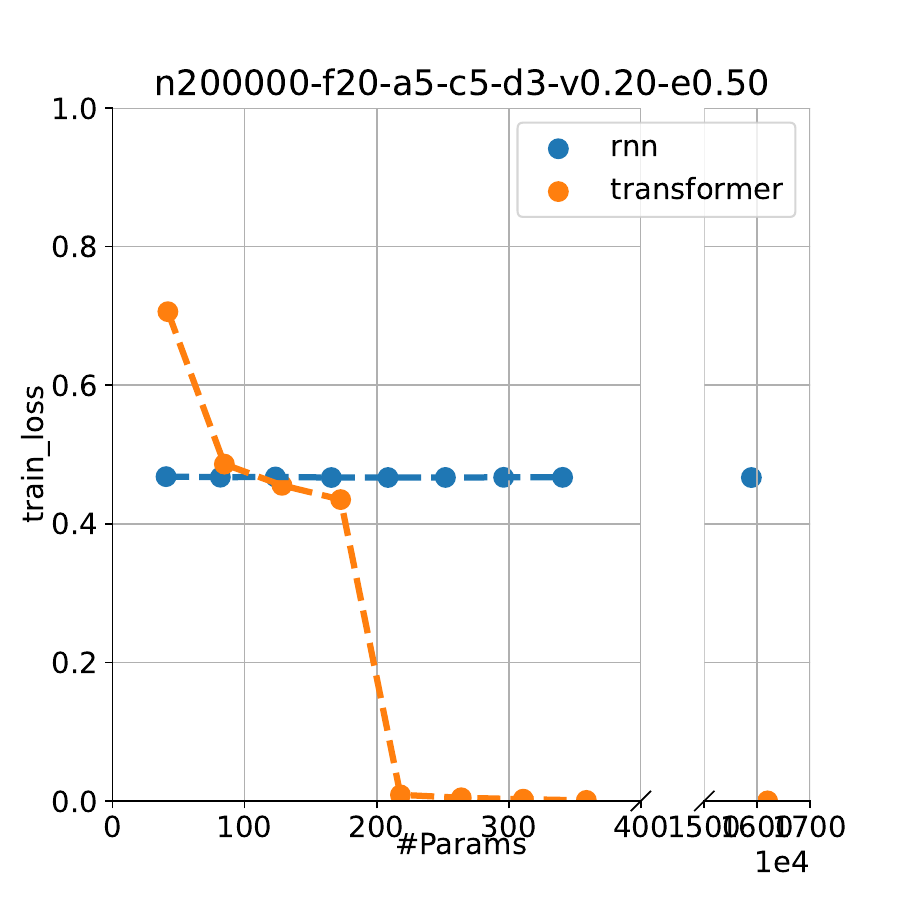}
  \includegraphics[width=0.32\linewidth]{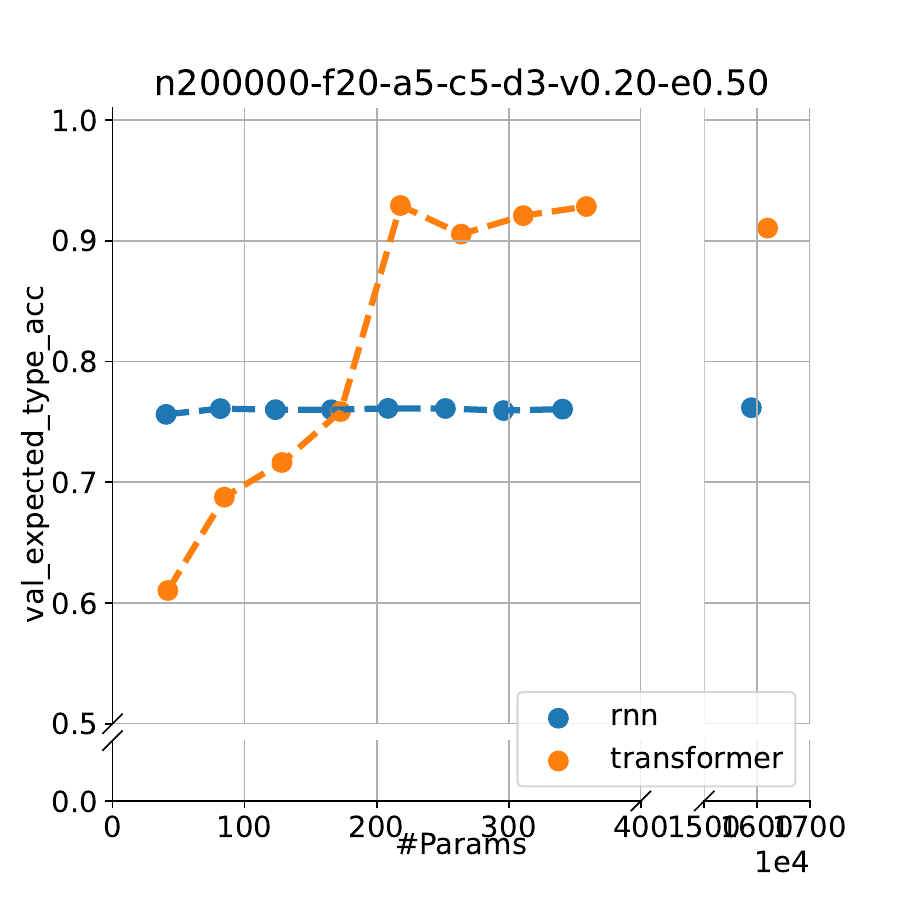}
  \includegraphics[width=0.32\linewidth]{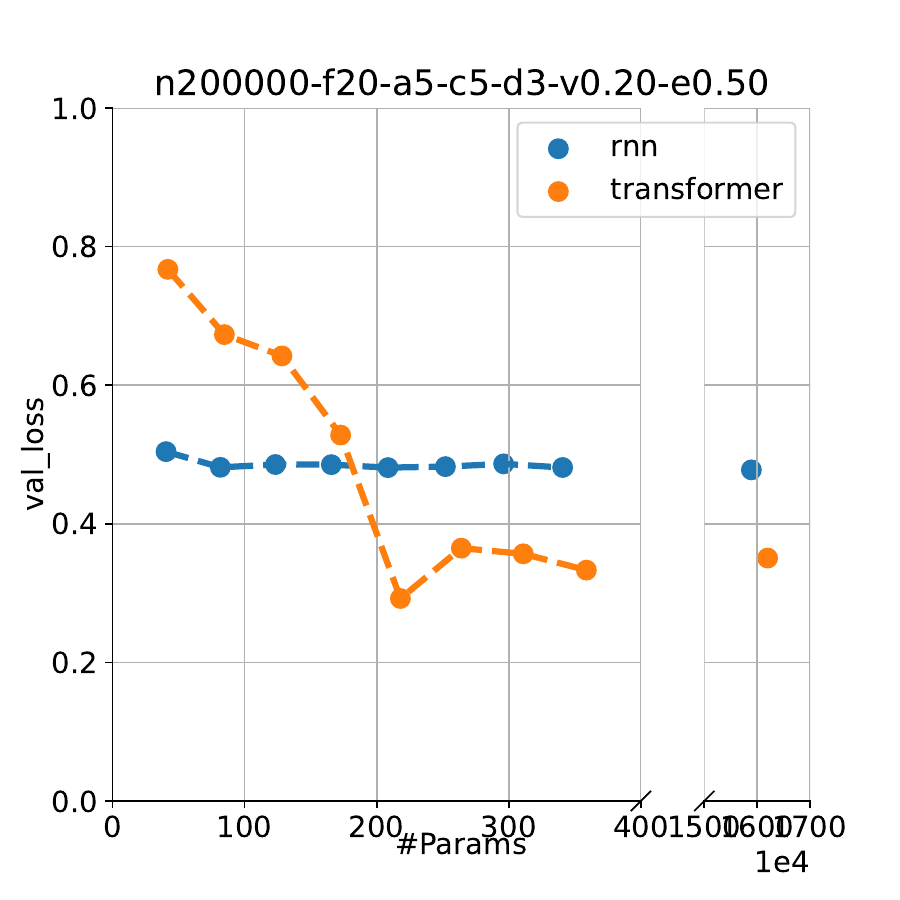}
  \caption{Figures for the dataset with $(f, \rebut{a, c,}\ d, v, e) = (20, \rebut{5, 5,}\ 3, 0.2, 0.5)$.}
  \label{fig:f20}
\end{figure}

\begin{figure}
  \centering
  \includegraphics[width=0.32\linewidth]{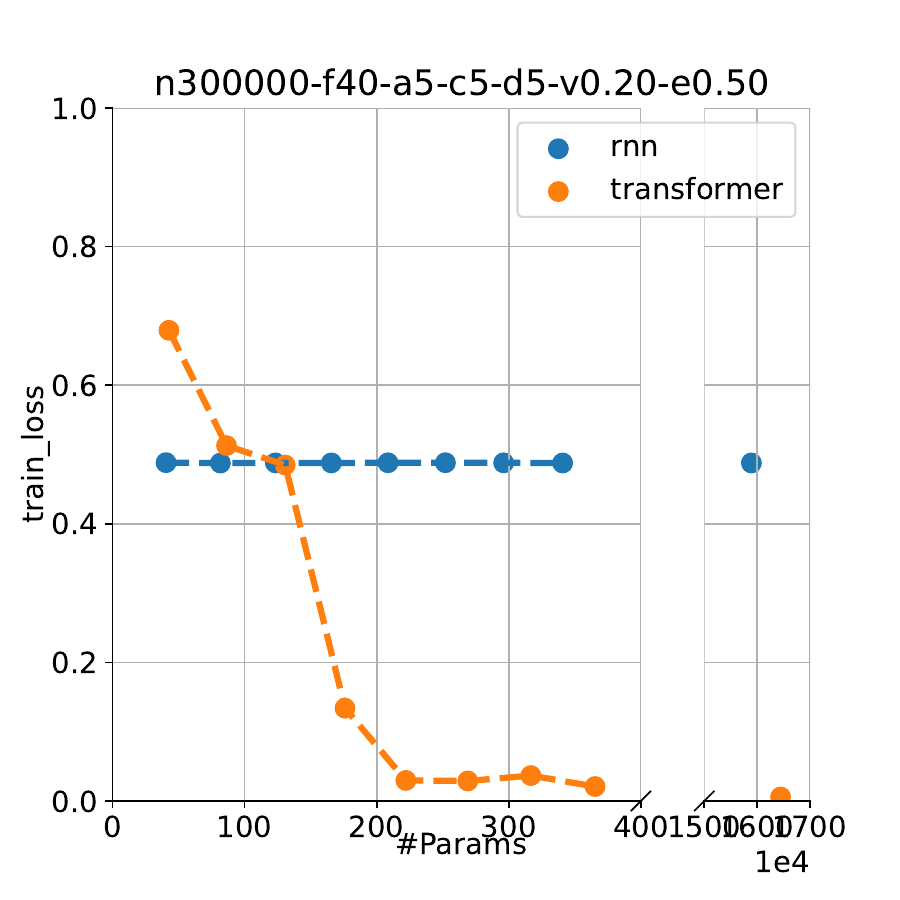}
  \includegraphics[width=0.32\linewidth]{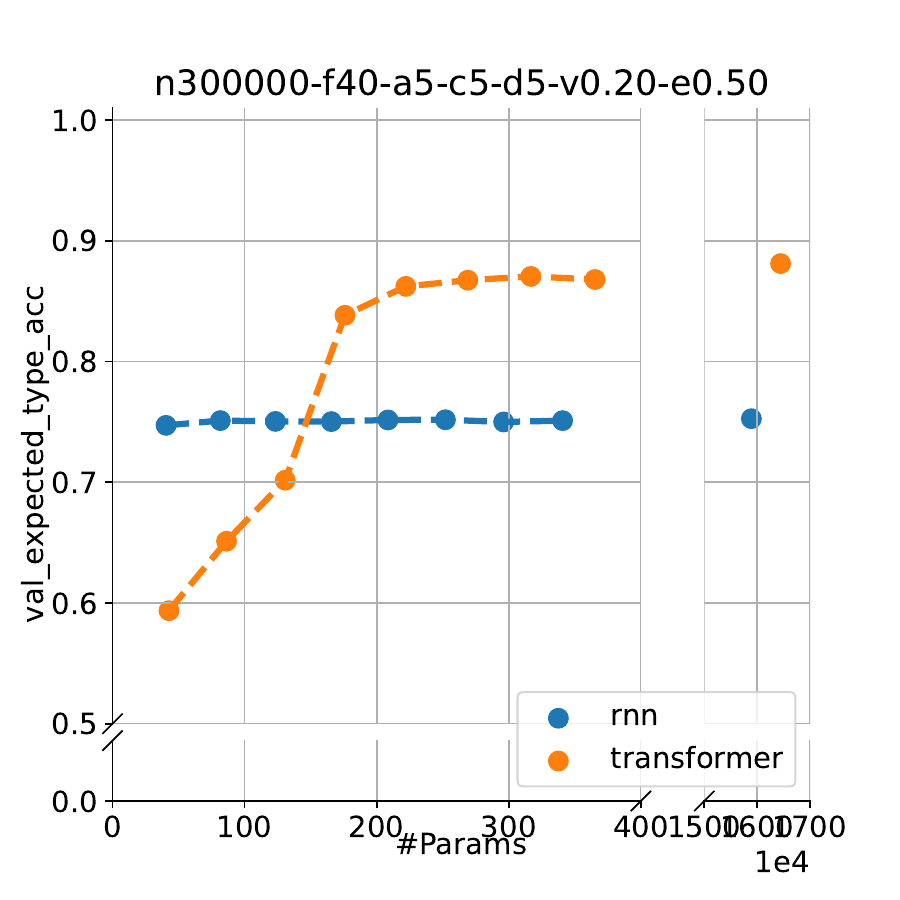}
  \includegraphics[width=0.32\linewidth]{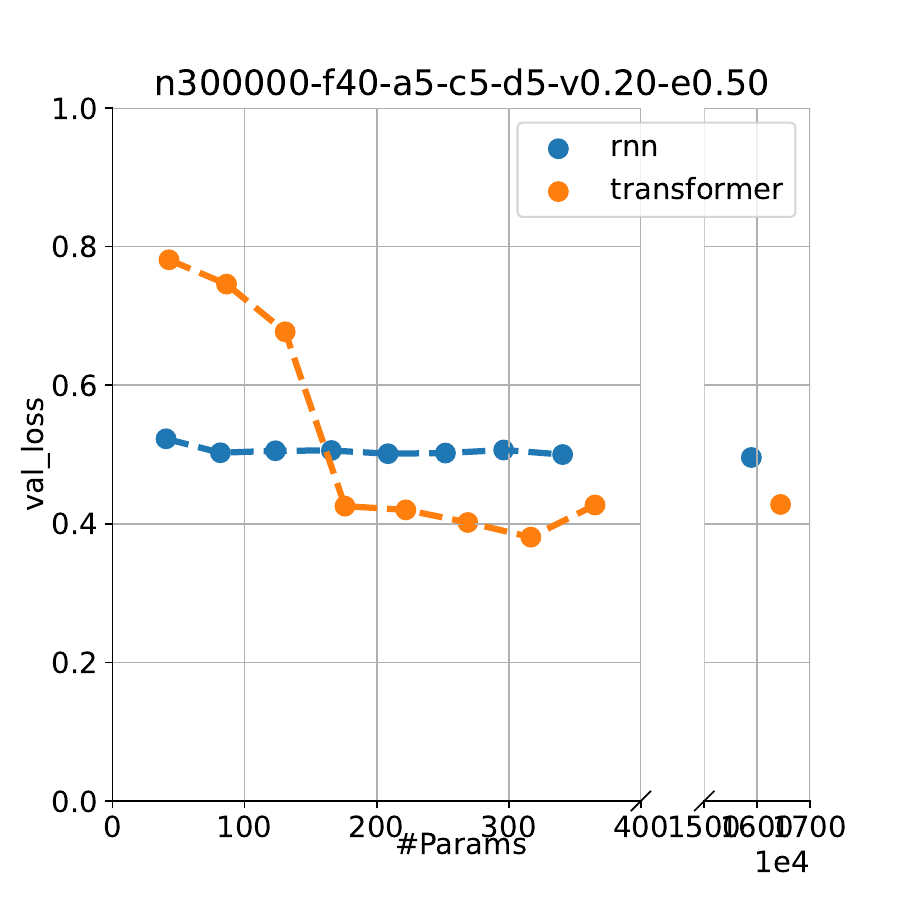}
  \caption{Figures for the dataset with $(f, \rebut{a, c,}\ d, v, e) = (40, \rebut{5, 5,}\ 5, 0.2, 0.5)$.}
  \label{fig:f40}
\end{figure}

\begin{figure}
  \centering
  \includegraphics[width=0.32\linewidth]{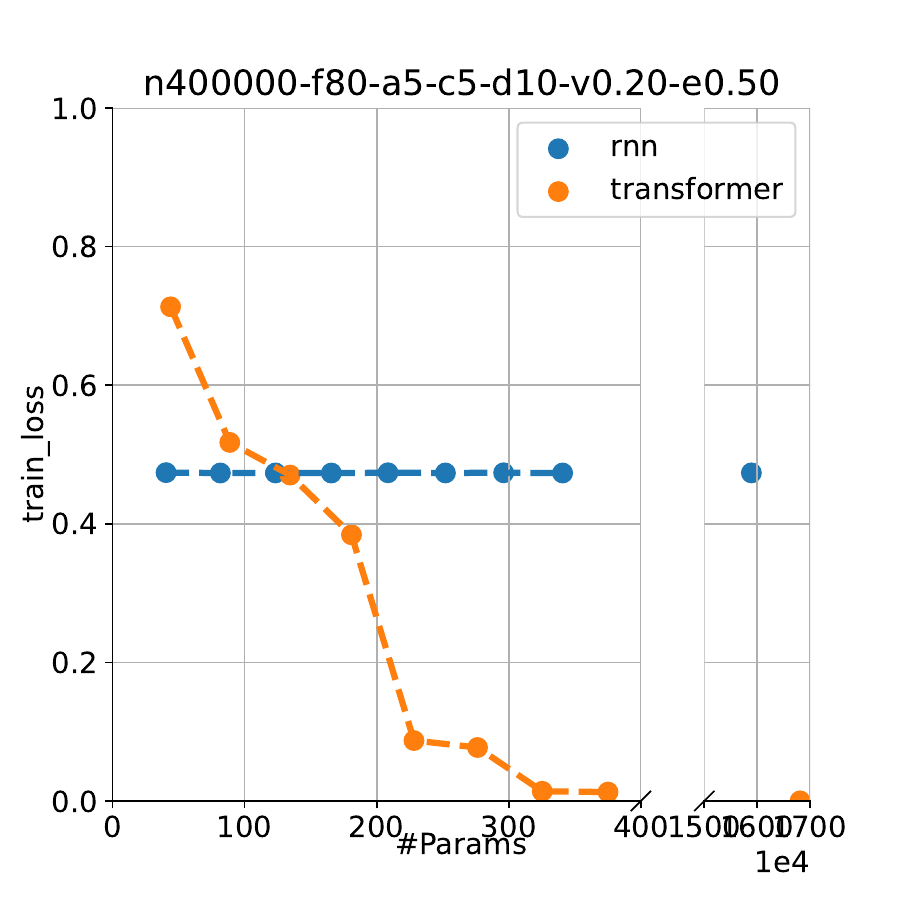}
  \includegraphics[width=0.32\linewidth]{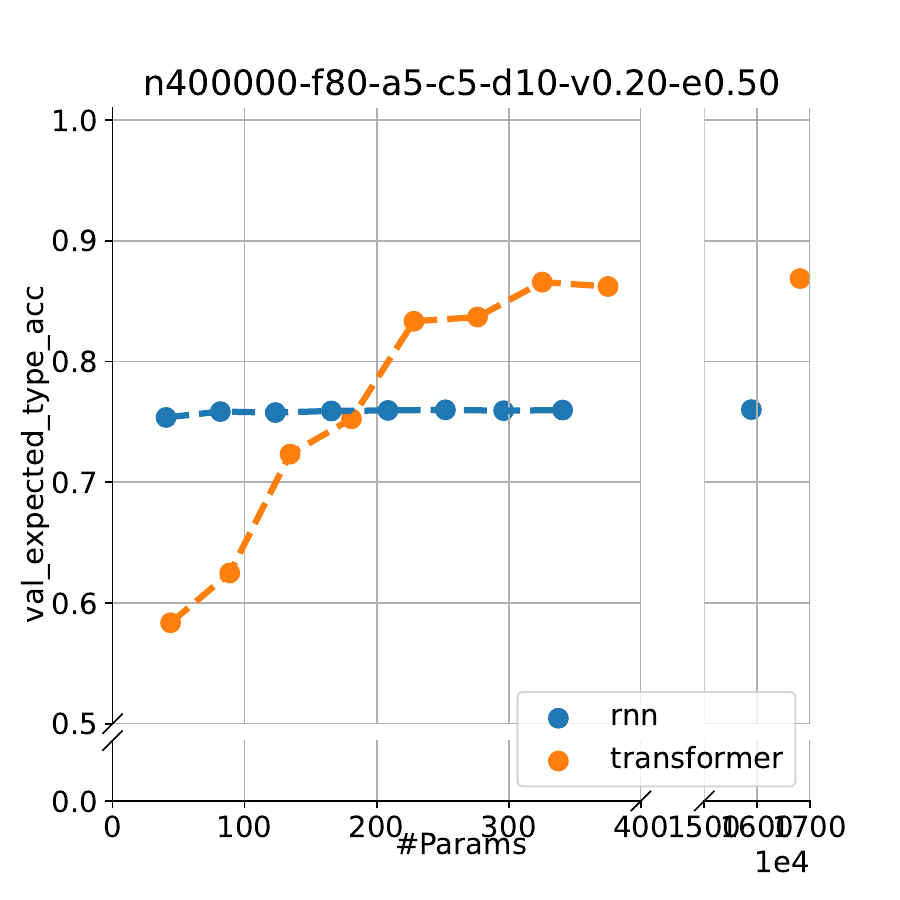}
  \includegraphics[width=0.32\linewidth]{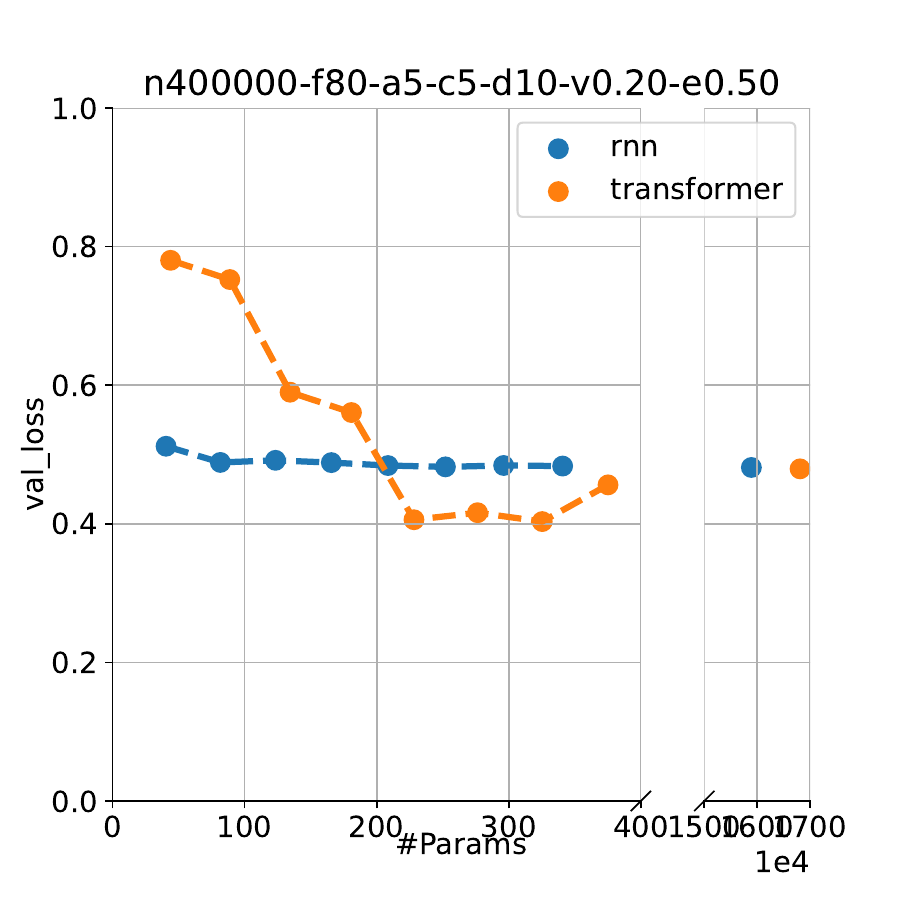}
  \caption{Figures for the dataset with $(f, \rebut{a, c,}\ d, v, e) = (80, \rebut{5, 5,}\ 10, 0.2, 0.5)$.}
  \label{fig:f80}
\end{figure}

\end{document}